\newcommand{\Z}{\mathbb{Z}}
\newcommand{\ux}{\underline{x}}
\newcommand{\uy}{\underline{y}}
\newcommand{\uz}{\underline{z}}
\newcommand{\mS}{\mathbb{S}}
\newcommand{\uv}{\underline{v}}
\newcommand{\uu}{\underline{u}}
\newcommand{\uU}{\underline{U}}
\newcommand{\up}{\underline{\partial}}
\newcommand{\Alg}{\operatorname{Alg}}
\definecolor{shadecolor}{gray}{0.9}
\newcommand{\N}{\mathbb{N}}
\newcommand{\spp}{\operatorname{\mathfrak{sp}}} 
\newcommand{\Sp}{\operatorname{\mathsf{Sp}}}              
\newcommand{\Aut}{\operatorname{\mathsf{Aut}}} 
\newcommand{\M}{\operatorname{\mathsf{M}}} 
\newcommand{\R}{\mathbb{R}}
\newcommand{\Spin}{\operatorname{\mathsf{Spin}}}     
\newcommand{\C}{\mathbb{C}}
\newcommand{\vl}{\mathbb{V}_{\lambda}}
\newcommand{\V}{\mathbb{V}}
\newlist{mylist}{enumerate}{2}% a new list named: mylist, of type: enumerate, with number of levels: 2
\renewcommand\labelenumi{\normalfont(\roman{enumi})}
\renewcommand\theenumi\labelenumi
\renewcommand{\phi}{\varphi}
\newtheorem*{theorem*}{Theorem}
\newtheorem*{stelling*}{Stelling}
\newtheoremstyle{mystyle}% name
{6pt}%Space above
{6pt}%Space below
{\normalfont}%Body font
{0pt}%Indent amount
{\scshape\bfseries}% theorem head font
{.}%Punctuation after theorem head
{4pt}%Space after theorem head 2
{}%theorem head spec (can be left empty, meaning ‘normal’)
\theoremstyle{mystyle}
\newtheorem{definition}{Definition}[section]
\newtheorem*{definition*}{Definition}
\newtheorem{remark}[definition]{Remark}
\newtheorem{example}[definition]{Example}
\renewcommand{\subset}{\subseteq}
\newtheoremstyle{mystyle3}% name
{6pt}%Space above
{6pt}%Space below
{\sffamily}%Body font
{0pt}%Indent amount
{\scshape\bfseries}% theorem head font
{.}%Punctuation after theorem head
{4pt}%Space after theorem head 2
{}%theorem head spec (can be left empty, meaning ‘normal’)
\theoremstyle{mystyle3}
\newtheoremstyle{mystyle2}% name
{6pt}%Space above
{6pt}%Space below
{\itshape}%Body font
{0pt}%Indent amount
{\scshape\bfseries}% theorem head font
{.}%Punctuation after theorem head
{4pt}%Space after theorem head 2
{}%theorem head spec (can be left empty, meaning ‘normal’)
\theoremstyle{mystyle2}
\newtheorem{theorem}[definition]{Theorem}
\newtheorem{prop}[definition]{Proposition}
\newtheorem{lemma}[definition]{Lemma}
\newtheorem{corollary}[definition]{Corollary}
\definecolor{headtitle}{RGB}{0,0,0} %(red)
\colorlet{captionhead}{headtitle}
\def\Hrulefill{\leavevmode\leaders\hrule width 1ex height 0.7ex depth -0.6ex\hfill\kern0pt}
\begin{document}
\title[Models for irreducible representations of the symplectic algebra]{Models for irreducible representations of the symplectic algebra using Dirac-type operators}

%    Remove any unused author tags.

%    author one information
\author{Guner Muarem}
\address{Campus Middelheim 
	\\Middelheimlaan 1
	\\M.G.221 
	\\2020 Antwerpen 
	\\	Belgi\"e}
\curraddr{}
\email{guner.muarem@uantwerpen.be\\ gmuarem@proximus.be}
\thanks{}

%    author two information

\keywords{Clifford analysis, symplectic Dirac operators, representation theory, higher spin, completely pointed modules, parastatistics}

\date{27 September 2023}
\dedicatory{}
\maketitle

\begin{abstract}
\noindent In this paper we will study both the finite and infinite-dimensional representations of the symplectic Lie algebra $\mathfrak{sp}(2n)$ and develop a polynomial model for these representations. This means that we will associate  a certain space of homogeneous polynomials in a matrix variable, intersected with the kernel of $\mathfrak{sp}(2n)$-invariant differential operators related to the symplectic Dirac operator with every irreducible representation of $\mathfrak{sp}(2n)$.
We will show that the systems of symplectic Dirac operators can be seen as generators of parafermion algebras.
As an application of these new models, we construct a symplectic analogue of the Rarita-Schwinger operator using the theory of transvector algebras.
\end{abstract}
\maketitle
\section{Introduction}
\noindent 
In the paper by Van Lancker, Sommen \& Constales\ \cite{MR2106725} the authors constructed polynomial models for the orthogonal algebra $\mathfrak{so}(n)$. More specifically, they associated with every $\mathfrak{so}(n)$-irreducible representation $\vl$, a highest weight (vector) which was realised in terms of homogeneous polynomials in a matrix variable. 
%Note that this is closely related to the notion of \textit{simplicial harmonics} and \textit{simplicial monogenics}, which are in fact the matrix variables analogues of harmonics and monogenics (see Definition \ref{simplicial} and \ref{simplicialII}).
%We will review the basic construction and properties in Section 3.1 and 3.2. 
Of course, this idea can be generalised for any Lie algebra $\mathfrak{g}$ and gives rise to the following situation:
Given an irreducible $\mathfrak{g}$-representation of highest weight $( \lambda _{1} ,\dotsc ,\lambda _{N})_{\mathfrak{g}}$ we want to associate a polynomial model in terms of multi-homogeneous polynomials of degree $(\lambda_1,\dots,\lambda_N)$ intersected with the kernel of some $\mathfrak{g}$-invariant differential operators $G_j$ for $j\in J$, where $J$ is a finite index set. Schematically, we are looking for the following association:
\begin{align*}
    \begin{matrix}
        \text{Irreducible $\mathfrak{g}$-representation}\\ (\lambda_1,\dots,\lambda_N)_{\mathfrak{g}}
    \end{matrix} \quad\longleftrightarrow\quad \begin{matrix}
        \text{Polynomial model}\\ 
        \mathcal{P}_{\lambda_1,\dots,\lambda_N}(\R^{2n\times N}\cap \ker(G_j)
    \end{matrix} 
\end{align*}
Note that the case of $\mathfrak{g}=\mathfrak{so}(n)$ was completely covered by the aforementioned paper. The setting for these polynomial models is the one of Clifford analysis: a hypercomplex function theory which revolves around the Dirac operator on $\R^n$. This operator acts on functions with values in the finite-dimensional spinor space $\mathbb{S}$ and is given by the expression $\up_x=\sum_{j=1}^n e_j\partial_{x_j}$, where $e_j$ are the so-called Clifford algebra generators satisfying $\{e_j,e_k\}=-2\delta_{jk}$.  Note that this means that the elements of the Clifford algebra are anti-commuting complex units. We refer the reader to \cite{delanghe2012clifford} for a thorough study of Clifford analysis.
The key property of the Dirac operator is that it factorises the Laplacian $\Delta=\sum_{j=1}^n\partial_{x_j}^2$ on $\R^{n}$ in the sense that $\up_x^2=-\Delta$. This leads to the observation that Clifford analysis is also a \textit{refinement} of harmonic analysis. It is a well-established fact in representation theory that the spaces $\mathcal{H}_k(\R^n,\C)$ of harmonic polynomials, which are homogeneous of degree $k$, form irreducible representations of the Lie algebra $\mathfrak{so}(n)$ of highest weight $(k,0,\dots,0)_{\mathfrak{so}(n)}$. The spinor representations (for $n$ even) of highest weight $(k+\frac{1}{2},\frac{1}{2},\dots,\pm \frac{1}{2})_{\mathfrak{so}(n)}$ can be modelled by the spaces $\mathcal{M}_k(\R^n,\mathbb{S})$ of $k$-homogeneous solutions of the Dirac operator known as \textit{monogenics}.
When working with (spinor-valued) functions depending on matrix variables one needs to work with systems of Dirac and Laplace operators (and associated operators, see later) which are related to the theory of Howe dualities \cite{howe1989remarks}. In \cite{MR2106725} (and already started in \cite{gilbert1991clifford}) this idea was used to associate with weights of the form $(\lambda_1,\dots,\lambda_N)_{\mathfrak{so}(n)}$ with dominant weight condition $\lambda_1\geq \cdots \lambda_N$ certain spaces of simplicial harmonics and monogenics (we come back to this in Section 2), related to these aforementioned systems of differential operators (see also \cite{colombo2012analysis}). The goal of this paper is to focus on the case where $\mathfrak{g}=\mathfrak{sp}(2n)$ in the setting of symplectic Clifford analysis, see \cite{de2017basic}.
The notion of the Dirac operator naturally generalises to the symplectic framework: the symplectic Dirac operator $D_s$ is now obtained as a contraction between the Weyl algebra generators $\{z_1,\dots,z_n,\partial_{z_1},\dots,\partial_{z_n}\}$ and the vector fields $\{\partial_{x_1},\dots,\partial_{y_n}\}$ using the canonical symplectic form $\omega_0$ on $\R^{2n}$. In other words, we can write $D_s=\langle \uz,\up_y\rangle -\langle \up_x,\up_z\rangle$, where $\langle \cdot,\cdot\rangle$ is the usual Euclidean inner product on $\R^n$. Note that this definition of the symplectic Dirac operator is slightly different than the one of \cite{de2017basic} but was already used in \cite{eelbode2022orthogonal,eelbode2023branching}.
We first of all stress that obtaining polynomial models for $\mathfrak{g}=\mathfrak{sp}(2n)$ will be slightly more intricate as the symplectic spinor space $\mS^{\infty}$ is infinite-dimensional. 
Therefore, we will work in two steps:
\begin{enumerate}
	\item We will first consider the finite-dimensional representations. The $\mathfrak{sp}(2n)$-invariant differential operators $G_j$ (from the scheme above) then will lead to the notion of \textit{symplectic simplicial harmonics} (see Section \ref{sec:simplicial_symp_harmonics}). In some sense, these results are anticipated by the general theory of Howe dualities (see \cite{howe1989remarks}).
	\item Afterwards, we will focus on an interesting class of infinite-dimensional representations of highest weight $(\lambda_1,\dots,\lambda_N)_{\mathfrak{sp}(2n)}\boxtimes\mathbb{S}^{\infty},$ where $\boxtimes$ denotes the Cartan product and $$\mS^{\infty}=\left(-\frac{1}{2},\dots,-\frac{1}{2}\right)_{\mathfrak{sp}(2n)}\oplus \left(-\frac{1}{2},\dots,-\frac{3}{2}\right)_{\mathfrak{sp}(2n)}.$$ These types of representations are connected with the theory of the symplectic Dirac operator. As a consequence, this will lead to the novel notion of \textit{symplectic simplicial monogenics} (see Section \ref{sec:simplicial_symp_monogenics}) and a connection with parastatistics.
\end{enumerate}
As an application, we will define `higher spin' Dirac operators (the Rarita-Schwinger operator describing particles with spin-$\frac{3}{2}$ is an example of such an operator) in the symplectic framework. We will call these \textit{higher metaplectic Dirac operators}. By means of example, we will introduce the symplectic Rarita-Schwinger operator in Section \ref{symplecticRS}.
\section{Simplicial harmonics and monogenics}
\subsection{The harmonic Fischer decomposition}
Consider the regular action of the Lie group $\mathsf{SO}(n)$ on the space of polynomials $\mathcal{P}(\R^n,\C)$:
\[ H : \mathsf{SO}(n) \to  \Aut(\mathcal{P}(\R^n,\C)), \quad g\mapsto\left( H(g)[P](\ux):= P(g^{-1}\ux)\right). \]
On the level of the Lie algebra $\mathfrak{so}(n)$, the so-called \textit{derived action} $dH$ on the polynomial space $\mathcal{P}(\R^n,\C)$ is given by the \textit{angular momenta operators} (known from quantum mechanics) which are defined by means of $L_{ab}=x_a\partial_{x_b}-x_b\partial_{x_a}$ for $1\leq a<b\leq n$.
In order to decompose the space of polynomials $\mathcal{P}(\R^n,\C)$ into irreducible representations for the orthogonal Lie algebra, we first note that the space can be written as a direct sum of $k$-homogeneous polynomials, i.e. $\mathcal{P}(\R^n,\C)=\bigoplus_{k=0}^{\infty}\mathcal{P}_k(\R^n,\C).$
The space of $k$-homogeneous polynomials $\mathcal{P}_k(\R^n,\C)$ further decomposes under the action of $\mathsf{SO}(n)$ in terms of solutions of the Laplacian on $\R^n$. 
The space of \textit{$k$-homogeneous harmonics} is denoted by $$\mathcal{H}_k(\R^n,\C):=\mathcal{P}_k(\R^n,\C)\cap \ker(\Delta).$$ 
\noindent Each of the spaces $\mathcal{H}_k(\R^n,\C)$ (for $n\geq 3)$ forms an irreducible $\mathfrak{so}(n)$-representation of highest weight $(k)_{\mathfrak{so}(n)}$ (see \cite{gilbert1991clifford}).
%By means of illustration we will check this for the case of $n=4$, then the angular momenta are given by the operators:
%\begin{align*}
%L_{12} = x_1\partial_{x_2} - x_2\partial_{x_1} &&L_{13} = x_1\partial_{x_3} - x_3\partial_{x_1}\\
%L_{23} = x_2\partial_{x_3} - x_3\partial_{x_2}&&L_{14} = x_1\partial_{x_4} - x_4\partial_{x_1}\\
%L_{24} = x_2\partial_{x_4} - x_4\partial_{x_2}&&L_{34} = x_3\partial_{x_4} - x_4\partial_{x_3}
%\end{align*}
%The highest weight vector is indeed a solution of the Laplace operator:
%	\[ \Delta (x_1-ix_2)^k=(k-1) k \left(x_1-i x_2\right){}^{k-2}-(k-1) k \left(x_1-i x_2\right){}^{k-2}=0. \]
%The actions of the positive roots on the highest weight vector are given by 
%\begin{align*}
%L_{12}w_k&= -i k \left(x_1-i x_2\right)^k\\
%L_{13}w_k &= -k x_3 \left(x_1-i x_2\right)^{k-1}\\
%L_{23}w_k &= i k x_3 \left(x_1-i x_2\right)^{k-1}\\
%L_{14}w_k &= -k x_4 \left(x_1-i x_2\right)^{k-1}\\
%L_{24}w_k &=i k x_4 \left(x_1-i x_2\right)^{k-1}\\
%L_{34}w_k &=0
%\end{align*}
%Moreover, we have $iL_{12} w= kw_k$ and $L_{34}w_k=0$ giving rise to the eigenvalue  $(k,0)$. Moreover, we have \[ (L_{23}+iL_{13})=(L_{24}+iL_{14})=0 \] and \[ (L_{23}-iL_{13})\neq (L_{24}-iL_{14})\neq 0.\] This indeed shows that $w_k$ is the corresponding highest weight vector. 
Using these $\mathfrak{so}(n)$-irreducible modules $\mathcal{H}_k(\R^n,\C)$ we can further decompose the space of $k$-homogeneous polynomials (see \cite{fischer1918differentiationsprozesse})
	\begin{align}\label{Fischerd}
 \mathcal{P}_k(\R^n,\C)=\bigoplus_{j=0}^{\lfloor\frac{k}{2}\rfloor}|\ux|^{2j}\mathcal{H}_{k-2j}(\R^n,\C). 
 \end{align}
The decomposition from above is not \textit{multiplicity-free}: every irreducible representation $\mathcal{H}_j(\R^n,\C)$ appears infinitely many times on each row. In order to make this decomposition multiplicity-free, one can use a \textit{hidden} symmetry (apart from the obvious $\mathsf{SO}(n)$-invariance). The multiplication operator $|\ux|^2$ allows us the raise the degree by two. On the other hand, the Laplace operator $\Delta$, reduces the degree by two. Lastly, the Euler operator $\mathbb{E}=\sum_{j=1}^n x_j\partial_{x_j}$ measures the degree of homogeneity of the corresponding $\mathcal{H}_k(\R^n,\C)$. This can be visualised as follows:
\[\begin{tikzcd}[every arrow/.append style={shift left}]
0&\arrow{l}{\Delta}\mathcal{H}_k\arrow[r,"\left|\ux\right|^2"] 
&\arrow{l}{\Delta}  \left|\ux\right|^2\mathcal{H}_k \ar[loop,"\mathbb{E}",swap,looseness=2]   \arrow[r,"\left|\ux\right|^2"] 
& \arrow{l}{\Delta} \left|\ux\right|^4\mathcal{H}_k \ar[loop,"\mathbb{E}",swap,looseness=2] \arrow[r,"\left|\ux\right|^2"] 
&  \ar[loop,"\mathbb{E}",swap,looseness=2] \arrow{l}{\Delta} \left|\ux\right|^6\mathcal{H}_k
\ \ \cdots
\end{tikzcd} \]
One can check that when these three operators are re-scaled as \begin{align*}
X := -\frac{1}{2}\Delta,
\qquad	Y: =\frac{1}{2}|\ux|^2,
\qquad	H :=[X,Y]= -\left(\mathbb{E}+\frac{n}{2}\right),
\end{align*}
we have $\Alg(X,Y,H)\cong\mathfrak{sl}(2)$. This algebra is the `hidden symmetry' which was mentioned above. As a result,  the space $\mathcal{P}(\R^n,\C)$ has a multiplicity-free decomposition under the \textit{joint action} of the (harmonic) Howe dual pair $\mathsf{SO}(n)\times \mathfrak{sl}(2)$.  
\subsection{The refined monogenic Fischer decomposition}\label{refinedmonogenicFD}
	Let $\{e_1,\dots,e_n\}$ be an orthonormal basis for $\R^n$ and let $\mathcal{F}(\R^n,\C)$ be a function space (e.g.\ the space of complex-valued polynomials). Then the \textit{Dirac operator} on $\R^n$ is defined as $\up_x=\sum_{j=1}^n e_j\partial_{x_j}$. 
The \textit{multiplication operator} is given by $
	\ux =\sum_{j=1}^n e_jx_j.
$
A solution $f$ of the Dirac operator $\up_x$ is called a \textit{monogenic} function. The space of all monogenic polynomials of degree $k\in\N$ is denoted \begin{align*}
	\mathcal{M}_k(\R^n,\mS) := \mathcal{P}_k(\R^n,\mS) \cap \ker \up_x.
	\end{align*}

We already mentioned that Clifford analysis is a \textit{refinement} of harmonic analysis. This refinement is also reflected in terms of the harmonic Howe duality $\mathsf{SO}(n)\times \mathfrak{sl}(2)$ found in the previous section. The Dirac operator is in a spin-invariant operator, so that the group part of the refined Howe duality is $G=\Spin(n)$. \par The `hidden symmetry' is then determined by the algebra generated by the Dirac operator $\up_x$ and its adjoint $\ux$. It is a well-known fact that is no longer a Lie algebra, but a Lie superalgebra $\mathfrak{osp}(1|2)$. This means that we obtain the \textit{refined} Howe duality $\Spin(n)\times \mathfrak{osp}(1|2)$ which governs the decomposition of the space $\mathcal{P}(\R^n,\C)\otimes \mathbb{S}$ of spinor-valued polynomials. However, we already know that the space of polynomials decomposes into the space of harmonics. This then leads to the following decomposition of the space of spinor-valued harmonics $\mathcal{H}_k(\R^n,\mathbb{S}):=\mathcal{H}_k(\R^n,\C)\otimes \mathbb{S}$:
\begin{theorem}[Monogenic refinement \cite{BDS}] \label{monogenicref}
	Let $H_k(\ux)\in\mathcal{H}_k(\R^n,\mathbb{S})$ be a $k$-homogeneous harmonic polynomial with spinor values. Then one has \[ H_k(\ux) = {M}_{k}(\ux)  + \ux {M}_{k-1}(\ux), \]
	where $M_j(\ux)\in\mathcal{M}_j(\R^n,\mathbb{S})$ (we ignore the parity of the spinors here). 
\end{theorem}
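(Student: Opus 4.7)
The plan is to give a constructive proof by writing down explicit formulas for $M_k$ and $M_{k-1}$ in terms of $H_k$ and then verifying that the resulting polynomials are monogenic. The key algebraic ingredient is the anticommutator identity
\[
\{\up_x,\ux\} \;=\; \up_x\ux+\ux\up_x \;=\; -(2\mathbb{E}+n),
\]
which I would first establish by a direct computation from the Clifford relations $\{e_j,e_k\}=-2\delta_{jk}$ and the product rule. Together with the factorisation $\up_x^2=-\Delta$ and the assumption $\Delta H_k=0$, this identity does essentially all of the work.

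To motivate the explicit formulas, I would start from the desired decomposition $H_k = M_k + \ux M_{k-1}$ and apply $\up_x$ on both sides. Since $M_k$ is required to be monogenic the first term disappears, while the anticommutator rewrites the second as
\[
\up_x\ux M_{k-1} \;=\; -(2\mathbb{E}+n)M_{k-1}-\ux\up_x M_{k-1} \;=\; -(2k+n-2)M_{k-1},
\]
where I have used that $M_{k-1}$ is assumed monogenic and $(k-1)$-homogeneous. This forces the definitions
\[
M_{k-1} \;:=\; -\frac{1}{2k+n-2}\,\up_x H_k, \qquad M_k \;:=\; H_k-\ux M_{k-1},
\]
which I would take as the starting point of the actual proof.

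The remaining task is to verify that these candidates are genuinely monogenic. For $M_{k-1}$ this is immediate, since $\up_x M_{k-1}$ is proportional to $\up_x^2 H_k = -\Delta H_k = 0$. For $M_k$ I would apply $\up_x$ once more, using the same anticommutator trick on the $(k-1)$-homogeneous element $\up_x H_k$; the computation collapses to zero, again because $\up_x^2 H_k = 0$. The homogeneity degrees of the two pieces are clear by construction, and a byproduct of the calculation is uniqueness, since any such splitting must produce the same $M_{k-1}$ after applying $\up_x$.

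The main subtlety is really just bookkeeping: one has to track the correct normalisation constant $-(2k+n-2)^{-1}$ and check that it is well-defined, i.e.\ that $2k+n-2 \neq 0$. In the range $n\geq 3$ used throughout this section this never fails, so no case distinction is required. No deeper structural argument seems necessary beyond the anticommutator identity and the assumption of harmonicity of $H_k$.
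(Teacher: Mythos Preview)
Your argument is correct and is in fact the standard constructive proof of this decomposition. Note, however, that the paper does not supply its own proof of this theorem: it is quoted as a known result from \cite{BDS}, so there is no in-paper argument to compare against. The anticommutator identity you use, together with $\up_x^2=-\Delta$, is exactly the mechanism underlying the $\mathfrak{osp}(1|2)$ structure the paper invokes informally around this theorem, so your explicit computation is fully in line with the surrounding discussion.
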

\noindent The definition of $k$-homogeneous monogenics can now be given an algebraic interpretation. The refinement theorem above can be formulated, purely in terms of $\mathfrak{so}(n)$ weights, as the tensor product (for $n$ even):
\[ \left(k\right)_{\mathfrak{so}(n)} \otimes \left(\frac{1}{2},\dots,\frac{1}{2},\pm\frac{1}{2}\right)_{\mathfrak{so}(n)}. \]
The tensor product rules for $\mathfrak{so}(n)$ (see for example \cite{frappat2000dictionary}) allow us to decompose this tensor product (up to isomorphism) as follows:
\begin{align*}
\left(k\right)_{\mathfrak{so}(n)} \otimes  \left(\frac{1}{2},\dots,\pm \frac{1}{2}\right)_{\mathfrak{so}(n)}\cong \left(k+\frac{1}{2},\dots,\pm \frac{1}{2}\right)_{\mathfrak{so}(n)}\oplus \left(k-\frac{1}{2},\dots,\mp \frac{1}{2}\right)_{\mathfrak{so}(n)}.
\end{align*}
In order to make the isomorphims  from above into an equality, we need the so-called \textit{embedding factors}. These are $\mathfrak{so}(n)$-invariant differential operators. The first component in the tensor product decomposition is trivially embedded using the identity operator, whereas the second component is embedded by the operator $\ux$. This now leads to the monogenic refinement of the Fischer decomposition as described in Theorem \ref{monogenicref}. In other words, $\mathcal{M}_k(\R^n,\mathbb{S})$ is exactly the Cartan product (for $n$ even) \[ \left(k\right)_{\mathfrak{so}(n)} \boxtimes \left(\frac{1}{2},\dots,\frac{1}{2},\pm\frac{1}{2}\right)_{\mathfrak{so}(n)}=\left(k+\frac{1}{2},\dots,\frac{1}{2},\pm\frac{1}{2}\right)_{\mathfrak{so}(n)}=(k)'_{\mathfrak{so}(n)}. \]
\par  We now obtain the full Fischer decomposition of the space of spinor-valued polynomials:
\begin{theorem}[Complete monogenic FD]\label{CompleteFD}
	Under the joint action of $\mathsf{Spin}(n)\times \mathfrak{osp}(1|2)$, the space of spinor valued polynomials decomposes as \[ \mathcal{P}(\R^n,\mathbb{S}) = \bigoplus_{k=0}^{\infty}\bigoplus_{j=0}^k\ux^j\mathcal{M}_{k-j}(\R^n,\mathbb{S}). \]
\end{theorem}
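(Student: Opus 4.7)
The plan is to bootstrap the statement from the two ingredients already in hand: the harmonic Fischer decomposition \eqref{Fischerd} (tensored with $\mathbb{S}$) and the monogenic refinement of \autoref{monogenicref}. Concretely, I first split
\begin{align*}
\mathcal{P}_k(\R^n,\mathbb{S})=\bigoplus_{j=0}^{\lfloor k/2\rfloor}|\ux|^{2j}\,\mathcal{H}_{k-2j}(\R^n,\mathbb{S}),
\end{align*}
which is the harmonic Fischer decomposition with coefficients in the (finite-dimensional) spinor module $\mathbb{S}$. Then I apply \autoref{monogenicref} to each harmonic summand to get $\mathcal{H}_{k-2j}(\R^n,\mathbb{S})=\mathcal{M}_{k-2j}(\R^n,\mathbb{S})\oplus \ux\,\mathcal{M}_{k-2j-1}(\R^n,\mathbb{S})$. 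The key algebraic fact that allows me to collapse even and odd powers of $\ux$ into a single sum is the Clifford identity $\ux^{2}=-|\ux|^{2}$. This turns $|\ux|^{2j}$ into $(-1)^{j}\ux^{2j}$, and combining the two decompositions gives
\begin{align*}
\mathcal{P}_k(\R^n,\mathbb{S})=\bigoplus_{j=0}^{\lfloor k/2\rfloor}\Bigl(\ux^{2j}\mathcal{M}_{k-2j}(\R^n,\mathbb{S})\oplus \ux^{2j+1}\mathcal{M}_{k-2j-1}(\R^n,\mathbb{S})\Bigr)=\bigoplus_{j=0}^{k}\ux^{j}\mathcal{M}_{k-j}(\R^n,\mathbb{S}).
\end{align*}
Summing over $k\in\N$ then produces the direct sum in the statement.

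What still needs to be justified is that the sum is \emph{direct} and that it is the isotypical decomposition under the joint action of $\Spin(n)\times\mathfrak{osp}(1|2)$. For directness I would argue by degree and by the injectivity of the multiplication map $\ux\colon \mathcal{M}_{\ell}(\R^n,\mathbb{S})\to \mathcal{P}_{\ell+1}(\R^n,\mathbb{S})$: if $M+\ux M'=0$ with $M\in\mathcal{M}_k$ and $M'\in\mathcal{M}_{k-1}$, applying $\up_x$ (which annihilates both $M$ and $M'$ up to a scalar multiple of $M'$, by the well-known relations $\{\up_x,\ux\}=-2\mE-n$) forces $M'=0$, whence $M=0$. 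An inductive variant of this argument, combined with the already-established harmonic Fischer decomposition, gives the directness of the full sum.

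For the Howe-dual interpretation, the hidden symmetry is generated by the odd operators $\ux$ and $\up_x$, whose anticommutators give $\ux^{2}=-|\ux|^{2}$, $\up_x^{2}=-\Delta$, and $\{\ux,\up_x\}=-(2\mE+n)$; these five elements span a copy of $\mathfrak{osp}(1|2)$ which commutes with the $\Spin(n)$-action. Each chain $\bigl\{\ux^{j}\mathcal{M}_{k-j}(\R^n,\mathbb{S})\bigr\}_{j=0}^{k}$ is then a lowest-weight $\mathfrak{osp}(1|2)$-module generated from $\mathcal{M}_{k}(\R^n,\mathbb{S})$ by the raising operator $\ux$, with the lowering operator $\up_x$ acting as prescribed by the Weyl-algebra type relations above; as a representation of $\Spin(n)\times \mathfrak{osp}(1|2)$ it is irreducible, with $\Spin(n)$-multiplicity space $\mathcal{M}_{k-j}(\R^n,\mathbb{S})$ read off at each level.

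The main obstacle I anticipate is not the combinatorial bookkeeping, which is clean once $\ux^{2}=-|\ux|^{2}$ is exploited, but rather the careful verification that $\up_x$ acting on $\ux^{j}\mathcal{M}_{k-j}$ produces a nonzero multiple of $\ux^{j-1}\mathcal{M}_{k-j}$ for every $j\geq 1$, so that the $\mathfrak{osp}(1|2)$-chains are genuinely irreducible (and in particular have no hidden invariant subspaces that would spoil the Howe-duality interpretation). This follows from a direct computation of $\up_x\,\ux^{j}$ using the anticommutator relation recursively, together with the fact that monogenics are characterised by $\up_x$-annihilation, but it is the step where the dimension counting and the action of the Euler operator $\mE$ have to be tracked precisely.
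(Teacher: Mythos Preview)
The paper does not actually prove this theorem; it is quoted as a classical result from the literature (the citation \cite{BDS}). So there is no ``paper's own proof'' to compare against, and your task is really to give a clean self-contained argument.

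Your combinatorial derivation of the direct-sum decomposition is the standard one and is correct: tensor the harmonic Fischer decomposition with $\mathbb{S}$, refine each $\mathcal{H}_{k-2j}(\R^n,\mathbb{S})$ via \autoref{monogenicref}, and use $\ux^{2}=-|\ux|^{2}$ to merge even and odd powers. The directness argument via $\up_x$ and the anticommutator $\{\up_x,\ux\}=-(2\mE+n)$ is also fine.

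However, your description of the $\mathfrak{osp}(1|2)$-module structure is wrong, and this is precisely the part of the statement that carries the ``Howe dual'' content. You write that ``each chain $\{\ux^{j}\mathcal{M}_{k-j}(\R^n,\mathbb{S})\}_{j=0}^{k}$ is a lowest-weight $\mathfrak{osp}(1|2)$-module generated from $\mathcal{M}_{k}$ by the raising operator $\ux$''. This cannot be right: applying $\ux$ to $\mathcal{M}_{k}$ gives $\ux\,\mathcal{M}_{k}$, which sits in degree $k+1$, not $\ux\,\mathcal{M}_{k-1}$ in degree $k$. The set $\{\ux^{j}\mathcal{M}_{k-j}\}_{j=0}^{k}$ is simply the degree-$k$ slice of the full decomposition and is \emph{not} stable under $\ux$ or $\up_x$ at all. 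The genuine $\mathfrak{osp}(1|2)$-modules are the infinite ladders
\[
\bigoplus_{j\geq 0}\ux^{j}\mathcal{M}_{\ell}(\R^n,\mathbb{S})
\]
for each fixed $\ell$, on which $\ux$ raises $j$ and $\up_x$ lowers it; the $\Spin(n)$-action is along $\mathcal{M}_{\ell}$. The double sum in the theorem is obtained from this by the reindexing $k=\ell+j$, so that the inner sum $\bigoplus_{j=0}^{k}$ is a \emph{diagonal} cut across the $(\ell,j)$-grid, not an $\mathfrak{osp}(1|2)$-orbit. Once you correct this, your remark about checking that $\up_x\,\ux^{j}$ acts nontrivially on monogenics becomes the right verification that each ladder is an irreducible lowest-weight $\mathfrak{osp}(1|2)$-module, and the multiplicity-freeness under $\Spin(n)\times\mathfrak{osp}(1|2)$ follows because distinct $\ell$ give inequivalent $\Spin(n)$-types $\mathcal{M}_{\ell}$.
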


\subsection{Simplicial harmonic and monogenics}

In the previous section we considered polynomials in the vector variable $\ux=(x_1,\dots,x_n)\in\R^n$ and considered the multiplicity-free decomposition governed by the  Howe duality $\mathsf{SO}(n)\times \mathfrak{sl}(2)$. 
We will now work with $N$ vector variables 
$(\uu_1,\dots,\uu_N)\in\R^{n\times N}$. Note that this is also often called a \textit{matrix variable} (see Chapter 3, Section 4 in \cite{gilbert1991clifford} for more details), we denote it by $\uU=(\uu_1,\dots,\uu_N)\in\R^{n\times N}$.
The Howe duality $\mathsf{SO}(n)\times \mathfrak{sl}(2)$ which governs the multiplicity-free decomposition of the space of polynomials $\mathcal{P}(\R^n,\C)$ is only the first example of the general Howe duality $\mathsf{SO}(n)\times \mathfrak{sp}(2N)$ (see e.g.\ \cite{Goodman2009}) associated with the regular action $$H[g][P](\uU):=P(g^{-1}\uu_1,\dots,g^{-1}\uu_N)$$ of $\mathsf{SO}(n)$ on the space of polynomials $\mathcal{P}(\R^{n\times N},\C)$ in a matrix variable.

Note that in the spinor-valued case, the action is naturally extended by the multiplicative action of spinor space. 
The space $\mathcal{P}_{\lambda_1,\dots,\lambda_N}(\R^{n\times N},\C)$ of polynomials which are $\lambda_j$-homogeneous in each variable $\uu_j$ also has a decomposition in terms of $\mathsf{SO}(n)$-irreducible representations under the regular action above. The irreducible modules appearing here, are in fact a generalisation of the spaces $\mathcal{H}_k(\R^n,\C)$ from the previous section and are defined as follows (see \cite{MR2106725}):
\begin{definition}\label{simplicial} Let $N$ be an integer such that $1\leq N\leq \lfloor \frac{n}{2}\rfloor$.
	A function $f:\R^{n\times N}\to \C, (\uu_1,\dots,\uu_N)\mapsto f(\uu_1,\dots,\uu_N)$ is called \textit{simplicial harmonic} if it satisfies the system of equations \begin{align*}
	\langle \up_{u_i},\up_{u_j}\rangle f=0 &\quad \text{for all } 1\leq i,j\leq N\\
	\langle \uu_i,\up_{u_j}\rangle f=0 &\quad \text{for all } 1\leq i<j\leq N.
	\end{align*}
	The vector space of simplicial harmonic polynomials which are homogeneous of degree $\lambda_i$ in the vector variable $\underline{u}_i$ is denoted by $\mathcal{H}_{\lambda_1,\dots,\lambda_N}(\R^{n\times N},\C)$. Note that we must have $\lambda_1\geq \cdots \lambda_N\geq 0$ in order to satisfy the dominant weight condition.
\end{definition}

\begin{definition}\label{simplicialII}
	Let $N$ be an integer such that $1\leq N\leq \lfloor \frac{n}{2}\rfloor$.
	A spinor-valued function $f:\R^{n\times N}\to \mathbb{S}, (\uu_1,\dots,\uu_N)\mapsto f(\uu_1,\dots,\uu_N)$ is called \textit{simplicial monogenic} if it satisfies the system of equations \begin{align*}
	\up_{u_i}f =0&\quad \text{for all } 1\leq i\leq N\\
	\langle \uu_i,\up_{u_j}\rangle f=0 &\quad \text{for all } 1\leq i<j\leq N.
	\end{align*}
	The vector space of simplicial harmonic polynomials which are homogeneous of degree $\lambda_i$ in the vector variable $\underline{u}_i$ is denoted by $\mathcal{S}_{\lambda_1,\dots,\lambda_N}(\R^{n\times N},\mathbb{S})$. 
\end{definition}

\begin{theorem}[Van Lancker, Sommen \& Constales \cite{MR2106725}]\label{vlsctheorem}
	Let $N$ be an integer such that $1\leq N\leq \lfloor \frac{n}{2}\rfloor$ and suppose that the dominant weight condition is satisfied (for the Lie algebra $\mathfrak{so}(n))$. Then we have the following properties:
	\begin{enumerate}
		\item The spaces $\mathcal{H}_{\lambda_1,\dots,\lambda_N}(\R^n,\C)$ are irreducible of highest weight $(\lambda_1,\dots,\lambda_N)_{\mathfrak{so}(n)}$.
		\item The spaces $\mathcal{S}_{\lambda_1,\dots,\lambda_N}(\R^n,\C)$ are irreducible of highest weight $(\lambda_1,\dots,\lambda_N)'_{\mathfrak{so}(n)}$.
	\end{enumerate}
\end{theorem}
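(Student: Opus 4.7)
The plan is to prove both statements simultaneously by recognising the defining operator systems as the highest-weight conditions for suitable Howe dual pairs acting on polynomial (resp.\ polynomial-spinor) space, and then reading off irreducibility and highest weight from the classical multiplicity-free decomposition theorem. For part (i) the relevant pair is the classical one $(\mathsf{SO}(n),\mathfrak{sp}(2N))$ on $\mathcal{P}(\R^{n\times N},\C)$; for part (ii) it is the super version $(\Spin(n),\mathfrak{osp}(1|2N))$ on $\mathcal{P}(\R^{n\times N},\mathbb{S})$. The stable-range hypothesis $N\leq\lfloor n/2\rfloor$ is precisely what makes this Howe duality index the decomposition by $\mathfrak{so}(n)$-dominant weights with at most $N$ nonzero parts.

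For part (i), I would first realise the even Howe generators as: a $\mathfrak{gl}(N)$-subalgebra spanned by $\langle\uu_i,\up_{u_j}\rangle+\frac{n}{2}\delta_{ij}$, together with skew-generators $\langle\uu_i,\uu_j\rangle$ and $\langle\up_{u_i},\up_{u_j}\rangle$. With the natural choice of Borel, the positive root vectors are exactly $\langle\uu_i,\up_{u_j}\rangle$ for $i<j$ and $\langle\up_{u_i},\up_{u_j}\rangle$ for $i\leq j$, which matches Definition \ref{simplicial} verbatim. By Howe duality the joint kernel of these operators at fixed multi-degree $(\lambda_1,\ldots,\lambda_N)$ is an irreducible $\mathsf{SO}(n)$-module whose highest weight, read off from the Cartan eigenvalues, is $(\lambda_1,\ldots,\lambda_N,0,\ldots,0)_{\mathfrak{so}(n)}$. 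As a sanity check I would exhibit an explicit highest weight vector using the null combinations $u_{j,k}=\langle\uu_j,e_{2k-1}+ie_{2k}\rangle$, namely the product of leading minors $\prod_{k=1}^N \det(u_{j,l})_{1\leq j,l\leq k}^{\lambda_k-\lambda_{k+1}}$ with $\lambda_{N+1}:=0$, and verify by inspection that it satisfies all the simplicial equations and has the claimed $\mathfrak{so}(n)$-weight.

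For part (ii), the same strategy applies in the super setting. The odd generators of $\mathfrak{osp}(1|2N)$ are realised on $\mathcal{P}(\R^{n\times N},\mathbb{S})$ as $\up_{u_i}$ and $\uu_i$ (Clifford multiplication); the anticommutator $\{\up_{u_i},\up_{u_j}\}=-2\langle\up_{u_i},\up_{u_j}\rangle$ means that monogenicity in each variable automatically encodes all the Laplacian-type conditions of part (i). With the Borel whose positive odd roots are $\up_{u_i}$ and whose positive even roots are as before, Definition \ref{simplicialII} is again precisely the highest-weight condition. Super Howe duality then yields that $\mathcal{S}_{\lambda_1,\ldots,\lambda_N}$ is an irreducible $\Spin(n)$-module whose highest weight is the Cartan product $(\lambda_1,\ldots,\lambda_N)_{\mathfrak{so}(n)}\boxtimes\mathbb{S}=(\lambda_1,\ldots,\lambda_N)'_{\mathfrak{so}(n)}$, in agreement with the single-variable refinement of Theorem \ref{monogenicref}.

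The main obstacle is verifying in detail that the chosen sets of positive root vectors really match the systems in Definitions \ref{simplicial}--\ref{simplicialII}, which reduces to a commutator computation, together with producing a non-vanishing highest weight vector. The condition $N\leq\lfloor n/2\rfloor$ is essential precisely here: it guarantees that the $N$ null combinations $e_{2k-1}+ie_{2k}$ are linearly independent, so that the candidate vector above is non-zero and one remains in the stable range where Howe duality takes its cleanest form.
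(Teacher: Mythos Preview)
The paper does not actually prove this theorem: it is quoted as a known result from the cited reference \cite{MR2106725} (with related material in \cite{gilbert1991clifford}), and the paper moves on immediately after the statement. So there is no ``paper's own proof'' to compare against.

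Your proposal is a correct and standard route to the result. The identification of the defining systems in Definitions~\ref{simplicial}--\ref{simplicialII} with the positive root vectors of $\mathfrak{sp}(2N)$ (resp.\ $\mathfrak{osp}(1|2N)$) is exactly how one reads off irreducibility and highest weight from the multiplicity-free Howe decomposition, and your explicit highest weight vector built from leading minors in the isotropic coordinates $\langle \uu_j, e_{2k-1}+ie_{2k}\rangle$ is the classical one. This is in the same spirit as what the present paper does later for $\mathfrak{sp}(2n)$ in Theorem~\ref{scalarsymplectic}, where an analogous product-of-minors vector is written down and verified directly against the root vectors. One small caveat: invoking Howe duality to conclude \emph{irreducibility} (rather than just exhibiting a highest weight vector) requires more than the commutator check you describe---you are implicitly appealing to the full multiplicity-free decomposition theorem in the stable range, which is a substantial input (e.g.\ \cite{Goodman2009}). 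As long as you cite that, the argument is complete.
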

In \cite{brackx2011higher,de2010special} these models were proven to be useful to construct so-called higher spin Dirac operators. In the rest of this paper we will focus on the symplectic Lie algebra $\mathfrak{sp}(2n)$ instead of $\mathfrak{so}(n)$. This implies that the symmetric bilinear inner product on $\R^n$ will be replaced by a skew-symmetric one on $\R^{2n}$. More specifically, it has the following $(2n\times 2n)$-block matrix representation $$\Omega_0=\begin{pmatrix}
    0&I_n\\-I_n&0
\end{pmatrix}.$$
Later on, we will also use the notation \begin{align}\label{lrs}
    \langle \uv,\underline{w}\rangle_s:=\uv^T\Omega_0\underline{w}
\end{align}for all $\uv,\underline{w}\in \R^{2n}$.
As a consequence, the finite-dimensional Clifford algebra becomes an infinite-dimensional Weyl algebra and the Dirac operator is replaced by the symplectic Dirac operator. Due to the infinite-dimensionality of the symplectic spinors, this transition is rather intricate.\begin{remark}
    We further mention that there exist generalisations of the Fischer decomposition from equation (\ref{Fischerd}) and the refinement from Theorem \ref{monogenicref} for matrix variables, see \cite{lavivcka2019separation,van2012monogenic}.
\end{remark}

\section{The notion of symplectic simplicial harmonics}\label{sec:simplicial_symp_harmonics}
\noindent So far we discussed finite-dimensional irreducible representations of the Lie algebra $\mathfrak{so}(n)$ and characterised these modules in terms of simplicial harmonics and simplicial monogenics (see Theorem \ref{vlsctheorem}). We will now do exactly the same thing for the symplectic Lie algebra $\mathfrak{sp}(2n)$. We first focus on the finite-dimensional representations of $\mathfrak{sp}(2n)$. Recall that the finite-dimensional irreducible representations of the symplectic algebra are uniquely characterised by their labels satisfying $
\lambda_1\geq \lambda_2\geq \cdots\geq \lambda_n\geq 0$. 
% \begin{notation}
%	Our first goal is to characterise the finite-dimensional representations $(\lambda_1,\lambda_2,\dots,\lambda_N)_{\mathfrak{sp}(2n)}$ in terms polynomials depending on a matrix variable $$\underline{U}=(\uu_1,\dots,\uu_N)\in \R^{2n\times N},$$
%	where each vector variable $\uu_j= (x_{j1},\dots,x_{jn},y_{j1},\dots,y_{jn})\in\R^{2n}$ for $1\leq j\leq n$. However in some low dimensional cases we use the following notations:
%	\begin{itemize}
%		\item If $N=1$, then we will simply write $\ux=(x_1,\dots,x_n,y_1,\dots,y_n)$. 
%		\item If $N=2$, then we will write $\underline{U}=(\ux,\uu)\in\R^{2n\times 2}$ where
%		\begin{align*}
%		\ux&=(x_1,\dots,x_n,y_1,\dots,y_n)\\
%		\uu&=(u_1,\dots,u_n,v_1,\dots,v_n).
%		\end{align*} 
%	\end{itemize}
%\end{notation}
In order to get some intuition with these finite-dimensional $\mathfrak{sp}(2n)$-irreducible representations, we start by calculating their dimension using the Weyl dimension formula (see \cite{MR3136522} for more background).
\begin{prop}
	Let $g_i:=n-i+1$ and $m_i:=\lambda_i+g_i$ for $1\leq i\leq n$ then \[ \dim(\lambda_1,\dots,\lambda_n)_{\mathfrak{sp}(2n)} =\prod_i \left(\frac{m_i}{g_i}\right) \prod_{i<j}\left(\frac{m_i-m_j}{g_i-g_j}\right)\prod_{i<j}\left(\frac{m_i+m_j}{g_i+g_j}\right).\]
\end{prop}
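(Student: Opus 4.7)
The plan is to apply the Weyl dimension formula directly to the root system $C_n$ of $\mathfrak{sp}(2n)$ and to verify that the three products in the statement exactly reproduce the three orbits of positive roots. Recall that the Weyl dimension formula reads
\[
\dim V_\lambda \;=\; \prod_{\alpha\in\Phi^+}\frac{\langle\lambda+\rho,\alpha\rangle}{\langle\rho,\alpha\rangle},
\]
where $\Phi^+$ is any choice of positive roots and $\rho$ is half their sum. So the real content is to check that with the standard choice of $\Phi^+$ for $C_n$ and the evident identification of $m_i$ and $g_i$, the three products line up correctly.

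First I would fix the standard realisation of $C_n$ inside the Euclidean space with basis $\{\varepsilon_1,\dots,\varepsilon_n\}$, for which the positive roots split into three families:
\[
\Phi^+ = \{\,\varepsilon_i-\varepsilon_j : i<j\,\}\;\cup\;\{\,\varepsilon_i+\varepsilon_j : i<j\,\}\;\cup\;\{\,2\varepsilon_i : 1\le i\le n\,\}.
\]
A short computation then gives
\[
\rho \;=\; \tfrac{1}{2}\sum_{\alpha\in\Phi^+}\alpha \;=\; (n,\,n-1,\,\dots,\,1),
\]
so that $\rho_i = n-i+1 = g_i$. Writing $\lambda = (\lambda_1,\dots,\lambda_n)$, the coordinates of $\lambda+\rho$ are precisely $m_i = \lambda_i + g_i$, which matches the notation of the statement.

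Next I would evaluate the three factors separately. For the long roots $\alpha = 2\varepsilon_i$ one has $\langle\lambda+\rho,\alpha\rangle = 2 m_i$ and $\langle\rho,\alpha\rangle = 2 g_i$, so the contribution is $\prod_i m_i/g_i$. For the short roots $\alpha = \varepsilon_i-\varepsilon_j$ with $i<j$ one has $\langle\lambda+\rho,\alpha\rangle = m_i - m_j$ and $\langle\rho,\alpha\rangle = g_i - g_j$, producing $\prod_{i<j}(m_i-m_j)/(g_i-g_j)$. The remaining family $\alpha = \varepsilon_i+\varepsilon_j$ likewise contributes $\prod_{i<j}(m_i+m_j)/(g_i+g_j)$. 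Taking the product of the three families yields the stated formula.

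There is no real obstacle beyond careful bookkeeping; the only thing to be cautious about is the factor of $2$ appearing for the long roots, since both numerator and denominator pick up a $2$ from the length of $2\varepsilon_i$ and these cancel, leaving the clean factor $m_i/g_i$. For completeness one may remark that the formula also shows the dimension is a positive integer under the dominance condition $\lambda_1\ge\cdots\ge\lambda_n\ge 0$, since then $m_1 > m_2 > \cdots > m_n \ge 1$, so every factor is strictly positive.
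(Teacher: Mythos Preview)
Your proof is correct and follows exactly the approach the paper indicates: the paper does not give a detailed argument but simply invokes the Weyl dimension formula (with a reference for background), and your proposal carries out precisely that computation for the root system $C_n$, identifying $\rho_i=g_i$ and matching the three families of positive roots with the three products. There is nothing to add.
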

Using this formula, one can check that the dimension of module $(k)_{\mathfrak{sp}(2n)}$ is equal to \[ \dim (k)_{\mathfrak{sp}(2n)}=\binom{k+2 n-1}{2 n-1} . \]
	One might recognize this number as the dimension of the space of $k$-homgeneous polynomials $\mathcal{P}_{k}(\R^{2n},\C)$. This is no coincidence:
\begin{prop}\label{pkirrep}
	The space of $k$-homogeneous polynomials $\mathcal{P}_{k}(\R^{2n},\C)$ is an irreducible $\mathfrak{sp}(2n)$-module of weight $(k)_{\mathfrak{sp}(2n)}$ and the corresponding highest weight vector is $w_{k}(\ux)=x_1^{k}$.\end{prop}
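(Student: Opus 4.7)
The plan is a classical dimension-matching argument. Differentiating the regular action $(g\cdot P)(v):=P(g^{-1}v)$ of $\Sp(2n)$ on $\mathcal{P}(\R^{2n},\C)$ yields an $\mathfrak{sp}(2n)$-representation by first-order differential operators that preserves the polynomial degree, so $\mathcal{P}_k(\R^{2n},\C)$ becomes a finite-dimensional $\mathfrak{sp}(2n)$-module. As such it is $\mathrm{Sym}^k((\R^{2n})^*)$, which by self-duality of the defining module via $\Omega_0$ coincides with $\mathrm{Sym}^k$ of the standard $(1,0,\ldots,0)_{\mathfrak{sp}(2n)}$-module.

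Fix the Cartan subalgebra $\goh\subset\mathfrak{sp}(2n)$ spanned by the matrices $\mathrm{diag}(h_1,\ldots,h_n,-h_1,\ldots,-h_n)$; its derived action is by the diagonal operators $\sum_i h_i(y_i\partial_{y_i}-x_i\partial_{x_i})$, making $x_1^k$ a weight vector of extremal weight $\pm k\epsilon_1$. The root vectors of $\mathfrak{sp}(2n)$ are realised, up to signs, by the three families of differential operators
\begin{align*}
x_j\partial_{x_i}-y_i\partial_{y_j}\ (i\neq j),\qquad y_i\partial_{x_j}+y_j\partial_{x_i}\ (i\leq j),\qquad x_i\partial_{y_j}+x_j\partial_{y_i}\ (i\leq j),
\end{align*}
associated respectively to the roots $\epsilon_i-\epsilon_j$, $\epsilon_i+\epsilon_j$ and $-\epsilon_i-\epsilon_j$. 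A direct inspection shows that every root vector on one side of the Cartan (the ``negative'' side in one convention, ``positive'' in the opposite one) annihilates $x_1^k$, so that $x_1^k$ is an extremal weight vector of the cyclic $\mathfrak{sp}(2n)$-submodule it generates.

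Because the longest Weyl-group element acts as $w_0=-\mathrm{id}$ on the root lattice of $\mathfrak{sp}(2n)$, this cyclic submodule is in either case the finite-dimensional simple module $(k,0,\ldots,0)_{\mathfrak{sp}(2n)}=(k)_{\mathfrak{sp}(2n)}$. By the preceding proposition its dimension is $\binom{k+2n-1}{2n-1}=\dim\mathcal{P}_k(\R^{2n},\C)$, whence the cyclic submodule fills the entire graded piece and $\mathcal{P}_k(\R^{2n},\C)$ itself is irreducible with extremal vector $x_1^k$. The only delicate point in the argument is matching sign conventions between the explicit differential-operator realisation of the root vectors and the standard labelling of positive roots; the irreducibility then follows immediately from the dimension formula already established.
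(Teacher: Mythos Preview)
Your proposal is correct and follows essentially the same route as the paper: write down the differential-operator realisation of $\mathfrak{sp}(2n)$, verify that $x_1^k$ is an extremal weight vector of weight $(k,0,\dots,0)$, and then invoke the Weyl dimension formula (stated just before the proposition) to conclude irreducibility by a dimension count. If anything you are slightly more explicit than the paper, which leaves the dimension-matching step implicit in the proof body; your care with the sign/convention issue via $w_0=-\mathrm{id}$ is also a reasonable hedge that the paper simply sidesteps by fixing its conventions in the displayed list of generators.
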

\begin{proof}
	The generators of Lie algebra $\mathfrak{sp}(2n)$ can be realised on the space $\mathcal{P}(\R^{2n},\C)$ as the following differential operators:
	\begin{align}\label{sp(2n)proof}
	\begin{cases}
	Y_{jj}=	x_j\partial_{y_j} 						\qquad	&j=1,\dots,n 		\qquad n
	\\ Z_{jj}= y_j\partial_{x_j}					\qquad	&j=1,\dots,n       	\qquad n
	\\ X_{jk}=x_j\partial_{x_k} - y_k\partial_{y_j} 	\qquad	&j,k=1,\dots,n 	\qquad n^2
	\\Y_{jk}= x_j\partial_{y_k}+x_k\partial_{y_j} 	\qquad	&j<k=1,\dots,n \qquad \frac{n(n-1)}{2}
	\\ Z_{jk}= y_j\partial_{x_k}+y_k\partial_{x_j}	\qquad	&j<k=1,\dots,n	\qquad \frac{n(n-1)}{2}
	\end{cases}
	\end{align}
	The positive roots are $\Phi^+ = \{X_{jk},Y_{jk},Y_{jj}\mid 1\leq j<k\leq n\}$ and the Cartan algebra is given by $X_{jj}=x_j\partial_{x_j}-y_j\partial_{y_j}$ for $1\leq j\leq n$.
	One easily checks that $w_{k}=x_1^{k}$ is a solution of all the positive roots. Moreover, the action of the Cartan algebra $\mathfrak{h}$ on $w_{k}$  gives $H_1x_1^{k}=kx_1^{k}$ and $H_2x_1^{k}=\cdots=H_nx_1^{k}=0$. 
\end{proof}
\begin{remark}
	At this point it is worthwhile to compare the orthogonal and symplectic framework. In the orthogonal setting, the space of $k$-homogeneous polynomials is \textit{reducible} as an $\mathfrak{so}(n)$-module and further decomposes into harmonic polynomials (the latter are irreducible modules). In other words, we have \begin{align*}
	(k)_{\mathfrak{so}(n)} \longleftrightarrow \mathcal{P}_{k}(\R^n,\C)\cap \ker(\Delta).
	\end{align*}
	In the symplectic framework, we have
	\begin{align*}
	(k)_{\mathfrak{sp}(2n)} \longleftrightarrow \mathcal{P}_{k}(\R^{2n},\C).
	\end{align*}
	In some sense, this is to be expected as the Laplace operator $\Delta$ can be written as $\langle \up_x,\up_x\rangle$ whereas the natural operator in the symplectic setting would be $\langle \up_x,\up_x\rangle_s$ which is of course zero, so that no extra condition is to be imposed.
\end{remark}
\begin{remark}
Note that the remark above implies that there is no natural way to associate a second-order differential operator (the Laplacian) to the symplectic Dirac operator $D_s$ on $\R^n$. However, in \cite{habermann2006introduction} it was shown that if one allows a compatible complex structure $\mathbb{J}$ on $\mathbb{R}^{2n}$ one can define a second symplectic Dirac operator $D_t$ (which is basically a rotation of the operators $D_s$ using the complex structure) such that $[D_s,D_t]=-i\Delta$.
Recently, this led to a hermitian refinement of symplectic Clifford analysis in which the notion of complex harmonics was used to describe the solution space of the symplectic Dirac operators $D_s$ and $D_t$ using a Fischer decomposition, see \cite{eelbode2023hermitian}.
\end{remark}
\subsection{The case of $N=2$}
We now focus on the case where $N=2$ which allows us to take a look at the more general module $(\lambda_1,\lambda_2)_{\mathfrak{sp}(2n)}$ with $\lambda_1\geq \lambda_2\geq 0$. We denote the two vector variables $(\ux,\uu)\in \R^{2n\times 2}$ by
\begin{align*}
\ux&=(x_1,\dots,x_n,y_1,\dots,y_n)\in\R^{2n}\\
\uu&   =(u_1,\dots,u_n,v_1,\dots,v_n)\in\R^{2n}.
\end{align*} 
At this point, one might expect that 	$(\lambda_1,\lambda_2)_{\mathfrak{sp}(2n)}\leftrightarrow \mathcal{P}_{\lambda_1,\lambda_2}(\R^{2n},\C)$. This is however not the case as $\mathcal{P}_{\lambda_1,\lambda_2}(\R^{2n},\C)$ is \textit{not} irreducible as $\mathfrak{sp}(2n)$-module, as illustrated in the following counterexample:
\begin{example}\label{counterexample}
	We start by calculating the dimension of the space $\mathcal{P}_{\lambda_1,\lambda_2}(\R^{2n},\C)$: \begin{align*}
	\dim(\mathcal{P}_{\lambda_1,\lambda_2}(\R^{2n},\C)) = {\lambda_1+2n-1\choose 2n-1}{\lambda_2+2n-1\choose 2n-1}.
	\end{align*} 
	Let us now take $n=4$ and consider by means of example the module $(2,1)_{\spp(8)}$. The dimension of the polynomial algebra is given by 
	\[ \dim(\mathcal{P}_{2,1}(\mathbb{R}^4,\C)) = {9\choose 7}{8\choose 7}=288. \]
	However, one can check that $\dim(2,1)_{\spp(8)}=160$. This means that the suggested model $\mathcal{P}_{\lambda_1,\lambda_2}(\R^{2n},\C)$ is too big and some kind of \textit{reduction} has to be done.
\end{example}The tool for this dimensional reduction is the Howe duality $\Sp(2n)\times \mathfrak{so}(4)$.
\begin{prop}\label{N=2}
	Let $N=2$ and $(\ux,\uu)\in\R^{2n\times 2}$. Then the irreducible representation given by the weight $(\lambda_1,\lambda_2)_{\mathfrak{sp}(2n)}$ can be realised in terms of homogeneous polynomials as follows:
	\[ \mathcal{P}_{\lambda_1,\lambda_2}(\R^{2n\times 2},\C)\cap \ker(\langle \ux,\underline{\partial}_u\rangle, \langle\underline{\partial}_x,\underline{\partial}_u\rangle_s)\]
	where the operators $\langle \ux,\underline{\partial}_u\rangle$ and $\langle\underline{\partial}_x,\underline{\partial}_u\rangle_s$ are $\spp(2n)$-invariant operators and $\langle\cdot,\cdot\rangle_s$ is the `symplectic inner product' (see equation \ref{lrs}).
\end{prop}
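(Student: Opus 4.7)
The plan is to establish three facts: (i) the two operators are $\mathfrak{sp}(2n)$-invariant, so that
\[ K := \mathcal{P}_{\lambda_1,\lambda_2}(\R^{2n\times 2},\C)\cap \ker(\langle \ux,\underline{\partial}_u\rangle,\,\langle\underline{\partial}_x,\underline{\partial}_u\rangle_s) \]
is an $\mathfrak{sp}(2n)$-submodule; (ii) $K$ contains a highest weight vector of weight $(\lambda_1,\lambda_2)_{\mathfrak{sp}(2n)}$; and (iii) $K$ is irreducible. For (i) I would extend the realisation (\ref{sp(2n)proof}) to two vector variables by replacing each generator by the sum of the same expression in $(\ux,\underline{\partial}_x)$ and in $(\uu,\underline{\partial}_u)$ (e.g.\ $X_{jk}^{\mathrm{tot}}=x_j\partial_{x_k}-y_k\partial_{y_j}+u_j\partial_{u_k}-v_k\partial_{v_j}$). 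A direct commutator computation then shows that both operators commute with every $\mathfrak{sp}(2n)$-generator; conceptually this is because $\ux$ and $\underline{\partial}_u$ lie in dual copies of the standard representation (whence their natural Euclidean pairing is invariant) while $\underline{\partial}_x,\underline{\partial}_u$ both live in the dual standard representation (whose only invariant bilinear form is the symplectic one).

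For (ii) I would propose the explicit candidate
\[ w_{\lambda_1,\lambda_2}(\ux,\uu) := x_1^{\lambda_1-\lambda_2}\,(x_1 u_2 - x_2 u_1)^{\lambda_2}, \]
an obvious two-variable generalisation of the highest weight $x_1^k$ from Proposition~\ref{pkirrep}. One checks that all $Y_{jk}^{\mathrm{tot}}$ and $Y_{jj}^{\mathrm{tot}}$ annihilate $w_{\lambda_1,\lambda_2}$ because no $y_j$ or $v_j$ is present; among the $X_{jk}^{\mathrm{tot}}$ with $j<k$ all are trivial except $X_{12}^{\mathrm{tot}}$, and its two relevant terms $x_1\partial_{x_2}$ and $u_1\partial_{u_2}$ cancel by the antisymmetry of $x_1 u_2 - x_2 u_1$. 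The Cartan then acts by $(\lambda_1,\lambda_2,0,\ldots,0)_{\mathfrak{sp}(2n)}$. Membership in $K$ follows immediately: $\langle\underline{\partial}_x,\underline{\partial}_u\rangle_s = \sum_j(\partial_{x_j}\partial_{v_j}-\partial_{y_j}\partial_{u_j})$ vanishes on $w_{\lambda_1,\lambda_2}$ by the same absence of $y_j$ and $v_j$, and in $\langle \ux,\underline{\partial}_u\rangle w_{\lambda_1,\lambda_2}$ only the terms $x_1\partial_{u_1}$ and $x_2\partial_{u_2}$ contribute, and they again cancel by antisymmetry.

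For (iii) I would invoke the Howe duality $\Sp(2n)\times \mathfrak{so}(4)$ announced above. The six $\mathfrak{sp}(2n)$-invariant bilinear operators $\langle\ux,\uu\rangle_s$, $\langle\underline{\partial}_x,\underline{\partial}_u\rangle_s$, $\langle\ux,\underline{\partial}_u\rangle$, $\langle\uu,\underline{\partial}_x\rangle$, $\mathbb{E}_x+n$, $\mathbb{E}_u+n$ satisfy the commutation relations of $\mathfrak{so}(4)\cong\mathfrak{sl}(2)\oplus\mathfrak{sl}(2)$ on $\mathcal{P}(\R^{2n\times 2},\C)$; with the appropriate Borel the pair $\{\langle\ux,\underline{\partial}_u\rangle,\langle\underline{\partial}_x,\underline{\partial}_u\rangle_s\}$ spans the nilradical (these are the exact symplectic analogues of the transvection and Laplacian conditions in Definition~\ref{simplicial}, the intravariable Laplacians being automatically zero in the symplectic setting). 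Howe duality then yields a multiplicity-free $\Sp(2n)\times\mathfrak{so}(4)$-decomposition of $\mathcal{P}(\R^{2n\times 2},\C)$, and intersecting with the common kernel of these two operators selects, on each bi-degree, the $\mathfrak{so}(4)$-highest weight line inside each multiplicity space, which is in one-to-one correspondence with a single irreducible $\mathfrak{sp}(2n)$-summand; by (ii) this summand at bi-degree $(\lambda_1,\lambda_2)$ is exactly $(\lambda_1,\lambda_2)_{\mathfrak{sp}(2n)}$. The main technical obstacle is the bookkeeping needed to identify the Howe-dual positive roots and match bi-degree with the correct $\mathfrak{sp}(2n)$-weight; a self-contained alternative is to bypass Howe duality and instead compare $\dim K$ with $\dim(\lambda_1,\lambda_2)_{\mathfrak{sp}(2n)}$ obtained from the Weyl dimension formula, which together with the cyclic submodule generated by $w_{\lambda_1,\lambda_2}$ forces the required equality.
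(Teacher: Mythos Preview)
Your proposal is correct and follows essentially the same route as the paper: both identify the six $\mathfrak{sp}(2n)$-invariant operators as an $\mathfrak{so}(4)$ (the Howe dual partner), exhibit the same explicit highest weight vector $w_{\lambda_1,\lambda_2}=x_1^{\lambda_1-\lambda_2}(x_1u_2-x_2u_1)^{\lambda_2}$, and verify that it is annihilated by the $\mathfrak{so}(4)$-positive roots $\langle\ux,\underline{\partial}_u\rangle$, $\langle\underline{\partial}_x,\underline{\partial}_u\rangle_s$ and by the $\mathfrak{sp}(2n)$-positive roots, with the correct Cartan eigenvalues. The paper leaves the irreducibility implicit in the Howe duality reference, whereas you spell it out in your step~(iii); otherwise the arguments coincide.
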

\begin{proof}
	We construct $\mathfrak{sp}(2n)$-invariant operators by contraction of the coordinates and corresponding derivatives by using $\langle \cdot,\cdot\rangle $ or  $\langle \cdot,\cdot\rangle_s$.
	By a paring argument, we expect ${4\choose 2}=6$ symplectic invariant operators. They are given by: $\langle \underline{u},\underline{x}\rangle_s$, $\langle\underline{u},\up_x\rangle$, $\langle \underline{x},\up_u\rangle$, $\langle \up_x,\up_u\rangle_s$, $\langle \underline{u},\up_u\rangle+n$ and  $\langle \underline{x},\up_x\rangle +n$.
	Note that for the mixed contractions we use the usual Euclidean inner product, because otherwise the resulting operator is not symplectic invariant. These six $\spp(2n)$-invariant operators span a copy of the orthogonal algebra $\mathfrak{so}(4)$ (see for example Wallach \cite{Goodman2009}, Theorem 5.6.14).
Recall that Cartan algebra $\mathfrak{h}\subset \mathfrak{so}(4)$ is given by \[
	\mathfrak{h}=\begin{cases}
	H_1 = \langle \underline{u},\up_u\rangle + n
	\\ H_2 = \langle \underline{x},\up_x\rangle + n
	\end{cases}\]
 We now claim that the highest weight vector is given by \[ {w_{\lambda_1,\lambda_2}(\underline{x},\underline{u})=x_1^{\lambda_1-\lambda_2}(x_1u_2-x_2u_1)^{\lambda_2}}.  \]
	Note that this highest weight vector is independent of $(\uv,\uy)$.
	To confirm this, we check the following:
	\begin{enumerate}
		\item The vector $w_{\lambda_1,\lambda_2}$ is annihilated by the positive roots of the $\mathfrak{so}(4)$. We have \begin{align*}
		x_1 {\partial_{u_1}}{ \left(x_1^{\lambda _1-\lambda _2} \left(u_2 x_1-u_1 x_2\right){}^{\lambda _2}\right)} &= -\lambda _2 x_2 x_1^{\lambda _1-\lambda _2+1} \left(u_2 x_1-u_1 x_2\right){}^{\lambda _2-1}
		\\ x_2 {\partial_{u_2}}{ \left(x_1^{\lambda _1-\lambda _2} \left(u_2 x_1-u_1 x_2\right){}^{\lambda _2}\right)} &= \lambda _2 x_2 x_1^{\lambda _1-\lambda _2+1} \left(u_2 x_1-u_1 x_2\right){}^{\lambda _2-1}
		\end{align*}
		together with the fact that $w_{\lambda_1,\lambda_2}$ is independent of the variables $u_3,\dots,u_n$ and $v_3,\dots,v_n$ we can conclude that $	\langle \underline{x},\partial_{\underline{u}}\rangle w_{\lambda_1,\lambda_2}(\underline{x},\underline{u})=0$. One now similarly checks that $\langle \up_x,\up_u\rangle_s w_{\lambda_1,\lambda_2}(\ux,\uu)=0$.
		\item Moreover, $w_{\lambda_1,\lambda_2}$ must be annihilated by the positive roots vectors $\Phi^+_{\spp(2n)}$. A representation of the Lie algebra $\mathfrak{sp}(2n)$ on the space $\mathcal{P}(\R^{2n\times 2},\C)$ is given by:
		\begin{align*}
		\begin{cases}
		X_{jk}=x_j\partial_{x_k} - y_k\partial_{y_j}+u_j\partial_{u_k}-v_k\partial_{v_j} 	\qquad	&j,k=1,\dots,n 	 
		\\Y_{jk}= x_j\partial_{y_k}+x_k\partial_{y_j}+u_j\partial_{v_k}+u_k\partial_{v_j}	\qquad	&j<k=1,\dots,n
		\\ Z_{jk}= y_j\partial_{x_k}+y_k\partial_{x_j}+u_j\partial_{v_j}	\qquad	&j<k=1,\dots,n	
		\\ 	Y_{jj}=	x_j\partial_{y_j}+u_j\partial_{v_j} 						\qquad	&j=1,\dots,n 	
		\\ Z_{jj}= y_j\partial_{x_j}+v_j\partial_{x_j}					\qquad	&j=1,\dots,n    
		\end{cases}
		\end{align*}
		%		\begin{align*}
		%	X_{12}&=x_1\partial_{x_2}-y_2\partial_{y_1}+u_1\partial_{u_2}-v_2\partial_{v_1}\\
		%	Y_{12}&=		x_1\partial_{y_2}+x_2\partial_{y_1}+u_1\partial_{v_2}+u_2\partial_{v_1} 
		%		\end{align*}
		Due to the fact that $w_{\lambda_1,\lambda_2}$ only depends on the variables $(x_1,x_2,u_1,u_2)$, we immediately see that $w_{\lambda_1,\lambda_2}$ is a solution of the differential operators $Y_{jj}$	for all $1\leq j\leq n$.
		For the operators $X_{jk}$ with $1\leq j<k\leq n$ we have
		\begin{align*}
		X_{12}w_{\lambda_1,\lambda_2}(\ux,\uu)&=(x_1\partial_{x_2}-y_2\partial_{y_1}+u_1\partial_{u_2}-v_2\partial_{v_1})w_{\lambda_1,\lambda_2}(\ux,\uu)\\&=x_1\partial_{x_2}+u_1\partial_{u_2}w_{\lambda_{1,2}}(\ux,\uu)\\&=-u_1x_1x_1^{\lambda_1-\lambda_2}+u_1x_1x_1^{\lambda_1-\lambda_2}\\&=0
		\end{align*}
		and completely similar for the other operators $X_{jk}$. Repeating the strategy from above, we also obtain $Y_{jk}w_{\lambda_1,\lambda_2}=0$ for all $1\leq j<k\leq n$.  
		\item The action of the Cartan algebra $\mathfrak{h}=\{X_{jj}:j=1,\dots,n\}$ of $\spp(2n)$ must give the eigenvalues $(\lambda_1,\lambda_2)_{\mathfrak{sp}(2n)}$ (recall that the redundant zeros are omitted here). For ${X}_{11}$ we obtain
		\begin{align*}
		x_1\left((\lambda_1-\lambda_2)x_1^{\lambda_1-\lambda_2-1}(x_1u_2-x_2u_1)^{\lambda_2}+x_1^{\lambda_1-\lambda_2}\lambda_2(x_1u_2-x_2u_1)^{\lambda_2-1}u_2\right)\\+u_1\left(x_1^{\lambda_1-\lambda_2}\lambda_2(x_1u_2-x_2u_1)^{\lambda_2-1}(-x_2)\right)
		\end{align*}
		This can be reduced to $\lambda_1w_{\lambda_1,\lambda_2}(\underline{x},\underline{u})$ as wanted. By computing the 
		action of $X_{22}$ on $w_{\lambda_1,\lambda_2}(\underline{x},\underline{u})$ we obtain 
		\[ \lambda _2 u_2 x_1^{\lambda _1-\lambda _2+1} \left(u_2 x_1-u_1 x_2\right){}^{\lambda _2-1}-\lambda _2 u_1 x_2 x_1^{\lambda _1-\lambda _2} \left(u_2 x_1-u_1 x_2\right){}^{\lambda _2-1} = \lambda_2 w_{\lambda_1,\lambda_2}. \] Moreover, $X_{jj}w_{\lambda_1,\lambda_2}=0$ for $j>2$. 
	\end{enumerate}
 \end{proof}
We now come to the notion of simplicial harmonics in the symplectic framework for $N=2$ (compare with Definition \ref{simplicial}).
\begin{definition}\label{N2}
	For $\lambda_1\geq \lambda_2\geq 0$, we define the space of \textit{symplectic simplicial harmonics} as	\[ \mathcal{H}^s_{\lambda_1,\lambda_2}(\R^{2n\times 2},\C):=\mathcal{P}_{\lambda_1,\lambda_2}(\R^{2n\times 2},\C)\cap \ker(\langle \ux,\underline{\partial}_u\rangle, \langle\underline{\partial}_x,\underline{\partial}_u\rangle_s).\]
\end{definition}
%\noindent We will now use the $\mathfrak{sp}(2n)$-irreducible spaces $\mathcal{H}^s_{\lambda_1,\lambda_2}(\R^{2n\times 2},\C)$ to decompose the space $\mathcal{P}_{\lambda_1,\lambda_2}(\R^{2n\times 2},\C)$ by means of a Fischer decomposition.
%\begin{theorem}
%	Under the joint action of $\Sp(2n)\times \mathfrak{so}(4)$ the space $\mathcal{P}(\R^{4n},\C)$ decomposes as follows:
%	\begin{align*}
%		\mathcal{P}(\R^{4n},\C) \cong \bigoplus_{\lambda_1\geq \lambda_2} \mathbb{V}_{\lambda_1,\lambda_2}\otimes \mathcal{H}^s_{\lambda_1,\lambda_2}(\R^{2n\times 2},\C)
%	\end{align*}
%\end{theorem}
%Due to the Howe duality we can write \[ P_{\lambda_1,\lambda_2} \langle \ux,\uu\rangle_s\langle\uu,\up_x\rangle^j H_{p,q}^s(\ux,\uu) \]
\subsection{The general case of $N$ vector variables}
In this subsection, we generalise Proposition \ref{N=2} to the case of $N$ vector variables. We start by determining the associated Howe dual partner:
\begin{lemma}
	Let $\uU=(\uu_1,\dots,\uu_N)\in \R^{2n\times N}$.
	There are ${2N\choose 2}$ symplectic invariant operators which are obtained by contraction between
	\begin{enumerate}
		\item variables: $\langle \uu_i,\uu_j\rangle_s$ with $i,j\in\{1,\dots,N\}$ and $i\neq j$;
		\item differential operators: $\langle \underline{\partial}_{u_i},\underline{\partial}_{u_j}\rangle_s$ with $i,j\in\{1,\dots,N\}$ and $i\neq j$;
		\item variables and differential operators: $\langle \uu_i,\underline{\partial}_{u_j}\rangle$ with $i,j\in\{1,\dots,N\}$.
	\end{enumerate}
	These $\spp(2n)$-invariant operators give rise to a copy of the Lie algebra $\mathfrak{so}(2N)$. 
\end{lemma}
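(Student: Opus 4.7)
The plan is to split the argument into three steps: (i) verify that each of the three families of operators commutes with the $\spp(2n)$-action; (ii) compute the brackets among them and show closure; and (iii) identify the resulting structure constants with those of $\so(2N)$ in its standard triangular presentation.

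\smallskip

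\textbf{Step 1 (invariance).} Under the regular $\Sp(2n)$-action on $\mathcal{P}(\R^{2n\times N},\C)$, each variable $\uu_i$ transforms as $g^{-1}\uu_i$ and each derivative $\underline{\partial}_{u_j}$ transforms by the contragredient. Since $\Omega_0$ is by definition preserved by $\Sp(2n)$, the bilinears $\langle \uu_i,\uu_j\rangle_s$ and $\langle \underline{\partial}_{u_i},\underline{\partial}_{u_j}\rangle_s$ are automatically invariant. The mixed operators $\langle \uu_i,\underline{\partial}_{u_j}\rangle$ use the canonical evaluation pairing between $\R^{2n}$ and its dual, which is $\GL(2n)$-invariant and hence also $\Sp(2n)$-invariant; alternatively one may verify this directly against the explicit generators listed in (\ref{sp(2n)proof}), with the ``crossed'' terms cancelling in pairs.

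\smallskip

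\textbf{Step 2 (closure).} The three families contribute $\binom{N}{2}+\binom{N}{2}+N^{2}=2N^{2}-N=\binom{2N}{2}=\dim\so(2N)$, so the count matches. For the bracket computation set
$$E_{ij}:=\langle \uu_i,\underline{\partial}_{u_j}\rangle,\qquad P_{ij}:=\langle \uu_i,\uu_j\rangle_s,\qquad Q_{ij}:=\langle \underline{\partial}_{u_i},\underline{\partial}_{u_j}\rangle_s,$$
so that $P_{ij}$ and $Q_{ij}$ are skew in $(i,j)$. Using the canonical commutator $[\partial_{u_{k,a}},u_{l,b}]=\delta_{kl}\delta_{ab}$ together with the skew-symmetry of $\Omega_0$, one obtains $[E_{ij},E_{kl}]=\delta_{jk}E_{il}-\delta_{il}E_{kj}$, so the $E_{ij}$ already span a copy of $\gl(N)$. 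A direct Leibniz computation then gives $[E_{ij},P_{kl}]$ and $[E_{ij},Q_{kl}]$ as linear combinations of $P$'s and $Q$'s respectively, exhibiting them as a $\wedge^{2}\C^{N}$ and a $\wedge^{2}(\C^{N})^{*}$ under the $\gl(N)$-adjoint action. The remaining bracket $[P_{ij},Q_{kl}]$ is linear in the $E$'s, modulo a central shift that can be absorbed into the redefinition $E_{ii}\mapsto E_{ii}+n$, in the same spirit as the $H_i$ that appeared in Proposition \ref{N=2}.

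\smallskip

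\textbf{Step 3 (identification with $\so(2N)$) and main obstacle.} The bracket pattern $[\gl(N),\gl(N)]\subset \gl(N)$, $[\gl(N),\wedge^{2}\C^{N}]\subset \wedge^{2}\C^{N}$, and $[\wedge^{2}\C^{N},\wedge^{2}(\C^{N})^{*}]\subset \gl(N)$ is precisely the $\mathbb{Z}$-grading of $\so(2N)$ with respect to a Lagrangian decomposition $\C^{2N}=L\oplus L^{*}$; matching generators type by type delivers the isomorphism. Conceptually this is the symplectic side of Howe's classical theorem that $(\Sp(2n),\Or(2N))$ is a reductive dual pair on the Weyl algebra $\mathcal{W}(2nN)$ (cf. \cite{Goodman2009}, Theorem 5.6.14), and it is this general result which one may ultimately invoke. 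The main technical hurdle is the verification of $[P_{ij},Q_{kl}]$: four Leibniz terms must be collected carefully, the two copies of $\Omega_0$ must be contracted using the skew-symmetry to produce the antisymmetric combination $\delta_{ik}E_{jl}-\delta_{jk}E_{il}-\delta_{il}E_{jk}+\delta_{jl}E_{ik}$, and the $2n$-dimensional trace over the inner $\R^{2n}$-index must be shown to provide exactly the central shift needed to match the standard $\so(2N)$ commutation relations.
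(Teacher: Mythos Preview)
Your proposal is correct and follows the same route as the paper, which simply states that the result follows from straightforward commutator computations and refers to \cite{Goodman2009}, Theorem 5.6.14. You have essentially supplied the details the paper omits: the dimension count, the $\gl(N)\oplus\wedge^{2}\C^{N}\oplus\wedge^{2}(\C^{N})^{*}$ grading, and the delicate $[P_{ij},Q_{kl}]$ computation with its central shift, which is exactly the mechanism behind the $+n$ in the Cartan elements of Proposition \ref{N=2}.
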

\begin{proof}
	This follows from straightforward computations of commutators. See for instance \cite{Goodman2009}.
\end{proof}
\noindent We now have the following generalisation of Definition \ref{N2}:

\begin{definition}
	We define the space of \textit{symplectic simplicial harmonics} as \[ \mathcal{H}^s_{\lambda_1,\dots,\lambda_N}(\R^{2n\times N},\C):= \mathcal{P}_{\lambda_1,\dots,\lambda_N}(\R^{2n\times N},\C)\cap \ker(\langle \uu_r,\up_{u_s}\rangle, \langle \up_{u_p},\up_{u_q}\rangle_s ) \]
	for $1\leq r<s\leq N$ and $1\leq p\neq q\leq N$.
\end{definition}
\noindent We now check that the space of symplectic simplicial harmonics provides a polynomial model for the irreducible representations of highest weight $(\lambda_1,\dots,\lambda_N)_{\mathfrak{sp}(2n)}$.
\label{scalar_kvector}
\begin{theorem}[Scalar polynomial model]\label{scalarsymplectic}
	The spaces $\mathcal{H}^s_{\lambda_1,\dots,\lambda_N}(\R^{2n\times N},\C)$ provide a model for the finite-dimensional representations of $\mathfrak{sp}(2n)$ of highest weight $ (\lambda_1,\dots,\lambda_N)_{\mathfrak{sp}(2n)}$.
\end{theorem}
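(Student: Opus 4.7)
The plan is to mirror and extend Proposition \ref{N=2}: construct an explicit highest weight vector (HWV) in $\mcH^s_{\lambda_1,\dots,\lambda_N}(\R^{2n\times N},\C)$, verify it carries the weight $(\lambda_1,\dots,\lambda_N)_{\mathfrak{sp}(2n)}$, and then use the Howe dual pair $(\Sp(2n),\mathfrak{so}(2N))$ from the preceding Lemma to deduce that $\mcH^s_{\lambda_1,\dots,\lambda_N}$ is precisely one copy of the corresponding irreducible module. Writing $\uu_j=(u_1^{(j)},\dots,u_n^{(j)},v_1^{(j)},\dots,v_n^{(j)})$ for $j=1,\dots,N$ and letting $\Delta_k:=\det(u_i^{(j)})_{1\le i,j\le k}$ denote the upper-left $k\times k$ minor of the $u$-block of $\uU$, the proposed HWV is
\[ w_{\lambda_1,\dots,\lambda_N}(\uU):=\prod_{k=1}^N \Delta_k^{\lambda_k-\lambda_{k+1}}\qquad (\lambda_{N+1}:=0), \]
which for $N=2$ recovers $x_1^{\lambda_1-\lambda_2}(x_1u_2-x_2u_1)^{\lambda_2}$ from Proposition \ref{N=2}.

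To show $w\in\mcH^s_{\lambda_1,\dots,\lambda_N}$, observe that $w$ depends on no $v$-coordinate and only on the $u_i^{(j)}$ with $i,j\le N$; thus $\langle\up_{u_p},\up_{u_q}\rangle_s w=0$ because $\langle\cdot,\cdot\rangle_s$ pairs $\partial_u$ with $\partial_v$, and $\langle\uu_r,\up_{u_s}\rangle w=0$ (for $r<s$) by the standard determinantal identity: differentiating a minor with respect to one column and substituting another column produces a determinant with repeated columns. The same minor-column argument, applied to the $N$-variable extension of the realization of $\mathfrak{sp}(2n)$ from Proposition \ref{N=2}, shows that the raising operators $X_{jk}$ ($j<k$) annihilate $w$, while the $Y_{jk}$, $Z_{jk}$ and $Z_{jj}$ act trivially on $w$ because $w$ contains no $v$-variable. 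The Cartan generators act diagonally: $X_{ii}w=\lambda_i w$ for $i\le N$ and $X_{ii}w=0$ for $i>N$, since $\Delta_k$ is linear in each row $u_i^{(\cdot)}$ with $i\le k$. Hence $w$ is an $\mathfrak{sp}(2n)$-HWV of weight $(\lambda_1,\dots,\lambda_N)$, generating an irreducible submodule $V\subset\mcH^s_{\lambda_1,\dots,\lambda_N}$.

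To promote $V$ to equality with $\mcH^s_{\lambda_1,\dots,\lambda_N}$, we invoke Howe duality: by the preceding Lemma the $\mathfrak{sp}(2n)$-invariants generate a copy of $\mathfrak{so}(2N)$, yielding the multiplicity-free decomposition
\[ \mathcal{P}(\R^{2n\times N},\C)\cong\bigoplus_\mu V_\mu^{\mathfrak{sp}(2n)}\otimes W_\mu^{\mathfrak{so}(2N)}. \]
The operators $\langle\uu_r,\up_{u_s}\rangle$ ($r<s$) and $\langle\up_{u_p},\up_{u_q}\rangle_s$ ($p\ne q$) span the nilpotent radical of a Borel subalgebra of this $\mathfrak{so}(2N)$ compatible with the polynomial (lowest-weight) structure; their common kernel on a fixed multidegree is therefore the $\mathfrak{so}(2N)$-extremal weight line, which by Howe duality carries exactly one copy of the associated $\mathfrak{sp}(2n)$-irreducible, identified by the calculation above as $(\lambda_1,\dots,\lambda_N)_{\mathfrak{sp}(2n)}$. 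The main obstacle is the compatibility of the chosen triangular decomposition of $\mathfrak{so}(2N)$ with the kernel conditions, and the correct matching of the two sides of the Howe correspondence; a more elementary alternative is to compute $\dim\mcH^s_{\lambda_1,\dots,\lambda_N}$ directly, by induction on $N$ or on $|\lambda|=\sum_i\lambda_i$ using an exact sequence coming from the invariants, and to compare the result to the Weyl dimension formula for $(\lambda_1,\dots,\lambda_N)_{\mathfrak{sp}(2n)}$.
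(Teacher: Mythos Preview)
Your proof follows the same route as the paper: the paper constructs the identical highest weight vector $w=\prod_{j=1}^N \det(\Xi_j)^{\lambda_j-\lambda_{j+1}}$ (your $\Delta_k$ is their $\det(\Xi_k)$) and then simply writes ``proceeding like in Proposition \ref{N=2} now concludes the proof,'' so your verification of the kernel and root conditions is exactly what the paper has in mind. Your Howe-duality paragraph showing that $\mathcal{H}^s_{\lambda}$ contains only this one irreducible is more than the paper actually writes out (the paper leaves this implicit), so in that respect you are more complete; one small slip to correct is that $Z_{jk},Z_{jj}$ are \emph{negative} root vectors and do not annihilate $w$ --- the positive root vector missing from your list is $Y_{jj}$, which kills $w$ for the same ``no $v$-variable'' reason as $Y_{jk}$.
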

\begin{proof}
	The proof is completely analogous to the proof of Proposition \ref{N=2}. However, we first note that $\mathfrak{sp}(2n)$ has the following representation (under the regular action) on the space of polynomials $\mathcal{P}(\R^{2n\times N},\C)$:
	\begin{align*}
	\begin{cases}
	Y_{ii}=\sum_{N=1}^n x_{N,i}\partial_{y_{N,i}}							\qquad	&i=1,\dots,n 	 
	\\ Z_{ii}= \sum_{N=1}^n y_{N,i}\partial_{x_{N,i}}					\qquad	&i=1,\dots,n        
	\\ X_{ij}=\sum_{N=1}^nx_{N,i}\partial_{x_{N,j}} - y_{N,j}\partial_{y_{N,i}} 	 	\qquad	&i,j=1,\dots,n 	 
	\\Y_{ij}= \sum_{N=1}^n x_{N,i}\partial_{y_{N,N}}+x_{N,j}\partial_{y_{N,i}} 	\qquad	&i<j=1,\dots,n 
	\\ Z_{ij}= \sum_{N=1}^n y_{N,i}\partial_{x_{N,j}}+y_{N,j}\partial_{x_{N,i}}	\qquad	&i<j=1,\dots,n 
	\end{cases}
	\end{align*}
	\noindent We now define $$\Xi_{N}:=\begin{pmatrix}
	x_{11} & x_{12} & \cdots & x_{1N}\\
	x_{21} & x_{22} & \cdots & x_{2N}\\
	\vdots & \vdots & \ddots & \vdots\\
	x_{N1} & x_{N2} & \cdots & x_{NN}\\
	\end{pmatrix}\in\mathbb{R}^{N\times N}$$ 
	The highest weight vector for $(\lambda_1,\dots,\lambda_N)_{\mathfrak{sp}(2n)}$ is then given by \begin{align*}
	w_{\lambda_1,\dots,\lambda_N}:&=\Xi_1^{\lambda_1-\lambda_2}\det(\Xi_2)^{\lambda_2-\lambda_3}\cdots \det(\Xi_N)^{\lambda_N}
	=\prod_{j=1}^N \det(\Xi_j)^{\lambda_j-\lambda_{j+1}},
	\end{align*}
	where $\lambda_j=0$ for $j>N$. Proceeding like in Proposition \ref{N=2}) now concludes the proof.
\end{proof}

%\begin{theorem}
%	The space $\mathcal{I}=\mathcal{P}({\R^{2nk}})^{\mathsf{Sp}(2n)}$ of invariant polynomials under the action of $\mathsf{Sp}(2n)$ is $\mathsf{Sym}(\mathfrak{g}_{+2})\cong \C[S_{ij}^2]$ (i.e.\ polynomials in the operators $S_{ij}^2)$.
%\end{theorem}
\section{Symplectic simplicial monogenics}\label{sec:simplicial_symp_monogenics}
Now that we know that the spaces of symplectic simplicial harmonics
$\mathcal{H}^s_{\lambda_1,\dots,\lambda_N}(\R^{2n\times N},\C)$ form a polynomial model for finite-dimensional irreducible $\mathfrak{sp}(2n)$ representations of highest weight $(\lambda_1,\dots,\lambda_N)_{\mathfrak{sp}(2n)}$ we consider a class of infinite-dimensional $\mathfrak{sp}(2n)$-representations.
This special class of representations is induced by the symplectic Dirac operator which we will now define. We start with the symplectic version of the Clifford algebra:
\begin{definition}
    Let $(V,\omega)$ be a symplectic vector space.
The \textit{symplectic Clifford algebra} $\mathsf{Cl}_s(V,\omega)$ is defined as the quotient algebra of the tensor algebra $T(V)$ of $V$, by the two-sided ideal $$\mathcal{I}(V,{\omega}):=\{v\otimes u-u\otimes v+\omega(v,u) : u,v\in V\}.$$ In other words
$\mathsf{Cl}_s(V,\omega):=T(V)/\mathcal{I}(V,{\omega})$
is the algebra generated by $V$ in terms of the relation $[v,u]=-\omega(v,u)$, where we have omitted the tensor product symbols.
\end{definition}
\begin{remark}
    Note that the symplectic Clifford is often introduced as the Weyl algebra generated by $2n$ commuting variables and their associated partial derivatives where e.g. $[\partial_{z_i},z_j] = \delta_{ij}$.

\end{remark}
\subsection{The symplectic Dirac operator} \label{fockdirac}
Following Habermann \cite{habermann2006introduction} on the flat symplectic space $(\R^{2n},\omega_0)$, we define the \textit{symplectic Dirac operator} by means of \begin{align*}
    D_s : \mathcal{P}(\R^{2n},\C)\otimes \mS^{\infty} \to \mathcal{P}(\R^{2n},\C)\otimes \mS^{\infty}, \quad f\mapsto (\langle \underline{z},\underline{\partial}_y\rangle - \langle \underline{\partial}_x,\underline{\partial}_z \rangle)f,
\end{align*}
where $\langle\cdot,\cdot\rangle$ is the usual Euclidean inner product on $\R^n$.
Note that $\uz \in \R^n$ plays a different role than $(\ux,\uy) \in \R^{2n}$ here (it is used to model the symplectic spinors as polynomials in $\uz$). 
The adjoint (with respect to the symplectic Fischer inner product, see \cite{de2017basic}) is given by $X_s= \langle \underline{x},\underline{z}\rangle + \langle \underline{y},\underline{\partial}_z\rangle$.

\begin{lemma}[\cite{eelbode2022orthogonal}]\label{sprealisationI}
	The symplectic Lie algebra $\mathfrak{sp}(2n)$ has the following realisation on the space of symplectic spinor-valued polynomials $\mathcal{P}(\mathbb{R}^{2n},\mathbb{C})\otimes\mathcal{P}(\mathbb{R}^n,\C)$:
	\begin{align}\label{realisaties1}
	\begin{cases}
	X_{jk}=x_j\partial_{x_k}-y_k\partial_{y_j} - (z_k\partial_{z_j}+\frac{1}{2}\delta_{jk})
	&\quad j,k=1,\dots,n\quad\qquad n^2
	\\Y_{jk}=x_j\partial_{y_k}+x_k\partial_{y_j} -\partial_{z_j}\partial_{z_k}
	&\quad j<k=1,\dots,n\quad \frac{n(n-1)}{2}
	\\ Z_{jk}=y_j\partial_{x_k}+y_k\partial_{x_j} +  z_jz_k
	&\quad j<k=1,\dots,n\quad \frac{n(n-1)}{2}
	\\Y_{jj}=x_j \partial_{y_j} -\frac{1}{2}\partial_{z_j}^2
	&\quad j=1,\dots,n\qquad\qquad n
	\\Z_{jj}=y_j\partial_{x_j}+\frac{1}{2} z_j^2
	&\quad j=1,\dots,n\qquad\qquad n
	\end{cases}
	\end{align} 
	The Cartan algebra $\mathfrak{h}$ is given by the operators $X_{jj}$ for $1\leq j\leq n$ and the positive root vectors are $X_{jk},Y_{jk}$ and $Y_{jj}$ for $1\leq j<k\leq n$. 
\end{lemma}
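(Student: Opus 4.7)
The plan is to verify directly that the operators in \eqref{realisaties1} satisfy the defining commutation relations of $\mathfrak{sp}(2n)$, exploiting a natural decomposition. Each operator splits as a sum of two commuting pieces: a ``bosonic'' part depending only on $(\ux,\uy,\up_x,\up_y)$ and a ``spinor'' part depending only on $(\uz,\up_z)$. The bosonic parts reproduce exactly the realisation from Proposition \ref{pkirrep}, and so they realise $\mathfrak{sp}(2n)$ on $\mathcal{P}(\R^{2n},\C)$. A dimension count confirms that the total number of operators equals $n^2+n(n-1)+2n=n(2n+1)=\dim\mathfrak{sp}(2n)$, matching the expected realisation.

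The nontrivial content therefore lies in showing that the quadratic expressions in the Weyl algebra,
\[
\bigl\{\,-\tfrac12\partial_{z_j}^2,\ \tfrac12 z_j^2,\ -\partial_{z_j}\partial_{z_k},\ z_j z_k,\ -(z_k\partial_{z_j}+\tfrac12\delta_{jk})\,\bigr\},
\]
form a second copy of $\mathfrak{sp}(2n)$. This is the infinitesimal Segal--Shale--Weil (metaplectic) representation. I would verify the relations using only the canonical commutator $[\partial_{z_i},z_j]=\delta_{ij}$; a representative calculation,
\[
\bigl[-\tfrac12\partial_{z_j}^2,\ \tfrac12 z_k^2\bigr]=-\delta_{jk}\bigl(z_j\partial_{z_j}+\tfrac12\bigr),
\]
identifies the bracket of a positive root with its opposite as the corresponding Cartan element, and shows that the $-\tfrac12\delta_{jk}$ shift in $X_{jk}$ arises as the normal-ordering correction forced by the Weyl algebra.

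Since the bosonic and spinor pieces act on disjoint sets of variables, they commute, and hence their sum defines a diagonal action of $\mathfrak{sp}(2n)$ on $\mathcal{P}(\R^{2n},\C)\otimes\mathcal{P}(\R^n,\C)$. It then remains to identify the Cartan subalgebra and the positive root system: the operators $X_{jj}$ are pairwise commuting and simultaneously diagonalisable on the tensor product, so they span $\mathfrak{h}$, and computing $[X_{ii},Y_{jk}]$, $[X_{ii},Z_{jk}]$ and $[X_{ii},X_{jk}]$ shows that $\{X_{jk},Y_{jk},Y_{jj}\mid j<k\}$ raise the weight by the standard positive roots of $\mathfrak{sp}(2n)$.

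The main obstacle is bookkeeping: ensuring the $\tfrac12$ factors on the diagonal Weyl generators and the $-\tfrac12\delta_{jk}$ shift remain consistent across the mixed brackets, especially $[Y_{jj},Z_{jj}]=X_{jj}$ and $[Y_{jk},Z_{jk}]=X_{jj}+X_{kk}$, where the Weyl algebra non-commutativity contributes the central correction. Once these checks go through, the remaining relations follow because every operator is quadratic in the symplectic variables and the Poisson/commutator structure on such quadratics is rigidly determined by the symplectic form.
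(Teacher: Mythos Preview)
Your approach is correct and is the standard direct verification for this kind of result. Note, however, that the paper does not supply its own proof of this lemma: it is stated with a citation to \cite{eelbode2022orthogonal} and used as input. So there is no ``paper's proof'' to compare against beyond the implicit reference to the cited work.

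Your decomposition into the bosonic piece (which you correctly identify with the realisation in Proposition~\ref{pkirrep}) and the Weyl-algebra piece (the infinitesimal metaplectic representation) is exactly the right way to organise the computation, and the observation that the two pieces commute because they act on disjoint variables immediately gives the diagonal $\mathfrak{sp}(2n)$-action on the tensor product. The sample bracket you compute and your remark about the $-\tfrac12\delta_{jk}$ shift as a normal-ordering correction are on point. The only thing to be careful about is sign conventions: the paper's spinor part of $X_{jk}$ is $-(z_k\partial_{z_j}+\tfrac12\delta_{jk})$, so when you check e.g.\ $[Y_{jj},Z_{jj}]$ on the spinor side you should recover $-(z_j\partial_{z_j}+\tfrac12)$ rather than its negative; your displayed computation already does this correctly.
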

\begin{corollary}
    The symplectic Dirac operator $D_s$ is a $\mathfrak{sp}(2n)$-invariant operator with respect to the realisation from Lemma \ref{sprealisationI}.
\end{corollary}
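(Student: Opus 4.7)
The plan is to verify, by direct computation, that $[D_s, E] = 0$ for every generator $E$ listed in Lemma \ref{sprealisationI}. Writing $D_s = \sum_i z_i\partial_{y_i} - \sum_i \partial_{x_i}\partial_{z_i}$, every such commutator expands via the Leibniz rule into a finite sum of monomials in the Weyl algebra, each evaluated by the canonical relations $[\partial_{w_j},w_k] = \delta_{jk}$ (for $w\in\{x,y,z\}$) together with the fact that variables in different letters commute.

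I would organise the computation by type of generator. For the Cartan element $X_{jj}$ a short calculation gives $[\langle \uz,\up_y\rangle, X_{jj}] = -z_j\partial_{y_j}+z_j\partial_{y_j} = 0$, where the two summands arise respectively from the commutator with $-y_j\partial_{y_j}$ and with the spinor correction $-z_j\partial_{z_j}$; an analogous cancellation shows $[\langle \up_x,\up_z\rangle, X_{jj}] = 0$, so $D_s$ has weight zero. The off-diagonal $X_{jk}$ is handled identically, the free index $k$ taking the place of one occurrence of $j$. The more informative checks are those for the root vectors $Y_{jk}$, $Z_{jk}$, $Y_{jj}$ and $Z_{jj}$, since these are where the spinor corrections $-\partial_{z_j}\partial_{z_k}$, $z_jz_k$, $-\frac{1}{2}\partial_{z_j}^2$ and $\frac{1}{2}z_j^2$ enter. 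The key observation is that these corrections are engineered precisely to produce the required cancellations: for example, for $Y_{jk}$ one computes
\begin{align*}
\bigl[\langle \uz,\up_y\rangle,\, -\partial_{z_j}\partial_{z_k}\bigr] &= \partial_{z_j}\partial_{y_k} + \partial_{z_k}\partial_{y_j},\\
\bigl[-\langle \up_x,\up_z\rangle,\, x_j\partial_{y_k}+x_k\partial_{y_j}\bigr] &= -\partial_{z_j}\partial_{y_k} - \partial_{z_k}\partial_{y_j},
\end{align*}
whose sum vanishes. Entirely analogous pairings handle $Z_{jk}$ (with the roles of $\uz$ and $\up_z$ swapped), as well as the diagonal cases $Y_{jj}$ and $Z_{jj}$.

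The main (and arguably only) obstacle is careful bookkeeping: keeping the free indices $j,k$ distinct from the dummy index running inside $D_s$, and tracking the minus signs arising from $[z_i,\partial_{z_j}] = -\delta_{ij}$. Since the generators listed in Lemma \ref{sprealisationI} form a basis of $\mathfrak{sp}(2n)$, the pointwise vanishing of all these commutators is exactly what the corollary asks for.
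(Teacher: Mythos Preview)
Your approach is correct: verifying $[D_s,E]=0$ for each of the generators in Lemma~\ref{sprealisationI} is exactly what invariance means, and the sample computations you display (the weight-zero check for $X_{jj}$ and the cancellation for $Y_{jk}$) are accurate. The paper itself gives no proof at all for this corollary---it is stated as an immediate consequence of Lemma~\ref{sprealisationI}, which is quoted from \cite{eelbode2022orthogonal}---so your explicit commutator verification supplies precisely the argument the paper leaves implicit.
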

We define the space of $k$-homogeneous symplectic monogenics as \begin{align*}
    \mathcal{M}_k^s:=\ker_k(D_s)=\ker(D_s)\cap \left(\mathcal{P}_k(\R^{2n},\C)\otimes \mS^{\infty}\right).
\end{align*}
We will see in the next subsection that these spaces corresponds to the (direct sum) of the highest weights: \[ \left(k-\frac{1}{2},\dots,-\frac{1}{2}\right)_{\mathfrak{sp}(2n)}\oplus \left(k-\frac{1}{2},\dots,-\frac{3}{2}\right)_{\mathfrak{sp}(2n)}.  \]
Our goal will be to associate with the infinite-dimensional representation
\[ \left(\lambda_1-\frac{1}{2},\dots,\lambda_N-\frac{1}{2}\right)_{\mathfrak{sp}(2n)}\oplus \left(\lambda_1-\frac{1}{2},\dots,\lambda_N-\frac{3}{2}\right)_{\mathfrak{sp}(2n)}   \]
a certain space of $(\lambda_1,\dots,\lambda_N)$-homogeneous polynomials intersected with some $\mathfrak{sp}(2n)$-invariant differential operators (associated with the symplectic Dirac operator).
In order to obtain a characterisation of this space in terms of a Cartan product of two $\mathfrak{sp}(2n)$-irreducible representations, we need some more background. This is provided in the following subsection.

\subsection{Completely pointed modules}\label{compointed}
We will now introduce the algebraic background for computing tensor products of finite-dimensional $\mathfrak{sp}(2n)$-representations with the infinite-dimensional $\mathfrak{sp}(2n)$-module $\mathbb{S}^{\infty}$. We first recall some basic terminology considering the root system of $\mathfrak{sp}(2n)$. 
We denote by $\epsilon_i=(0,\cdots,1,\cdots,0)$ the standard basis of $\mathbb{R}^n$. The roots of the symplectic algebra $\mathfrak{sp}(2n)$ fall into two categories: $2n$ so-called \textit{long roots} $\pm 2\epsilon_i$ and $2(n^2-n)$ \textit{short roots} $\pm \epsilon_i\pm \epsilon_j$, where $1\leq i\leq n$ and $i\neq j$.
This means that the full set of roots $\Phi$ equals \[ \Phi=\{\pm(\epsilon_i\pm \epsilon_j):1\leq i< j \leq n\}\cup \{\pm 2\epsilon_i:1\leq i\leq n\}. \]
The simple roots are chosen to be \[ \Delta=\{\alpha_i:= \epsilon_i - \epsilon_{i+1}:  1\leq i\leq n-i\}\cup \{\alpha_n:= 2\epsilon_n\}. \]
The fundamental weights are now given by $
\omega_j = \epsilon_1+\cdots+\epsilon_j.
$
The two bases of the dual of the Cartan algebra $\mathfrak{h}^*$ relate as follows
\begin{align*}
\omega_1 = \epsilon_1,\quad	\omega_2 = \epsilon_1+\epsilon_2,\quad \dots\quad \omega_n = \epsilon_1+\epsilon_2+\cdots+\epsilon_n.
\end{align*}
%We will need later on that $\epsilon_n=\omega_n-\omega_{n-1}$.
%Denote by $\Pi(\lambda)$ the set of all weights of $L(\lambda)$.
We denote by  $\vl$ to be the irreducible highest weight module with highest weight $\lambda$. Note that we will sometimes denote this by $\V(\lambda)$ for some specific value of $\lambda$. 
We will now introduce a specific type of module of which the modules $\mS^{\infty}$ will appear to be examples.
\begin{definition} Let $\mathfrak{g}$ be a semisimple Lie algebra with Cartan algebra $\mathfrak{h}$ and let $\mathbb{V}$ be $\mathfrak{g}$-module.
	\begin{enumerate}
		%		\item $\mathbb{V}$ is called a \textit{weight module} if it decomposes into $
		%		\V=\bigoplus_{\lambda} \vl,	\label{lambda}$
		%		where each summand is of the form
		%		\begin{align*}
		%	\vl=	\{v\in V\mid \forall H\in\mathfrak{h}, Hv=\lambda(v)v\text{ and } \lambda\in\mathfrak{h}^*\}.
		%		\end{align*}
		\item We say that $\mathbb{V}$ is a module with \textit{bounded multiplicities} if there exists a $k\in\mathbb{N}_0$ such that for every decomposition of the form $	\V=\bigoplus_{\lambda} \vl,	\label{lambda}$ we have that $\dim \vl\leq k$. The minimal $k$ for which this upper bound holds is called the \textit{degree of $\V$} and it is denoted by $\deg(\V)=k$.
		\item The module $\V$ is then called \textit{completely pointed} if $\deg(\V)=1$.
	\end{enumerate}
\end{definition}
\noindent The Lie algebras $\mathfrak{g}$ which have infinite-dimensional modules with bounded multiplicities are rather limited as they only occur if and only if $\mathfrak{g}=\mathfrak{sl}(n)$ or $\mathfrak{g}=\spp(2n)$ (see for example \cite{MR1297597}). The following theorem restricts the possibilities for the completely pointed modules in the case of symplectic algebra. 
\begin{theorem}[Britten, Hooper \& Lemire \cite{MR1297597}] Let $\vl$ be a highest weight module of the symplectic Lie algebra $\mathfrak{sp}(2n)$. Then it is completely pointed if and only if $\lambda=0,\omega_1, -\frac{1}{2}\omega_n$ or $\omega_{n-1}-\frac{3}{2}\omega_n$ (or $\omega_n$ if $n=2$ or $3$).
	\begin{enumerate}
		\item 	If $\lambda=0$ and $\lambda=\omega_1$ the module is finite-dimensional, more specifically: $\V_0=\mathbb{C}v$ is the trivial representation and $\V_{\omega_1}\cong \C^{2n}$ is the fundamental representation. Moreover, in the special case $n=2,3$ the module $\V_{\omega_n}$ is also finite-dimensional.
		\item $\V_{\lambda}$ for $\lambda=	-\frac{1}{2}\omega_n$ and $\lambda=	\omega_{n-1}-\frac{3}{2}\omega_n$ are infinite-dimensional $\mathfrak{sp}(2n)$-modules. Their highest weight are given by \[ \left(-\frac{1}{2},\dots,-\frac{1}{2}\right)_{\mathfrak{sp}(2n)}\quad\text{ and }\quad\left(-\frac{1}{2},\dots,-\frac{3}{2}\right)_{\mathfrak{sp}(2n)}. \]
	\end{enumerate}
\end{theorem}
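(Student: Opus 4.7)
The plan is to establish the equivalence by proving both directions: \emph{sufficiency}, that the listed highest weights do yield completely pointed modules, and \emph{necessity}, that no other highest weight does.

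For sufficiency I would realise each module concretely and read off the weight multiplicities. The trivial module $\V_0 = \C$ and the defining module $\V_{\omega_1} \cong \C^{2n}$ have weights $0$ and $\{\pm\epsilon_i\}_{i=1}^n$ respectively, each of multiplicity one. For the two infinite-dimensional candidates I would invoke the metaplectic (Segal--Shale--Weil) representation on the Fock space $\mathcal{P}(\R^n,\C)$: under the realisation of Lemma \ref{sprealisationI} restricted to the symplectic-spinor factor, namely $z_jz_k$, $\partial_{z_j}\partial_{z_k}$, and $z_j\partial_{z_k}+\tfrac12\delta_{jk}$, the Cartan subalgebra is simultaneously diagonalised in the monomial basis $\{z^{\alpha}:\alpha\in\N^n\}$, and each monomial $z^{\alpha}$ is a weight vector with the distinct weight $-\sum_i(\alpha_i+\tfrac12)\epsilon_i$. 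Since every $\mathfrak{sp}(2n)$-generator in this realisation changes $|\alpha|$ by an even amount, the Fock space splits as a direct sum of an even and an odd part with highest weights $-\tfrac12\omega_n$ and $\omega_{n-1}-\tfrac32\omega_n$ respectively; on both summands every weight space is one-dimensional, so both are completely pointed.

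For necessity I would argue by contradiction, assuming $\V_\lambda$ completely pointed with $\lambda$ not on the list, and then exhibiting a weight $\mu$ with $\dim\V_\lambda[\mu]\geq 2$. In the finite-dimensional case this rests on the fact that $\omega_1$ is the unique minuscule fundamental weight in type $C_n$: for any other dominant integral $\lambda\neq 0$ there is a sub-dominant weight $\mu$ strictly inside $\lambda-\sum_i\N\alpha_i$, and Freudenthal's recursion then forces $\dim\V_\lambda[\mu]\geq 2$. The low-rank coincidences $\V_{\omega_n}$ at $n=2,3$ are handled directly from the Weyl character formula, exploiting $\mathfrak{sp}(4)\cong\mathfrak{so}(5)$ in the first case.

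The genuinely hard part, and what constitutes the content of Britten--Hooper--Lemire, is ruling out the remaining \emph{infinite-dimensional} highest weight modules. The natural strategy is to embed $\V_\lambda$ inside a coherent family in the sense of Mathieu and localise along a suitably chosen set of commuting root vectors; this localisation preserves weight multiplicities and reduces the problem to a cuspidal weight module of bounded degree. One then shows that the only cuspidal completely pointed $\mathfrak{sp}(2n)$-modules, up to twist inside the coherent family, are the two parity pieces of the metaplectic representation, from which the listed highest weights follow. Formalising this localisation argument, and confirming that any candidate $\lambda$ outside the list produces a weight of multiplicity at least two somewhere in its support, is the principal obstacle in the proof.
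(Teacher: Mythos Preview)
The paper does not prove this theorem at all: it is quoted verbatim from Britten, Hooper \& Lemire \cite{MR1297597} and used as a black box. There is therefore no ``paper's own proof'' to compare your proposal against.

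That said, your sketch is a reasonable outline of how the cited result is actually obtained. The sufficiency direction via the Fock realisation is exactly right and essentially self-contained; this is also how the paper itself (in the proof of Theorem~\ref{sinftylabel}) verifies the highest weights of $\mathbb{S}^\infty_\pm$. Your necessity argument in the finite-dimensional case is standard, though the claim that ``Freudenthal's recursion then forces $\dim\V_\lambda[\mu]\geq 2$'' for any non-minuscule $\lambda$ needs care---you must actually produce such a $\mu$ and check the recursion, and the exceptional cases $\omega_n$ for $n=2,3$ show this is not entirely automatic. For the infinite-dimensional necessity you invoke Mathieu's coherent families, which is anachronistic (Mathieu's classification postdates \cite{MR1297597}) but does give a clean modern route; the original argument of Britten--Hooper--Lemire is more hands-on, working directly with the structure of weight modules. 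Either way, you correctly identify that this direction is the substantive content and that your outline stops short of a proof.
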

\noindent These completely pointed modules have the following interpretation:
\begin{theorem} \label{sinftylabel}
	The infinite-dimensional symplectic spinor space $\mathbb{S}^{\infty}$ is a reducible $\mathfrak{sp}(2n)$-module, its irreducible components exactly coincide with the two completely pointed weight modules from above. The corresponding highest weight vector  is given by $1\oplus z_n$ (when considering the realisation from Lemma \ref{sprealisationI}).
\end{theorem}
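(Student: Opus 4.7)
The plan is to use the Fock realisation of the symplectic spinors, identifying $\mS^{\infty}$ with the polynomial space $\mathcal{P}(\mathbb{R}^n,\mathbb{C})$ in auxiliary variables $\uz=(z_1,\dots,z_n)$ and letting $\spp(2n)$ act through the $\uz$-components of the operators in Lemma \ref{sprealisationI}. Each generator is then a homogeneous quadratic in the Weyl algebra generated by $z_i$ and $\partial_{z_j}$, and therefore shifts the total $\uz$-degree by $-2$, $0$, or $+2$. As a consequence, the parity decomposition
\[
\mS^{\infty} = \mS^{\infty}_{0}\oplus\mS^{\infty}_{1}
\]
into even- and odd-degree polynomials in $\uz$ is $\spp(2n)$-invariant, which already gives the reducibility claim.

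Next I identify $1\in\mS^{\infty}_{0}$ and $z_n\in\mS^{\infty}_{1}$ as highest weight vectors. The positive root operators $Y_{jj}$ and $Y_{jk}$ (for $j<k$) act as $-\tfrac12\partial_{z_j}^2$ and $-\partial_{z_j}\partial_{z_k}$ respectively and annihilate both vectors, while the off-diagonal $X_{jk}$ with $j<k$ act as $-z_k\partial_{z_j}$ and likewise kill both, because $\partial_{z_j}(z_n)=0$ whenever $j<k\le n$. The Cartan operators $X_{jj}=-(z_j\partial_{z_j}+\tfrac12\id)$ are simultaneously diagonalised by the monomial basis, assigning to $z_1^{\alpha_1}\cdots z_n^{\alpha_n}$ the weight $-(\alpha_1+\tfrac12,\dots,\alpha_n+\tfrac12)$. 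In particular $1$ has weight $(-\tfrac12,\dots,-\tfrac12)$ and $z_n$ has weight $(-\tfrac12,\dots,-\tfrac12,-\tfrac32)$, matching the two completely pointed weights from the previous theorem.

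The remaining and most delicate step is irreducibility of each parity component. The monomial-weight correspondence above is a bijection, so every weight space of $\mS^{\infty}$ is one-dimensional and $\mS^{\infty}$ is automatically completely pointed in the sense of Section \ref{compointed}. To show simplicity concretely I use the negative root operators $Z_{jk}=z_jz_k$ and $Z_{jj}=\tfrac12 z_j^2$: iterated applications to the constant $1$ generate every even-degree monomial, while starting from $z_n$ they generate every odd-degree monomial. This gives $U(\spp(2n))\cdot 1 = \mS^{\infty}_{0}$ and $U(\spp(2n))\cdot z_n = \mS^{\infty}_{1}$, and combined with the multiplicity-free weight structure this forces both parity components to be simple highest weight modules. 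Their highest weights now match exactly the two infinite-dimensional completely pointed $\spp(2n)$-modules from the theorem of Britten, Hooper and Lemire, which closes the identification and yields $1\oplus z_n$ as the highest weight vector of $\mS^{\infty}$.
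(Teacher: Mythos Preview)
Your argument follows the same route as the paper's: restrict the realisation of Lemma \ref{sprealisationI} to the spinor variables, identify $1$ and $z_n$ as highest weight vectors, and read off their weights under the Cartan elements $X_{jj}=-(z_j\partial_{z_j}+\tfrac12)$. The paper's own proof stops essentially there, taking the identification with the Britten--Hooper--Lemire modules for granted; you go further and attempt to establish irreducibility of the two parity pieces from scratch, which is more ambitious but contains a gap.

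Two issues with the irreducibility step. First, the claim that iterated $Z_{jk}$ and $Z_{jj}$ applied to $z_n$ produce every odd-degree monomial is false: these operators act by multiplication with $z_jz_k$ or $\tfrac12 z_j^2$, so starting from $z_n$ you only ever reach monomials divisible by $z_n$ and miss, for instance, $z_1^3$. Second, and more seriously, the implication ``cyclic on a highest weight vector and one-dimensional weight spaces $\Rightarrow$ simple'' is not valid in general: already for $\spl(2)$ the Verma module $M(\lambda)$ with $\lambda\in\N$ is cyclic on its highest weight vector and has all weight spaces one-dimensional, yet it contains the proper submodule $M(-\lambda-2)$.

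The clean fix is to argue in the opposite direction: show that from \emph{any} monomial $z^\alpha$ one can reach the highest weight vector via positive root operators. The operators $Y_{jj}=-\tfrac12\partial_{z_j}^2$ and $Y_{jk}=-\partial_{z_j}\partial_{z_k}$ strictly lower the total degree whenever it is at least $2$, reducing $z^\alpha$ to either $1$ (even parity) or some $z_i$ (odd parity); in the latter case the positive root vector $X_{in}=-z_n\partial_{z_i}$ for $i<n$ sends $z_i$ to $-z_n$. Hence every nonzero submodule of $\mS^{\infty}_0$ (resp.\ $\mS^{\infty}_1$) contains $1$ (resp.\ $z_n$) and therefore, by your cyclicity observation (now using all of $\mathcal{U}(\spp(2n))$, not just the $Z$'s), coincides with the whole parity component. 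This yields simplicity and completes the identification with the two completely pointed modules.
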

\begin{proof}
	Recall from Lemma \ref{sprealisationI} that the $n$ Cartan elements (we only focus on the spinor part here) of the Lie algebra $\mathfrak{sp}(2n)$ are given by \[ H_j:=X_{jj}= - \left(z_j\partial_{z_j}+\frac{1}{2}\right) \qquad (1\leq j\leq n).\]
	We now directly verify that the completely pointed module  $\mathbb{V}\left(-\frac{1}{2}\omega_n\right)$ has highest weight vector $1$ and	$\mathbb{V}\left(\omega_{n-1}-\frac{3}{2}\omega_n\right)$ has highest weight vector $z_n$. This implies that the highest weight vector of $\mathbb{S}^{\infty}$ is $w=1\oplus z_n$.  
\end{proof}
%
%\begin{definition}
%	Define the set of weights by means of \[
%	\mathbb{A}:=\{\lambda=\sum_{i=1}^n\lambda_i\omega_i\mid \lambda_i\geq 0, \lambda_n\in\mathbb{Z}+\frac{1}{2},\lambda_{n-1}+2\lambda_n+3>0\}
%	.\] And denote by $\mathcal{A}$ the following collection of representation:
%	\[	\mathcal{A}:=\{L(\nu)\mid \nu\in\mathbb{A}\},	
%	\]
%	each element is called a \textit{harmonic module} or \textit{higher symplectic module}.
%\end{definition}
%\begin{theorem}
%	The following statements are equivalent:
%	\begin{enumerate}
%		\item $L$ is a harmonic module.
%		\item $L$ is an infinite-dimensional $C_n$-module with bounded multiplicities.
%		\item $L$ is a direct summand in the tensor product of $L(\lambda)\otimes L(-\frac{1}{2}\omega_n)$ for some dominant integral weight $\lambda$.
%	\end{enumerate}
%\end{theorem}
%\begin{theorem}[Tensor decomposition]
%	Take $\lambda\in \mathbb{A}$ such that $L(\lambda)$ is a higher symplectic spinor module. Then we have the following tensor decomposition:
%	\[ L(\lambda)\otimes F(\omega_1)=\bigoplus_{\mu\in\mathbb{A}_{\lambda}}L(\mu), \]
%	where $\mathbb{A}_{\lambda}=\{\lambda+\nu\mid \nu\in \Pi(\varphi_1)\}\cap \mathbb{A}$.
%\end{theorem}
\noindent The following theorem will allow us to compute the tensor product of finite-dimensional symplectic modules with the infinite-dimensional symplectic modules $\mathbb{S}^{\infty}_{\pm}$. 

\begin{theorem}[{Britten \& Lemire} \cite{bittenlemire}]\label{BrittenLemire}
	Let $\mathbb{V}(\lambda)$ be a finite-dimensional $\mathfrak{sp}(2n)$-irreducible representation of highest weight $(\lambda_1,\dots,\lambda_n)_{\mathfrak{sp}(2n)}$. Then, the tensor products with the infinite-dimensional completely pointed modules decompose as follows:
	\begin{align*} \mathbb{V}\left(-\frac{1}{2}\omega_n\right) \otimes \mathbb{V}(\lambda)  &\cong \bigoplus_{\kappa\in T_\lambda}\mathbb{V}\left(-\frac{1}{2}\omega_n+\kappa\right)\\
	\mathbb{V}\left(\omega_{n-1}-\frac{3}{2}\omega_n\right) \otimes \mathbb{V}(\lambda)  &\cong \bigoplus_{\kappa\in T_\lambda}\mathbb{V}\left(\omega_{n-1}-\frac{3}{2}\omega_n+\kappa\right)\end{align*}
	where $\kappa$ is of the following form
	\begin{align*}
	T_{\lambda} = \left\{\kappa\mid \kappa=\lambda - \sum_{i=1}^n d_i\epsilon_i\right\}
	\end{align*}
	subject to the properties:
	\begin{enumerate}
		\item $d_i\in\mathbb{N}$,
		\item $\sum_{i=1}^nd_i\in 2\mathbb{N}$,
		\item $0\leq d_i\leq \nu_i $ for $i=1,\dots,n-1$,
		\item $0\leq d_n\leq 2\nu_n+1$.
	\end{enumerate}
\end{theorem}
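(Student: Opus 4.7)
The plan is to prove the theorem by combining the explicit Fock realisation of the completely pointed module $\V(-\tfrac{1}{2}\omega_n)$ with a character-theoretic argument, and then identifying each irreducible summand through its highest weight.

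First, I would invoke complete reducibility of the tensor product. Since $\V(-\tfrac{1}{2}\omega_n)$ is completely pointed (so all weight multiplicities equal one) and $\V(\lambda)$ is finite-dimensional, $\V(-\tfrac{1}{2}\omega_n)\otimes \V(\lambda)$ is an $\mathfrak{sp}(2n)$-module of bounded weight multiplicities, hence decomposable into highest weight modules whose highest weights all lie in the shifted set $-\tfrac{1}{2}\omega_n + \{\text{weights of }\V(\lambda)\}$. It then suffices to enumerate the highest weight vectors modulo scalars and match them with the set $T_\lambda$.

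Second, I would compute characters explicitly. Using the Fock realisation from Theorem~\ref{sinftylabel}, namely $\V(-\tfrac{1}{2}\omega_n)\cong \C[z_1,\dots,z_n]_{\mathrm{even}}$ with Cartan action $H_j = -(z_j\partial_{z_j} + \tfrac{1}{2})$, and writing $q_i = e^{\epsilon_i}$, one has
\[
\operatorname{ch}\bigl(\V(-\tfrac{1}{2}\omega_n)\bigr) = \tfrac{1}{2}\prod_{i=1}^{n} q_i^{-1/2}\left(\prod_{i=1}^{n}\frac{1}{1 - q_i^{-1}} + \prod_{i=1}^{n}\frac{1}{1 + q_i^{-1}}\right).
\]
The character of $\V(\lambda)$ is given by the Weyl character formula for $\mathfrak{sp}(2n)$. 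Multiplying the two characters, reducing modulo the Weyl denominator, and organising the result by dominant weights in the shifted half-integer lattice, I would read off the list of highest weights appearing, each with multiplicity one.

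Third, I would verify the four conditions defining $T_\lambda$. Condition~(1), $d_i\in \N$, states that $\kappa$ is obtained from $\lambda$ by subtracting a non-negative integer combination of the $\epsilon_i$, consistent with the root lattice structure. Condition~(2), $\sum_i d_i \in 2\N$, is the parity constraint singling out the even Fock component: since $\mathbb{S}^{\infty} = \V(-\tfrac{1}{2}\omega_n)\oplus \V(\omega_{n-1}-\tfrac{3}{2}\omega_n)$ decomposes by total $z$-degree parity, only shifts preserving this $\Z_2$-class contribute to the first summand. Conditions~(3)--(4) are Pieri-type inequalities expressing that the shifted weight $-\tfrac{1}{2}\omega_n + \kappa$ sits in the dominant Weyl chamber of $\mathfrak{sp}(2n)$, and the asymmetric bound $0\leq d_n \leq 2\nu_n + 1$ reflects the fact that $2\epsilon_n$ is the unique long simple root.

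The hardest step will be the careful bookkeeping in the character identity, particularly the derivation of the asymmetric bound $2\nu_n + 1$ for $d_n$. I would handle this by branching along $\mathfrak{sp}(2n) \downarrow \mathfrak{sp}(2n-2)\times \mathfrak{sp}(2)$ and reducing to the known $\mathfrak{sp}(2)\cong \mathfrak{sl}(2)$ case: the even half-Fock module $\V(-\tfrac{1}{2})$ tensored with the $(k+1)$-dimensional $\mathfrak{sl}(2)$-module $\V(k)$ decomposes into exactly $k+1$ summands $\V(-\tfrac{1}{2} + k - 2j)$ for $j=0,\dots,k$, matching the count of allowed values $d_n \in \{0,2,\dots,2k\}$. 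The parallel statement for $\V(\omega_{n-1}-\tfrac{3}{2}\omega_n)\otimes \V(\lambda)$ then follows by the same argument applied to the opposite $\Z_2$-parity component of the Fock space.
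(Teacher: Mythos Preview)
The paper does not prove this theorem at all: it is quoted verbatim as a result of Britten and Lemire, with a citation to \cite{bittenlemire}, and is used as a black box (for instance in Lemmas~\ref{cartanl1l2} and~\ref{cartanLN}) to compute Cartan products. There is therefore no ``paper's own proof'' to compare your proposal against.

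That said, your outline is a plausible sketch of how one \emph{might} reprove the Britten--Lemire result, but it has a genuine gap at the very first step. You assert that $\V(-\tfrac{1}{2}\omega_n)\otimes \V(\lambda)$ is completely reducible because it has bounded weight multiplicities, but bounded multiplicities alone do not guarantee semisimplicity for infinite-dimensional weight modules; this is precisely the nontrivial content of the Britten--Lemire theory (and later Mathieu's classification of cuspidal modules). Without an independent argument for complete reducibility, the character computation you propose only tells you which highest weights \emph{could} occur and with what formal multiplicity, not that the module actually splits as the stated direct sum. Your branching reduction to $\mathfrak{sl}(2)$ for the bound on $d_n$ is a reasonable heuristic, but it too presupposes that the tensor product behaves semisimply under restriction.

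If your goal is to supply a self-contained proof rather than cite the literature, you would need either to invoke the general semisimplicity results for the category of weight modules with bounded multiplicities, or to exhibit the highest weight vectors explicitly in the Fock model and show by a dimension/character count that they generate everything. The paper itself does neither and simply defers to the reference.
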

\subsection{The symplectic monogenic Fischer decomposition}\label{subsection:SFD}
\noindent As we know from Proposition \ref{pkirrep} that $(k)_{\mathfrak{sp}(2n)}\leftrightarrow\mathcal{P}_{k}(\R^n,\C)$ this now gives rise to the following result, which is in fact the symplectic counterpart of Theorem \ref{CompleteFD} proven 
in \cite{MR3126939}.

\begin{theorem}[Symplectic monogenic FD, \cite{MR3126939}]\label{SymFD}
	Under the joint action of $\mathsf{Mp}(2n)\times \mathfrak{sl}(2)$ the space of polynomials with values in the space of symplectic spinors decomposes as follows:
	\begin{align}\label{decomposition}
	\mathcal{P}(\mathbb{R}^{2n},\C)\otimes \mathbb{S}^{\infty} =\bigoplus_{k=0}^{\infty}\bigoplus_{j=0}^{\infty} X_s^j\mathcal{M}_{k}^{s},
	\end{align}
	where $\mathcal{M}_{k}^{s}= \ker(D_s) \cap (\mathcal{P}_{k}(\R^{2n},\C) \otimes \mathcal{P}(\R^n,\C))$.
\end{theorem}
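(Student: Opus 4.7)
The plan is to invoke the Howe dual pair $\Mp(2n)\times \sll(2)$ acting on $\mcP(\R^{2n},\C)\otimes \mS^{\infty}$, and to identify its $\sll(2)$-lowest weight vectors with symplectic monogenics. First I would verify that $D_s$ and $X_s$, together with their commutator, generate a copy of $\sll(2)$. A direct calculation with the canonical commutation relations $[\partial_{z_j},z_k]=\delta_{jk}$ gives
\begin{align*}
[X_s, D_s] = \mE_x+\mE_y+n,
\end{align*}
where $\mE_x+\mE_y$ is the Euler operator in the variables $(\ux,\uy)$. The remaining $\sll(2)$-relations $[\mE_x+\mE_y,X_s]=X_s$ and $[\mE_x+\mE_y,D_s]=-D_s$ follow immediately from the homogeneity properties of $X_s$ and $D_s$. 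Moreover, this $\sll(2)$-triple commutes with the $\spp(2n)$-realisation of Lemma \ref{sprealisationI}, since $D_s$ is $\spp(2n)$-invariant and $X_s$ is its Fischer adjoint; lifting to the metaplectic cover produces a genuine Howe dual pair $\Mp(2n)\times\sll(2)$.

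By the classical theory of reductive dual pairs, this action yields a multiplicity-free decomposition of $\mcP(\R^{2n},\C)\otimes \mS^{\infty}$ into joint irreducibles. The $\sll(2)$-lowest weight vectors are exactly the elements of $\mcM_k^s=\ker(D_s)\cap(\mcP_k(\R^{2n},\C)\otimes\mS^{\infty})$, and applying $X_s$ iteratively to each such vector produces a lowest-weight $\sll(2)$-module $\{X_s^jf:j\geq 0\}$. Directness of the sum $\bigoplus_{k,j}X_s^j\mcM_k^s$ is immediate because the Cartan element $\mE_x+\mE_y+n$ separates distinct summands by the eigenvalue $k+j+n$ on the total $(\ux,\uy)$-grading. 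Exhaustion is then proved by induction on the total $(\ux,\uy)$-degree $m$: for $P\in\mcP_m(\R^{2n},\C)\otimes\mS^{\infty}$, either $D_sP=0$ (so $P\in\mcM_m^s$), or by the inductive hypothesis applied to $D_sP$ one subtracts a suitable element $X_sQ$ matching $D_sP$, leaving a monogenic remainder in $\mcM_m^s$.

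The main obstacle will be the injectivity of $X_s^j$ on each $\mcM_k^s$, which guarantees that the chains $\{X_s^jf\}_{j\geq 0}$ never terminate and that the inner sum over $j$ is genuinely infinite. This is handled purely from the $\sll(2)$-relations above: a short induction on $j$ shows that for every nonzero $f\in\mcM_k^s$ one has $D_sX_s^jf=c_j(k,n)\,X_s^{j-1}f$ with a coefficient $c_j(k,n)\neq 0$ for all $j\geq 1$ (since $k+n\geq 1$ so no Verma-type null vector occurs). Consequently $X_s^jf=0$ forces $X_s^{j-1}f=0$ and inductively $f=0$. Once this non-termination is established, the lowest-weight $\sll(2)$-modules generated from the monogenics assemble under the Howe dual pair $\Mp(2n)\times\sll(2)$ into the claimed decomposition $\bigoplus_{k=0}^{\infty}\bigoplus_{j=0}^{\infty}X_s^j\mcM_k^s$.
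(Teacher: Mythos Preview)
The paper does not actually prove this theorem: it is quoted verbatim from the cited reference \cite{MR3126939} and no argument is supplied in the text. So there is no in-paper proof to compare your attempt against; I can only assess your sketch on its own merits.

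Your overall strategy (verify the $\mathfrak{sl}(2)$-triple, identify lowest-weight vectors with $\mathcal{M}_k^s$, then prove directness and exhaustion by induction on degree using the nonvanishing of the ladder coefficients) is the standard one and is essentially correct. The commutator computation $[X_s,D_s]=\mathbb{E}_x+\mathbb{E}_y+n$ and the injectivity of $X_s^j$ via $D_sX_s^jf=c_j(k,n)X_s^{j-1}f$ with $c_j(k,n)\neq 0$ for $n\geq 1$ are both fine.

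There is, however, a genuine gap in the directness paragraph. You write that the Cartan element $\mathbb{E}_x+\mathbb{E}_y+n$ ``separates distinct summands by the eigenvalue $k+j+n$'', but this eigenvalue depends only on the total degree $m=k+j$, not on the pair $(k,j)$. Within a fixed degree $m$ the pieces $X_s^j\mathcal{M}_{m-j}^s$ for $j=0,1,\dots,m$ all sit in the same Cartan eigenspace, so the argument as stated does not establish that the sum is direct. The repair is already implicit in your third paragraph: using $D_sX_s^jf=c_jX_s^{j-1}f$ with $c_j\neq 0$, one proves directness in degree $m$ by induction on $m$. If $\sum_{j=0}^{m}X_s^jf_{m-j}=0$ with $f_{m-j}\in\mathcal{M}_{m-j}^s$, apply $D_s$ to obtain $\sum_{j\geq 1}c_jX_s^{j-1}f_{m-j}=0$ in degree $m-1$; the inductive hypothesis then forces each $f_{m-j}=0$ for $j\geq 1$, and hence $f_m=0$ as well. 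You should make this explicit rather than appeal to the Cartan eigenvalue. The same coefficients $c_j$ are what make your exhaustion step work (you need to invert $D_sX_s$ on each ladder to build the correcting element $Q$), and it would strengthen the write-up to spell out the construction of $Q$ from the inductive decomposition of $D_sP$ rather than leave it as ``subtract a suitable element''.
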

 
\begin{remark}\label{ref_Mk}
	The module $\mathcal{M}^s_{k}$ is $\mathfrak{sp}(2n)$-reducible and decomposes into two irreducible parts (an even and odd part)
	$ \mathcal{M}_{\ell}^s = \left(\mathcal{M}_{\ell}^s\right)^+ \oplus \left(\mathcal{M}_{\ell}^s\right)^-$,
	which are irreducible $\mathfrak{sp}(2n)$-modules of highest weight
	\begin{align*}
	\left(\mathcal{M}_{k}^s\right)^+ &\leftrightarrow  \left(	k-\frac{1}{2},-\frac{1}{2},\dots,-\frac{1}{2}	\right)_{\mathfrak{sp}(2n)} \\ \left(\mathcal{M}_{k}^s\right)^- &\leftrightarrow \left(	k-\frac{1}{2},-\frac{1}{2},\dots,-\frac{3}{2}	\right)_{\mathfrak{sp}(2n)}.
	\end{align*}
\end{remark}
As in the orthogonal case, we now see that the space of $k$-homogeneous symplectic monogenics can be defined in a purely algebraic way:
\begin{definition}\label{Mks}
	The space of $k$-homogeneous polynomials solutions of the symplectic Dirac operator $D_s$ are called \textit{symplectic monogenics} and are denote by $\mathcal{M}_k^s$. They are algebraically determined by
	\[ \mathcal{M}_k^s := (k)_{\mathfrak{sp}(2n)}\boxtimes\left( \left(-\frac{1}{2},-\frac{1}{2},\dots,-\frac{1}{2}\right)_{\mathfrak{sp}(2n)}\oplus \left(-\frac{1}{2},-\frac{1}{2},\dots,-\frac{3}{2}\right)_{\mathfrak{sp}(2n)}\right)\]
\end{definition}
\begin{corollary}
	The $\mathfrak{sp}(2n)$-module $\mathcal{M}_k^s$ has highest weight vector $x_1^k\otimes (1\oplus z_n)$.
\end{corollary}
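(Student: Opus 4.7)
The plan is a direct verification using the explicit realisation of $\mathfrak{sp}(2n)$ from Lemma \ref{sprealisationI}, exploiting the general fact that the highest weight vector of a Cartan product $\mathbb{V}(\mu)\boxtimes\mathbb{V}(\nu)$ is the tensor product $v_\mu\otimes v_\nu$ of the component highest weight vectors (provided $\mu+\nu$ is dominant, which is the case here). Proposition \ref{pkirrep} provides the highest weight vector $x_1^k$ of $(k)_{\mathfrak{sp}(2n)}\leftrightarrow\mathcal{P}_k(\R^{2n},\C)$, while Theorem \ref{sinftylabel} identifies $1\oplus z_n$ as the highest weight vector of $\mathbb{S}^{\infty}$. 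Putting these together produces the candidate $w:=x_1^k\otimes(1\oplus z_n)$; I would then verify three things: $w\in\mathcal{M}_k^s$, $w$ is annihilated by the positive roots of Lemma \ref{sprealisationI}, and $w$ has the expected Cartan weights.

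First, membership in $\mathcal{M}_k^s$. The $k$-homogeneity in $(\ux,\uy)$ is obvious. For $D_s w=0$, note that $\langle\uz,\up_y\rangle w=0$ since $w$ is independent of $\uy$, while $\langle\up_x,\up_z\rangle w=\sum_j\partial_{x_j}(x_1^k)\otimes\partial_{z_j}(1\oplus z_n)$ vanishes because the only surviving $x$-derivative is $\partial_{x_1}$, paired with $\partial_{z_1}(1\oplus z_n)=0$.

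Second, the positive root vectors $\{X_{jk},Y_{jk}\ (j<k),\ Y_{jj}\}$ from Lemma \ref{sprealisationI} kill $w$: the $Y_{jk}$ and $Y_{jj}$ operators contain only $\partial_{y}$-terms (which vanish on $w$) and second-order $z$-derivatives (which kill both $1$ and $z_n$); the operators $X_{jk}$ with $j<k$ either hit $x_1^k$ with $\partial_{x_k}$ for $k\geq 2$ (giving zero), contain a $\partial_{y_j}$, or act via $z_k\partial_{z_j}$, and since $j<k\leq n$ forces $j<n$ one has $\partial_{z_j}(1\oplus z_n)=0$. Third, on $w$ the Cartan operator $X_{jj}=x_j\partial_{x_j}-y_j\partial_{y_j}-(z_j\partial_{z_j}+\tfrac{1}{2})$ evaluates immediately to weight $(k-\tfrac{1}{2},-\tfrac{1}{2},\dots,-\tfrac{1}{2})$ on the $1$-summand and $(k-\tfrac{1}{2},-\tfrac{1}{2},\dots,-\tfrac{3}{2})$ on the $z_n$-summand, matching Definition \ref{Mks}.

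There is no real obstacle here since every calculation is first order in the variables; the only mild subtlety is keeping the two summands of $w$ organised simultaneously and noticing that the constraint $j<k\leq n$ in the positive roots is exactly what prevents $z_k\partial_{z_j}$ from producing a nonzero contribution on $z_n$. Together these three verifications identify $w$ as a highest weight vector in each of the two irreducible summands of $\mathcal{M}_k^s$, which by Remark \ref{ref_Mk} is all there is to check.
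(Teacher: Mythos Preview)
Your proof is correct and follows the same approach as the paper: both invoke Proposition \ref{pkirrep} (giving $x_1^k$) and Theorem \ref{sinftylabel} (giving $1\oplus z_n$) and combine them via the Cartan product. The paper's proof stops there with a one-line reference to those two results, whereas you additionally spell out the direct verifications that $w\in\ker D_s$, that the positive root vectors of Lemma \ref{sprealisationI} annihilate $w$, and that the Cartan weights are as in Remark \ref{ref_Mk}; these checks are all accurate, so your version is simply a more explicit rendering of the same argument.
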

\begin{proof}
	This follows from Propositions \ref{pkirrep} and \ref{sinftylabel}.
\end{proof}
\subsection{The case of $N$ vector variables}\label{DsxDsu}
We will now be interested in systems of $N$ symplectic Dirac operators. We will begin by studying the case of $N=2$.
Recall from Section 3 that the \textit{finite-dimensional} representations of highest weight $(\lambda_1,\lambda_2)_{\mathfrak{sp}(2n)}$ are irreducible and can be explicitly realised as the space of symplectic simplicial harmonics. We now want to gain knowledge on the \textit{infinite-dimensional} irreducible representations of highest weight \[(\lambda_1,\lambda_2)'_{\mathfrak{sp}(2n)}:= \left(\lambda_1-\frac{1}{2},\lambda_2-\frac{1}{2},\dots,-\frac{1}{2}\right)_{\mathfrak{sp}(2n)}\oplus \left(\lambda_1-\frac{1}{2},\lambda_2-\frac{1}{2},\dots,-\frac{3}{2}\right)_{\mathfrak{sp}(2n)}. \]
As one might intuitively predict, the space $(\lambda_1,\lambda_2)'_{\mathfrak{sp}(2n)}$ will be defined by means of the Cartan product $(\lambda_1,\lambda_2)_{\mathfrak{sp}(2n)}\boxtimes \mathbb{S}^{\infty}$, but the question obviously arises as how to define these spaces as a kernel space for differential operators. We start by proving that the intuitive guess is indeed correct.
\begin{lemma}\label{cartanl1l2}
	The Cartan product of $(\lambda_1,\lambda_2)_{\mathfrak{sp}(2n)}\boxtimes \mathbb{S}^{\infty}$ is given by the direct sum of:
	\begin{align*}
	\V\left(-\frac{1}{2}\omega_n+\lambda_1\epsilon_1+\lambda_2\epsilon_2\right)&=\left(\lambda_1-\frac{1}{2},\lambda_2-\frac{1}{2},-\frac{1}{2},\dots,-\frac{1}{2}\right)_{\mathfrak{sp}(2n)}\\
	\V\left(\omega_{n-1}-\frac{3}{2}\omega_n+\lambda_1\epsilon_1+\lambda_2\epsilon_2\right)&=\left(\lambda_1-\frac{1}{2},\lambda_2-\frac{1}{2},-\frac{1}{2},\dots,-\frac{3}{2}\right)_{\mathfrak{sp}(2n)}
	\end{align*}
\end{lemma}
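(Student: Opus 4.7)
The plan is to reduce the computation to the Britten--Lemire branching rule (Theorem \ref{BrittenLemire}) by first splitting $\mathbb{S}^{\infty}$ into its two completely pointed summands. By Theorem \ref{sinftylabel}, we have
\[
\mathbb{S}^{\infty}\;=\;\mathbb{V}\!\left(-\tfrac{1}{2}\omega_n\right)\oplus\mathbb{V}\!\left(\omega_{n-1}-\tfrac{3}{2}\omega_n\right),
\]
and since the Cartan product is by definition the top irreducible summand of a tensor product, taking the Cartan product with a direct sum amounts to taking the direct sum of the two individual Cartan products. Hence the problem splits into the computation of
\[
\mathbb{V}(\lambda_1\epsilon_1+\lambda_2\epsilon_2)\boxtimes \mathbb{V}\!\left(-\tfrac{1}{2}\omega_n\right)\quad\text{and}\quad \mathbb{V}(\lambda_1\epsilon_1+\lambda_2\epsilon_2)\boxtimes\mathbb{V}\!\left(\omega_{n-1}-\tfrac{3}{2}\omega_n\right).
\]

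Next, I would apply Theorem \ref{BrittenLemire} to each tensor product. Writing $\lambda=\lambda_1\epsilon_1+\lambda_2\epsilon_2$ (so that the coefficients $\nu_i$ in the theorem are $\nu_1=\lambda_1$, $\nu_2=\lambda_2$ and $\nu_i=0$ for $i\geq 3$), the constraints $0\leq d_i\leq \nu_i$ for $3\leq i\leq n-1$ force $d_3=\cdots=d_{n-1}=0$, while $d_1,d_2$ range over $\{0,\dots,\lambda_i\}$ and $d_n\in\{0,1\}$, subject to $\sum d_i\in 2\mathbb{N}$. The Cartan component of each tensor product corresponds to the \emph{maximal} $\kappa\in T_{\lambda}$, obtained by choosing $d_1=d_2=\cdots=d_n=0$, which clearly satisfies all four conditions (the parity condition being trivially $0\in 2\mathbb{N}$). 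This gives $\kappa=\lambda$, so the Cartan products have highest weights
\[
-\tfrac{1}{2}\omega_n+\lambda_1\epsilon_1+\lambda_2\epsilon_2\qquad\text{and}\qquad \omega_{n-1}-\tfrac{3}{2}\omega_n+\lambda_1\epsilon_1+\lambda_2\epsilon_2,
\]
respectively.

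It then remains to translate these into the $\epsilon$-basis expressions claimed in the lemma. Using $\omega_n=\epsilon_1+\cdots+\epsilon_n$, a direct expansion gives $-\tfrac{1}{2}\omega_n+\lambda_1\epsilon_1+\lambda_2\epsilon_2=(\lambda_1-\tfrac{1}{2},\lambda_2-\tfrac{1}{2},-\tfrac{1}{2},\dots,-\tfrac{1}{2})_{\mathfrak{sp}(2n)}$, and analogously using $\omega_{n-1}=\epsilon_1+\cdots+\epsilon_{n-1}$ one checks that $\omega_{n-1}-\tfrac{3}{2}\omega_n=(-\tfrac{1}{2},\dots,-\tfrac{1}{2},-\tfrac{3}{2})$, so that the shifted weight becomes $(\lambda_1-\tfrac{1}{2},\lambda_2-\tfrac{1}{2},-\tfrac{1}{2},\dots,-\tfrac{1}{2},-\tfrac{3}{2})_{\mathfrak{sp}(2n)}$. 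These are precisely the two highest weights stated in the lemma.

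The only subtlety I expect is conceptual rather than computational: one must justify that the summand indexed by $\kappa=\lambda$ in the Britten--Lemire decomposition is indeed the Cartan summand, i.e. that the highest weight $-\tfrac{1}{2}\omega_n+\lambda$ (respectively $\omega_{n-1}-\tfrac{3}{2}\omega_n+\lambda$) is dominant and maximal among the weights appearing. Maximality is clear because any other valid $\kappa\in T_{\lambda}$ is obtained from $\lambda$ by subtracting a nonzero nonnegative integer combination of the $\epsilon_i$, yielding a strictly smaller weight in the dominance order; dominance of the candidate weight follows from $\lambda_1\geq\lambda_2\geq 0$ together with the explicit $\epsilon$-basis expressions above. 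Together, these observations give the decomposition claimed in the lemma.
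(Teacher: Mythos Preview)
Your proof is correct and follows essentially the same approach as the paper: both arguments reduce the claim to Theorem~\ref{BrittenLemire} with $\mathbb{V}(\lambda)=(\lambda_1,\lambda_2)_{\mathfrak{sp}(2n)}$, observe that the Cartan summand corresponds to $d_1=d_2=d_n=0$ (hence $\kappa=\lambda$), and read off the two highest weights. Your write-up is in fact more detailed than the paper's, which simply notes that $d_1=d_2=d_n=0$ ``will give the highest weight'' without spelling out the constraints on the $d_i$, the maximality/dominance check, or the translation to the $\epsilon$-basis.

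One minor caution: the quantities $\nu_i$ in Theorem~\ref{BrittenLemire} are the Dynkin labels of $\lambda$ (coefficients in the fundamental-weight basis), so here $\nu_1=\lambda_1-\lambda_2$, $\nu_2=\lambda_2$, and $\nu_i=0$ for $i\geq 3$, rather than $\nu_1=\lambda_1$ as you wrote. This does not affect the argument, since the choice $d_i=0$ for all $i$ is admissible under either reading and still yields the maximal $\kappa$.
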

\begin{proof}
	In order to check this, we need to calculate (a part of) the tensor product decomposition of $(\lambda_1,\lambda_2)_{\mathfrak{sp}(2n)}\otimes \mathbb{S}^{\infty}$. This can be done using Theorem \ref{BrittenLemire} where $\mathbb{V}(\lambda)=(\lambda_1,\lambda_2)_{\mathfrak{sp}(2n)}$. 
	We use the Young convention, so in other words we write $ \lambda=\lambda_1\epsilon_1+\lambda\epsilon_2$.
	Then $\kappa$ from Theorem \ref{BrittenLemire} is given by \[ \kappa= (\lambda_1-d_1)\epsilon_1+(\lambda_2-d_2)\epsilon_2-d_n\epsilon_n \]
	Taking $d_1=d_2=d_{n}=0$ (as this will give the highest weight) leaves us with \[ \kappa=\lambda\epsilon_1+\lambda_2\epsilon_2. \]
	In order words 
	$\V\left(-\frac{1}{2}\omega_n+\kappa\right)\oplus$ $\V\left(\omega_{n-1}-\frac{3}{2}\omega_n+\kappa\right)$ is the Cartan product.
\end{proof}
\noindent 
We will now characterise the Cartan product  $(\lambda_1,\lambda_2)_{\mathfrak{sp}(2n)}\boxtimes \mathbb{S}^{\infty}$ as the polynomial kernel space of certain differential operators. This will be done by \textit{extending} the Howe dual partner $\mathfrak{so}(2N)$ from Section \ref{scalar_kvector}. We already know from the symplectic Fischer decomposition is that this dual parter is $\mathfrak{sl}(2)$ for $N=1$. As we are working with two vector variables here, we have two symplectic Dirac operators (one for each vector variable) and two adjoint operators which we define as follows:
\begin{align*}
D_{s,x} = \langle \uz,\up_y\rangle-\langle \up_z,\up_x\rangle \qquad\text{and}\qquad X_{s,x}= \langle\uy,\up_z\rangle + \langle\ux, \uz\rangle
\\		D_{s,u}  = \langle z,\up_v\rangle -\langle \up_z,\up_u\rangle  \qquad\text{and}\qquad 	X_{s,u}=\langle v,\up_z\rangle+ \langle u,\uz \rangle 
\end{align*}
Recall that $\langle \cdot,\cdot \rangle$ is the usual Euclidean inner product here. We now turn to the question whether these operators close as a Lie algebra.

\begin{lemma}\label{so5}
	The algebra generated by the Dirac operators $D_{s,x}$ and $D_{s,u}$ and their adjoints $X_{s,x}$ and $X_{s,u}$ gives rise to a copy of the Lie algebra $\mathfrak{so}(5)$.
\end{lemma}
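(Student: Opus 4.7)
The plan is a direct commutator computation combined with a dimension count, guided by a Howe-duality heuristic: for $N$ symplectic Dirac operators acting on functions on $\R^{2n}$, the dual partner of $\mathfrak{sp}(2n)$ should be $\mathfrak{sp}(2N)$, and for $N=2$ this predicts a ten-dimensional algebra, which under the exceptional isomorphism $\mathfrak{sp}(4)\cong \mathfrak{so}(5)$ is the desired answer. Since $\dim \mathfrak{so}(5)=10$, it is enough to exhibit ten linearly independent operators in the algebra and verify closure on this span.

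First I would invoke the $N=1$ case. By the symplectic monogenic Fischer decomposition (Theorem \ref{SymFD}), each of the pairs $(D_{s,x},X_{s,x})$ and $(D_{s,u},X_{s,u})$, together with its commutator, generates a copy of $\mathfrak{sl}(2)\cong \mathfrak{sp}(2)$; the brackets $[X_{s,x},D_{s,x}]$ and $[X_{s,u},D_{s,u}]$ are Euler-type operators in the variables $(\ux,\uy,\uz)$ and $(\uu,\uv,\uz)$ respectively. This yields six operators so far. Crucially, the two $\mathfrak{sl}(2)$-triples do \emph{not} commute with each other, because the variable $\uz$ is shared by all four symplectic operators — it is precisely the mixed commutators that will produce the additional generators.

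Next I would compute the four mixed brackets. A short application of the Weyl relations $[\partial_{z_i},z_j]=\delta_{ij}$ yields
\begin{align*}
[D_{s,x}, D_{s,u}] &= \langle \up_y, \up_u\rangle - \langle \up_x, \up_v\rangle, \\
[X_{s,x}, X_{s,u}] &= \langle \uy, \uu\rangle - \langle \ux, \uv\rangle, \\
[D_{s,x}, X_{s,u}] &= -\langle \uu, \up_x\rangle - \langle \uv, \up_y\rangle, \\
[D_{s,u}, X_{s,x}] &= -\langle \ux, \up_u\rangle - \langle \uy, \up_v\rangle.
\end{align*}
The first two are, up to sign, the symplectic inner products on $\R^{2n}$ of the vectors $(\ux,\uy)$ with $(\uu,\uv)$, resp.\ of their derivatives, while the last two are Euclidean mixed-Euler operators coupling the two vector variables. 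These four operators are linearly independent from the previous six, bringing the total count to $10=\dim \mathfrak{sp}(4)$.

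Finally I would verify closure and identify the resulting algebra. Closure amounts to a finite list of bracket computations among the ten generators; once the original four operators have been bracketed once, the $\uz$-variables no longer appear in the new elements, so the remaining verifications reduce to the (already established) scalar $N=2$ case, in which the six operators built from $(\ux,\uy,\uu,\uv)$ close as $\mathfrak{so}(4)$. The identification with $\mathfrak{so}(5)\cong \mathfrak{sp}(4)$ is cleanest via the $C_2$ root system: the two Euler-type brackets serve as the Cartan subalgebra, the four original symplectic Dirac and adjoint operators correspond to the long roots $\pm 2\epsilon_1,\pm 2\epsilon_2$, and the four mixed commutators above to the short roots $\pm(\epsilon_1\pm\epsilon_2)$. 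The main obstacle is purely combinatorial bookkeeping in the closure step; conceptually, the enlargement from the scalar $N=2$ dual partner $\mathfrak{so}(4)$ to the present $\mathfrak{so}(5)$ is precisely the extra root directions contributed by the coupling variable $\uz$ that links the two diagonal $\mathfrak{sl}(2)$-blocks.
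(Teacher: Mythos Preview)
Your approach is essentially the paper's own: compute the commutators directly and count to ten. The paper's proof simply lists the same six new operators you found (written in the compact $2n$-vector notation $\langle\uu,\up_x\rangle$, $\langle\up_x,\up_u\rangle_s$, $\langle\ux,\uu\rangle_s$, etc.) and declares closure; your added structure (two $\mathfrak{sl}(2)$-blocks, mixed brackets free of $\uz$, identification via the $C_2$ root picture) is a welcome elaboration of what the paper leaves implicit.

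Two small corrections that do not affect the argument but are worth fixing. First, the diagonal brackets $[X_{s,x},D_{s,x}]=\mathbb{E}_x+\mathbb{E}_y+n$ and $[X_{s,u},D_{s,u}]=\mathbb{E}_u+\mathbb{E}_v+n$ contain \emph{no} $\uz$-part: the contributions from $\uz$ cancel. This actually sharpens your closure observation, since all six derived operators are then genuinely $\uz$-free and form exactly the scalar $\mathfrak{so}(4)$ from Section~\ref{sec:simplicial_symp_harmonics}. Second, your opening Howe-duality heuristic predicting $\mathfrak{sp}(2N)$ as the dual partner is not the correct general statement; the paper shows in Theorem~\ref{so2k+1} that $N$ symplectic Dirac operators with their adjoints generate $\mathfrak{so}(2N+1)$, and you are rescued here only by the accidental isomorphism $\mathfrak{sp}(4)\cong\mathfrak{so}(5)$ at $N=2$.
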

\begin{proof}
	This follows from calculating the commutators between these operators, we have for instance:
	\begin{align*}
	[X_{s,u}, D_{s,u}]= \mathbb{E}_u+\mathbb{E}_v+n
	&& [X_{s,x}, D_{s,x}]= \mathbb{E}_x+\mathbb{E}_y+n
	&& [X_{s,u}, D_{s,x}]=  \langle \uu,\partial_{\ux}\rangle 
	\\ [X_{s,x}, D_{s,u}]= \langle \ux,\partial_{\uu}\rangle 
	&& [D_{s,u},D_{s,x}]=\langle \partial_{\ux},\partial_{\uu}\rangle_s
	&& [X_{s,u},X_{s,x}]=\langle {\ux},{\uu}\rangle_s
	\end{align*}
\end{proof}

This now leads to the following generalisation of symplectic monogenics:
\begin{definition}\label{ssm}
	We define the space of \textit{symplectic simplicial monogenics} for $N=2$ as follows:
	\begin{align*}
	{\mathcal{S}_{\lambda_1,\lambda_2}^s(\mathbb{\mathbb{R}}^{4n},\mathbb{S}^{\infty}) := \mathcal{P}_{\lambda_1,\lambda_2}(\mathbb{\mathbb{R}}^{4n},\mathbb{S}^{\infty}) \cap \ker(D_{s,x}, D_{s,u},\langle \ux,\up_u\rangle,\langle \up_x,\up_x\rangle_s ).  }
	\end{align*}
\end{definition}
\noindent We obtain the following characterisation of the Cartan product $(\lambda_1,\lambda_2)_{\mathfrak{sp}(2n)}\boxtimes \mathbb{S}^{\infty}$:
\begin{theorem}\label{spinor_kvector}
	The space of symplectic simplicial monogenics $	\mathcal{S}_{\lambda_1,\lambda_2}^s(\mathbb{\mathbb{R}}^{4n},\mathbb{S}^{\infty})$ is the direct sum of two irreducible $\mathfrak{sp}(2n)$-module of highest weight 	\[ \left(\lambda_1-\frac{1}{2},\lambda_2-\frac{1}{2},-\frac{1}{2},\dots,-\frac{1}{2}\right)_{\mathfrak{sp}(2n)} \oplus \left(\lambda_1-\frac{1}{2},\lambda_2-\frac{1}{2},-\frac{1}{2},\dots,-\frac{3}{2}\right)_{\mathfrak{sp}(2n)}\]
	with highest weight vector $x_1^{\lambda_1-\lambda_2}(x_1u_2-x_2u_1)^{\lambda_2} \otimes (1\oplus z_n)$.
\end{theorem}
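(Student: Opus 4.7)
The plan is to show that the proposed vector $w := x_1^{\lambda_1-\lambda_2}(x_1u_2-x_2u_1)^{\lambda_2}\otimes(1\oplus z_n)$ lies in $\mathcal{S}^s_{\lambda_1,\lambda_2}(\R^{4n},\mathbb{S}^\infty)$ and splits as a sum of two $\mathfrak{sp}(2n)$ highest weight vectors with exactly the announced weights, and then to use the Howe pair $(\mathfrak{sp}(2n),\mathfrak{so}(5))$ of Lemma \ref{so5} to conclude that no further irreducible $\mathfrak{sp}(2n)$-components appear.

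First I would verify membership in the kernel. The operators $\langle\ux,\up_u\rangle$ and $\langle\up_x,\up_u\rangle_s$ (which I read for $\langle\up_x,\up_x\rangle_s$, the latter being identically zero by skew-symmetry of $\omega_0$) involve no $\uz$, so they reduce to operators on the scalar factor, which is annihilated by both by Proposition \ref{N=2}. For $D_{s,x}=\langle\uz,\up_y\rangle-\langle\up_z,\up_x\rangle$ the first term vanishes since $w$ is independent of $\uy$, while in the second only $j=n$ contributes because $\partial_{z_j}(1\oplus z_n)=\delta_{jn}(0\oplus 1)$; but then $\partial_{x_n}$ kills the scalar factor, which involves only $x_1,x_2$ (this uses $n\geq 3$, the small case being checked separately). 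The same argument handles $D_{s,u}$. To see that $w$ is a highest weight vector, I would note that in the $\mathfrak{sp}(2n)$-action on $\mathcal{P}(\R^{4n})\otimes\mathbb{S}^\infty$ each generator acts as a derivation on the tensor product; by the two-vector-variable extension used in Theorem \ref{scalarsymplectic} together with Proposition \ref{N=2}, the positive root vectors annihilate $w_{\lambda_1,\lambda_2}(\ux,\uu)$, and by Theorem \ref{sinftylabel} and the realisation of Lemma \ref{sprealisationI} they also annihilate $1\oplus z_n$. Summing the Cartan weights of the two factors reproduces exactly the two weights in Lemma \ref{cartanl1l2}, so the even and odd spinor components of $w$ generate, inside $\mathcal{S}^s_{\lambda_1,\lambda_2}$, irreducible submodules isomorphic to the two summands claimed.

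Finally I would address exhaustion. By Lemma \ref{so5} the four operators defining $\mathcal{S}^s_{\lambda_1,\lambda_2}$ sit among the positive root vectors of a copy of $\mathfrak{so}(5)$ commuting with $\mathfrak{sp}(2n)$, so by Howe duality for this pair $\mathcal{P}(\R^{4n})\otimes\mathbb{S}^\infty$ decomposes multiplicity-freely under the joint action, and the simplicial conditions pick out precisely the joint $\mathfrak{so}(5)$ highest weight vectors; restricting to bi-degree $(\lambda_1,\lambda_2)$ and combining Theorem \ref{BrittenLemire} with Lemma \ref{cartanl1l2} then forces the only surviving $\mathfrak{sp}(2n)$-types to be the two in the theorem. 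The main obstacle is making this last step rigorous: one has to verify that the bi-homogeneity constraint truly isolates the Cartan component and rules out the lower-weight components appearing in Theorem \ref{BrittenLemire}. A safer fallback is to bypass the Howe argument and instead show directly, via a weight analysis on the four kernel conditions modelled on the proof of Proposition \ref{N=2}, that every $\mathfrak{sp}(2n)$ highest weight vector in $\mathcal{S}^s_{\lambda_1,\lambda_2}$ is a scalar multiple of one of the two spinor components of $w$, which by the preceding paragraph is enough.
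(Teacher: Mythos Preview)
Your approach matches the paper's: the proof there simply states that the highest weight vector is the tensor product $w_{\lambda_1,\lambda_2}(\ux,\uu)\otimes(1\oplus z_n)$ of the scalar highest weight vector from Theorem~\ref{scalarsymplectic} with the spinor one from Theorem~\ref{sinftylabel}, and declares the remainder a ``straightforward computation''. Your kernel and highest-weight verifications are exactly this computation spelled out (including the correct reading of the typo $\langle\up_x,\up_x\rangle_s$ as $\langle\up_x,\up_u\rangle_s$), and your exhaustion discussion via the $\mathfrak{so}(5)$ Howe partner goes beyond what the paper makes explicit---it leaves that point implicit in ``straightforward computation'', so your caution there is well placed rather than a divergence.
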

\begin{proof}
	The highest weight vector is obtained as the tensor product of the highest weight vector from Theorem \ref{scalarsymplectic}(iii) (scalar part) and  $w_{\lambda_1,\lambda_2}(\underline{x},\underline{u})\otimes (1\oplus z_n)$. The rest is a straightforward computation.
	%	
	%	Under the action of the Cartan algebra we indeed obtain the following eigenvalues:
	%	\[ \left(\lambda_1-\frac{1}{2},\lambda_2-\frac{1}{2},-\frac{1}{2},\dots,-\frac{1}{2}\right) \oplus \left(\lambda_1-\frac{1}{2},\lambda_2-\frac{1}{2},-\frac{1}{2},\dots,-\frac{3}{2}\right)\]
	%	%	\texttt{We have that this is $\mathcal{M}_{\lambda_1,\lambda_2}^s(\R^{2n\times 2},\mathbb{S}^{\infty})$. (can be checked by checking the Cartan eigenvalues and solution of positive root vectors). But now, we still need to find the appropriate kernel (like in the scalar case) and check again: correct eigenvalues, killed by the positive roots (dual algebra) and the latter provide the kernel correction.}
	%	Focussing on the even part HWV:
	%	\[ x_1^{\lambda_1-\lambda_2}(x_1u_2-x_2u_1)^{\lambda_2}\otimes 1 \]
	%	we can easily check that it is a solution of the four operators \[ \langle \partial_x,\partial_u\rangle, D_x,D_u,\langle u,\partial_x\rangle  \]
\end{proof}
\begin{remark}
	In order to obtain the full decomposition of $(\lambda_1,\lambda_2)_{\mathfrak{sp}(2n)}\otimes \mathbb{S}^{\infty}$ (this will lead to a new Fischer decomposition, which has to be interpreted as the $N=2$ generalisation of \ref{SymFD}) we need to take into account the other values for the parameters $d_1,d_2,d_n\in \N$ (see Lemma \ref{cartanl1l2}).
	{ One must pay some attention here: in the $N=1$ case, the space $\mathcal{P}_{k}(\R^{2n},\C)$ is $\mathfrak{sp}(2n)$-irreducible and the operator $X_s$ (and its powers) served as the symplectic embedding factors making the isomorphism into an equality. In the case of two vector variables, the space $\mathcal{P}_{\lambda_1,\lambda_2}(\R^{4n},\C)$ is \textit{not} irreducible (see Example \ref{counterexample}). By the general theory of Howe dualities, this space further decomposes into the symplectic simplicial harmonics $\mathcal{H}_{\lambda_1,\lambda_2}^s(\R^{4n},\C)$. Thus, when $N>1$ the `analogy' with the classical (orthogonal case) is restored in a sense:
		%Due to Howe duality we have \[ \mathcal{P}_{k,\ell}(\R^{4n},\C) \cong  \langle \ux,\uu\rangle_s \mathcal{H}_{k,\ell}^*(\R^{4n},\C). \]
		%	As was mentioned before, the space of Howe harmonics is not irreducible and further decomposes as \[\mathcal{H}^*(\R^{4n},\C)= \langle \uu,\up_x\rangle^j \mathcal{H} \]
		the space $\mathcal{P}_{\lambda_1,\dots,\lambda_N}(\R^{2nN},\C)$ has a 
		`symplectic harmonic' Fischer decomposition with \textit{symplectic harmonic Howe dual pair} $\mathfrak{sp}(2n)\times \mathfrak{so}(2N)$. One could then look for a `monogenic refinement' which is then governed by the Howe dual pair $\mathfrak{sp}(2n)\times \mathfrak{so}(2N+1)$. For example, in the case of $N=2$ the latter would result in computing the tensor product 
		$\mathcal{H}^s_{\lambda_1,\lambda_2}(\R^{4n},\C)\otimes \mathbb{S}^{\infty}$
		where we already know that 
		$\mathcal{H}^s_{\lambda_1,\lambda_2}(\R^{4n},\C)\boxtimes \mathbb{S}^{\infty}=\mathcal{S}^s_{\lambda_1,\lambda_2}(\R^{4n},\C)$. This would yield the `symplectic translations' of the results from \cite{van2012monogenic, lavivcka2019separation}. We plan to do this in a subsequent paper.
	}
\end{remark}
In the case of $N$ vector variables we obtain:
\begin{lemma}\label{cartanLN}
	The Cartan product of $(\lambda_1,\lambda_2,\dots,\lambda_N)_{\mathfrak{sp}(2n)}\boxtimes \mathbb{S}^{\infty}$ is given by the direct sum of:
	\begin{align*}
	\V\left(-\frac{1}{2}\omega_n+\sum_{j = 1}^N\lambda_j\epsilon_j\right)&=\left(\lambda_1-\frac{1}{2},\lambda_2-\frac{1}{2},\dots,\lambda_N-\frac{1}{2}\right)_{\mathfrak{sp}(2n)}\\
	\V\left(\omega_{n-1}-\frac{3}{2}\omega_n+\sum_{j = 1}^N\lambda_j\right)&=\left(\lambda_1-\frac{1}{2},\lambda_2-\frac{1}{2},\dots,\lambda_N-\frac{3}{2}\right)_{\mathfrak{sp}(2n)}
	\end{align*}
\end{lemma}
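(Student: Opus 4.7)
The plan is to imitate the proof of Lemma \ref{cartanl1l2}, but now carry out the computation for an arbitrary $N\leq n$, which only amounts to bookkeeping of the indices in the Britten-Lemire tensor product formula. The key input is Theorem \ref{BrittenLemire} applied to $\mathbb{V}(\lambda) = (\lambda_1,\dots,\lambda_N)_{\mathfrak{sp}(2n)}$, viewed in the $\epsilon$-basis as $\lambda = \sum_{j=1}^N \lambda_j \epsilon_j$ (with $\lambda_j=0$ for $j>N$). By that theorem, both tensor products $\mathbb{V}(-\tfrac{1}{2}\omega_n)\otimes\mathbb{V}(\lambda)$ and $\mathbb{V}(\omega_{n-1}-\tfrac{3}{2}\omega_n)\otimes\mathbb{V}(\lambda)$ decompose over the index set $T_\lambda$ of translates $\kappa = \lambda - \sum_{i=1}^n d_i\epsilon_i$ subject to the stated integrality and parity constraints.

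The second step is to isolate the Cartan component. By definition, the Cartan product $V\boxtimes W$ is the unique irreducible summand whose highest weight is the sum of the highest weights of $V$ and $W$. Concretely, this corresponds to the choice $d_1 = d_2 = \cdots = d_n = 0$ in $T_\lambda$ (which trivially satisfies all four constraints since $\sum d_i = 0 \in 2\mathbb{N}$ and the bounds $0\leq d_i\leq \nu_i$, $0\leq d_n\leq 2\nu_n+1$ hold vacuously). For this choice $\kappa=\lambda=\sum_{j=1}^N\lambda_j\epsilon_j$, so Britten-Lemire produces exactly the two summands $\mathbb{V}(-\tfrac{1}{2}\omega_n+\sum_{j}\lambda_j\epsilon_j)$ and $\mathbb{V}(\omega_{n-1}-\tfrac{3}{2}\omega_n+\sum_{j}\lambda_j\epsilon_j)$.

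The third and last step is to translate these highest weights from the $\{\omega_i,\epsilon_i\}$ notation into the standard $n$-tuple notation used on the right-hand side of the Lemma. Using $\omega_n = \epsilon_1+\cdots+\epsilon_n$ and $\omega_{n-1}=\epsilon_1+\cdots+\epsilon_{n-1}$, a direct expansion gives
\[
-\tfrac{1}{2}\omega_n+\sum_{j=1}^N\lambda_j\epsilon_j = \sum_{j=1}^N(\lambda_j-\tfrac{1}{2})\epsilon_j - \tfrac{1}{2}\sum_{j=N+1}^n\epsilon_j,
\]
which is exactly $(\lambda_1-\tfrac{1}{2},\dots,\lambda_N-\tfrac{1}{2},-\tfrac{1}{2},\dots,-\tfrac{1}{2})_{\mathfrak{sp}(2n)}$, and similarly
\[
\omega_{n-1}-\tfrac{3}{2}\omega_n+\sum_{j=1}^N\lambda_j\epsilon_j = \sum_{j=1}^N(\lambda_j-\tfrac{1}{2})\epsilon_j - \tfrac{1}{2}\sum_{j=N+1}^{n-1}\epsilon_j - \tfrac{3}{2}\epsilon_n,
\]
matching $(\lambda_1-\tfrac{1}{2},\dots,\lambda_N-\tfrac{1}{2},-\tfrac{1}{2},\dots,-\tfrac{3}{2})_{\mathfrak{sp}(2n)}$. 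No genuine obstacle arises: the only mildly non-trivial point is recognising that the Cartan component is precisely the $d_i\equiv 0$ term and that this choice is admissible in $T_\lambda$, which is immediate. The dominant-weight condition $\lambda_1\geq\cdots\geq\lambda_N\geq 0$ on the finite-dimensional factor guarantees that both resulting weights remain dominant (and hence correspond to genuine highest weights of completely pointed type) in the symplectic root order.
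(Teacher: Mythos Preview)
Your proof is correct and follows exactly the same approach as the paper: apply Theorem \ref{BrittenLemire} with $\mathbb{V}(\lambda)=(\lambda_1,\dots,\lambda_N)_{\mathfrak{sp}(2n)}$ and pick out the Cartan component $\kappa=\sum_{j=1}^N\lambda_j\epsilon_j$ (i.e.\ all $d_i=0$). Your write-up is in fact considerably more detailed than the paper's one-line proof, spelling out the translation between the $\omega$- and $\epsilon$-bases explicitly.
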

\begin{proof}
	This is again a consequence of Theorem \ref{BrittenLemire} where one now takes $\mathbb{V}(\lambda)=(\lambda_1,\lambda_2,\dots,\lambda_N)_{\mathfrak{sp}(2n)}$ and $\kappa=\sum_{j=1}^N\lambda_j\epsilon_j$.
\end{proof}
\begin{theorem}\label{so2k+1}
	Considering a system of $N$ symplectic Dirac operators and their (Fischer) adjoints. These operators give rise to a copy of the orthogonal Lie algebra $\mathfrak{so}(2N+1)$ of dimension $N (1 + 2 N)$.
\end{theorem}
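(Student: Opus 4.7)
The plan is to generalise Lemma \ref{so5} verbatim: start with the $2N$ primary generators $\{D_{s,u_i}, X_{s,u_i} : i=1,\dots,N\}$ and close the algebra under commutation. First I would compute the pairwise brackets among these $2N$ operators, which by the same Weyl-algebra manipulation used in Lemma \ref{so5} produces (i) the $N$ ``diagonal'' Euler-type elements $[X_{s,u_i}, D_{s,u_i}] = \mathbb{E}_{u_i}+\mathbb{E}_{v_i}+n$; (ii) the $N(N-1)$ off-diagonal transvectants $[X_{s,u_i}, D_{s,u_j}]=\langle \underline{u}_i,\underline{\partial}_{u_j}\rangle$ for $i\neq j$; (iii) the $\binom{N}{2}$ raising operators $[X_{s,u_i},X_{s,u_j}]= \langle \underline{u}_i,\underline{u}_j\rangle_s$ for $i<j$; and (iv) the $\binom{N}{2}$ lowering operators $[D_{s,u_i},D_{s,u_j}]=\langle \underline{\partial}_{u_i},\underline{\partial}_{u_j}\rangle_s$ for $i<j$. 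A head-count yields $2N + N^2 + 2\binom{N}{2} = N(2N+1) = \dim\mathfrak{so}(2N+1)$, matching the claimed dimension.

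Next, closure must be verified. Most of this is free of charge: the second-order operators in (i)--(iv) count precisely $N+N(N-1)+2\binom{N}{2}=N(2N-1)=\binom{2N}{2}$ operators, i.e.\ exactly the $\mathfrak{so}(2N)$ generated in the immediately preceding Lemma (the Howe dual partner of $\mathfrak{sp}(2n)$ in the scalar-valued setting), hence their mutual brackets already close inside that subalgebra. Only the mixed brackets need to be checked, e.g.\ $[\langle \underline{u}_i,\underline{u}_j\rangle_s, D_{s,u_k}]$ and $[\langle \underline{\partial}_{u_i},\underline{\partial}_{u_j}\rangle_s, X_{s,u_k}]$; each such bracket collapses to a first-order generator via the canonical Weyl commutation relations together with the skew-symmetry of $\Omega_0$.

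To identify the resulting algebra specifically as $\mathfrak{so}(2N+1)$ rather than some other Lie algebra of dimension $N(2N+1)$, I would designate the $N$ diagonal operators from (i) as a Cartan subalgebra and compute the weights of the remaining generators under its adjoint action. The resulting pattern is manifestly the $B_N$ root system $\{\pm\epsilon_i,\ \pm(\epsilon_i\pm\epsilon_j): 1\leq i<j\leq N\}$: the short roots $\pm\epsilon_i$ are carried by $X_{s,u_i}$ and $D_{s,u_i}$, the long roots $\pm(\epsilon_i-\epsilon_j)$ by the transvectants in (ii), and the long roots $\pm(\epsilon_i+\epsilon_j)$ by the operators in (iii)--(iv). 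Alternatively, the whole statement is an immediate consequence of classical Howe duality for the reductive dual pair $(\mathfrak{sp}(2n),\mathfrak{so}(2N+1))$ acting on the oscillator representation $\mathcal{P}(\mathbb{R}^{2nN},\mathbb{C})\otimes \mathbb{S}^{\infty}$: the commutant of the $\mathfrak{sp}(2n)$-action of Lemma \ref{sprealisationI} is known to be $\mathfrak{so}(2N+1)$, and the operators listed above span precisely this commutant. The principal obstacle is therefore not conceptual but purely computational, namely the closure verification; invoking Howe duality bypasses these brackets entirely and reduces the proof to identifying the generators.
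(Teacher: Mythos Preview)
Your proposal is correct, but it follows a different route from the paper. The paper does not list the pairwise commutators (i)--(iv), count dimensions, or identify the $B_N$ root system. Instead it computes the \emph{triple} commutator relations
\[
[[D_a,X_b],X_c],\quad [[D_a,X_b],D_c],\quad [[D_a,D_b],X_c],\quad [[X_a,X_b],X_c],\quad [[D_a,D_b],D_c],
\]
which are precisely the Green parafermion relations, and then (implicitly) invokes the classical result of Ryan and Kamefuchi--Takahashi that a parafermion algebra on $N$ pairs of creation/annihilation operators is isomorphic to $\mathfrak{so}(2N+1)$. This is terser but relies on that external fact; its payoff is that the parastatistics interpretation in the subsequent Remark becomes immediate.

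Your approach is more self-contained: you exhibit the generators explicitly, recognise the $\mathfrak{so}(2N)$ subalgebra from the preceding scalar-valued Lemma, and pin down the type via the root system (or, alternatively, via Howe duality). This has the advantage of not needing the parafermion literature and of making the inclusion $\mathfrak{so}(2N)\subset\mathfrak{so}(2N+1)$ visible, which mirrors the passage from symplectic simplicial harmonics to symplectic simplicial monogenics. Either argument is fine; the paper simply opts for the shortest path that also sets up the parastatistics remark.
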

\begin{proof}
	We use the short-hand notation $D_a=D_{s,u_a}$ and $X_a=X_{s,u_a}$ for $1\leq a\leq N$. Then, the following commutation relations hold:
	\begin{align*}
	[[D_a,X_b],X_{c}] &= -2\delta_{ac}X_b\\
	[[D_a,X_b],D_{c}]&=2\delta_{bc}X_a\\
	[[D_a,D_b],X_{c}] &= 2\delta_{bc}D_a - 2\delta_{ac}D_b\\
	[[X_a,X_b],X_{c}] &= 2\delta_{bc}X_a - 2\delta_{ac}X_b\\
	[[D_a,D_b],D_{c}]&=0
	\end{align*}
	for all $a,b,c\in\{1,\dots,N\}$.
\end{proof}
\begin{remark}
The theorem above allows us to link the system of $N$ symplectic Dirac operators and their adjoints to the theory of so-called parastatistic algebras from physics in the following way. The parafermion algebra was introduced by Green \cite{green1953generalized} by a system of $N$ parafermion creation and annihilation operators $f_j^{\pm}$ for $j=1,\dots,N$ satisfying \begin{align*}
     [[f_j^{\xi},f_j^{\eta}],f_{\ell}^{\epsilon}] = |\epsilon-\eta|\delta_{k\ell}f_j^{\xi}-|\epsilon-\xi|\delta_{j\ell} f_k^{\eta},
\end{align*}
where $j,k,\ell\in\{1,\dots,N\}$ and $\xi,\epsilon,\eta\in\{+,-\}$. It is well-established that this algebra is in fact isomorphic to the orthogonal Lie algebra $\mathfrak{so}(2N+1)$ (see for instance \cite{ryan1963representations}).  In other words, the operators $D_a$ and $X_a$ for $a\in\{1,\dots,N\}$ can be interpreted as parafermion creation and annihilation operators.
\end{remark}

\noindent The Cartan product from Lemma \ref{cartanLN} now gets the following interpretation: 
\begin{definition}\label{ssm}
	We define the space of \textit{symplectic simplicial monogenics} for $N$ vector variables as follows:
	\begin{align*}
	{\mathcal{S}_{\lambda_1,\dots,\lambda_N}^s(\mathbb{\mathbb{R}}^{2n\times N},\mathbb{S}^{\infty}) := \mathcal{P}_{\lambda_1,\dots,\lambda_N}(\mathbb{\mathbb{R}}^{2n\times N},\mathbb{S}^{\infty}) \cap \ker(D_{s,u_a},\langle \uu_r,\up_{u_s}\rangle, \langle \up_{u_p},\up_{u_q}\rangle_s).  }
	\end{align*}
	for $1\leq r<s\leq N$ and $1\leq p\neq q\leq N$. 
\end{definition}
%Take $P\in \mathcal{H}_{k,\ell}$, if you want to project this on the Cartan product (= simplicial solution) to become an element of  \[ \ker(D_x,D_u,\langle u,\partial_x\rangle, \langle \partial_x,\partial_u\rangle) \]
\section{The symplectic Rarita-Schwinger operator}\label{symplecticRS}
\subsection{Transvector algebras}
Let $\mathfrak{g}$ be a Lie algebra and let $\mathfrak{s}\subset\mathfrak{g}$ be a reductive subalgebra. We then have the decomposition $\mathfrak{g}=\mathfrak{s}\oplus\mathfrak{t}$, where $\mathfrak{t}$ carries an $\mathfrak{s}$-action for the commutator (i.e. $[\mathfrak{s},\mathfrak{t}]\subset\mathfrak{t}$).
 Fix a Cartan subalgebra $\mathfrak{h}$ for $\mathfrak{s}$ and a triangular decomposition $\mathfrak{s}=\mathfrak{s}^-\oplus \mathfrak{h}\oplus \mathfrak{s}^+$, where $\mathfrak{s}^{\pm}$ consists of the positive (resp.\ negative) roots $e_{\alpha}$ (resp. $e_{-\alpha}$) with $\alpha\in\Phi^+(\mathfrak{s})$.
Define a left ideal $J \subset \mathcal{U}(\mathfrak{g})$ in the universal enveloping algebra $\mathcal{U}(\mathfrak{g})$ by means of $\mathcal{U}(\mathfrak{g})\mathfrak{s}^+$ and further define the \textit{normaliser} as $$\operatorname{Norm}(J):=\{u\in \mathcal{U}(\mathfrak{g})\mid Ju\subset J\}.$$ We now have that $J$ is a two-sided ideal of $\operatorname{Norm}(J)$ so that the quotient algebra $\mathcal{S}(\mathfrak{g},\mathfrak{s})=\operatorname{Norm}(J)/J$ which is known as the \textit{Mickelsson algebra}.\par Let us now consider an extension of $\mathcal{U}(\mathfrak{g})$ to a suitable localisation $\mathcal{U}'(\mathfrak{g})$ given by \[ \mathcal{U}'(\mathfrak{g})= \mathcal{U}(\mathfrak{g})\otimes_{\mathcal{U}(\mathfrak{h})}\operatorname{Frac}(\mathcal{U}(\mathfrak{h}))\ ,  \]
where $\operatorname{Frac}(\mathcal{U}(\mathfrak{h}))$ is the field of fractions in the (universal enveloping algebra of the) Cartan algebra.
 The ideal $J'$ can be introduced for this extension too (in a completely similar way) and the quotient algebra $\mathcal{Z}(\mathfrak{g},\mathfrak{s}):=\operatorname{Norm}(J')/J'$ is the \textit{Mickelsson-Zhelobenko algebra}. These two algebras are naturally identified, since one has \[ \mathcal{Z}(\mathfrak{g},\mathfrak{s})= \mathcal{S}(\mathfrak{g},\mathfrak{s})\otimes_{\mathcal{U}(\mathfrak{h})}\operatorname{Frac}(\mathcal{U}(\mathfrak{h})). \]
We refer to this algebra as a `transvector algebra'.
\begin{definition}
	The \textit{extremal projector} for the Lie algebra $\mathfrak{sl}(2)=\operatorname{Alg}\{X,Y,H\}$ is the idempotent operator $\pi$ (meaning that $\pi^2=\pi$) given by the expression 
	\begin{align}
	\pi : = 1 + \sum_{j=1}^{\infty}\frac{(-1)^j}{j!} \frac{\Gamma(H+2)}{\Gamma(H+2+j)}Y^jX^j,
	\end{align}
	where $$\frac{\Gamma(H+2)}{\Gamma(H+2+j)}=\frac{(H+1)!}{(H+j+1)!(H+j)!}$$ is defined using the gamma function $\Gamma$. Note that we will sometimes write $\pi_{\mathfrak{sl}(2)}$ in order to refer to the underlying Lie algebra $\mathfrak{sl}(2)$.
\end{definition}
\begin{remark}
		This operator satisfies $X\pi = \pi Y = 0$ and can therefore be used a natural object to project on the kernel $\ker(X)$. The operator is defined on an extension $\mathcal{U}'(\mathfrak{sl}(2))$ of the universal enveloping algebra so that formal series containing the operator $H$ in the denominator are well-defined.
\end{remark}
\subsection{Higher spin Dirac operators}
The Rarita-Schwinger operator was first introduced in theoretical physics by W. Rarita and J. Schwinger \cite{rarita1941theory} in 1941 to describe fermionic particles of spin-$\frac{3}{2}$, as a generalisation of the Dirac operator which describes particles of spin-$\frac{1}{2}$. As a matter of fact, this operator is an example of a so-called \textit{higher spin Dirac operator} (HSD, \cite{MR1708920}) given by \[ \mathcal{Q}_{\lambda} :\mathcal{C}^{\infty}(\R^n,\vl)\to \mathcal{C}^{\infty}(\R^n,\vl),  \]
where $\vl$ is an irreducible $\Spin(n)$-representation (we assume that $n$ is odd here to avoid the parity of the spinors $\mS$ and write $m:=\lfloor \frac{n}{2}\rfloor$ for notational convenience) with highest weight $$\lambda=\left(\lambda_1+\frac{1}{2},\dots,\lambda_{m-1}+\frac{1}{2},\frac{1}{2}\right)_{\mathfrak{so}(n)}.$$
These are often called \textit{higher spin representations}. Recall from Theorem \ref{vlsctheorem} that these higher spin representations can be explicitly realised as the vector space of $\mathbb{S}$-valued polynomials in several vector variables intersected with the kernel of some differential operators (the simplicial monogenics).
\begin{itemize}
    \item We note that the Dirac operator $$\up_x:\mathcal{C}^{\infty}(\R^n,\mathbb{S})\to \mathcal{C}^{\infty}(\R^n,\mathbb{S})$$ is the easiest example of the operator $\mathcal{Q}_\lambda$ where we take $\lambda_1=\cdots=\lambda_{m-1}=0$, so that we have $\vl=\mathbb{S}$.
    \item  The next type of operator is obtained when taking the highest weight $\lambda=\left(k+\frac{1}{2},\dots,\frac{1}{2}\right)$, such that $\vl=\mathcal{M}_k(\R^n,\mathbb{S})$ (see Section \ref{refinedmonogenicFD}). The corresponding higher spin Dirac operator is called the \textit{Rarita-Schwinger operator}  
\begin{align*}
\mathcal{R}_k : \mathcal{C}^{\infty}(\R^n,\mathcal{M}_k(\R^{2n},\mathbb{S}))\to \mathcal{C}^{\infty}(\R^n,\mathcal{M}_k(\R^{2n},\mathbb{S})).
\end{align*}
	Note that the classical Rarita-Schwinger operator from theoretical physics corresponds to the operator $\mathcal{R}_1$.

\end{itemize}

\noindent
The monogenic refinement of the FD (see Theorem \ref{monogenicref}) in the vector variable $\uu\in\R^n$ can be used to obtain an explicit expression for the first-order differential operator $\mathcal{R}_k$. The decomposition
\[ H_k(\uu) = {M}_{k}(\uu)  + \uu {M}_{k-1}(\uu). \]
gives rise to the following diagram:
\begin{figure}[h!]
	\centering
	\begin{tikzpicture}
	\matrix (m)
	[
	matrix of math nodes,
	row sep    = 0.2em,
	column sep = 4em
	]
	{
		\mathcal{C}^{\infty}(\R^n,\mathcal{M}_k(\R^{2n},\mathbb{S}))              & \mathcal{C}^{\infty}(\R^n,\mathcal{M}_k(\R^{2n},\mathbb{S})) \\
		\rotatebox[origin=c]{90}{$\cong$} &\rotatebox[origin=c]{90}{$\cong$} &\\
		\mathcal{C}^{\infty}(\R^n,\mathcal{M}_k) &    \mathcal{C}^{\infty}(\R^n,\mathcal{M}_k)         \\
		\oplus &	\oplus \\
		\mathcal{C}^{\infty}(\R^n,\underline{u}\mathcal{M}_{k-1}) &    \mathcal{C}^{\infty}(\R^n,\underline{u}\mathcal{M}_{k-1})    \\
	};
	\path
	%(m-1-1) edge [->>] node [left] {} (m-2-1)
	(m-3-1.east |- m-3-2)
	edge [->] node [above] {$\mathcal{R}_k$} (m-3-2)
	(m-1-1.east |- m-1-2)
	edge [->] node [above] {$\up_x$} (m-1-2);
	\end{tikzpicture}
	\caption{The Rarita-Schwinger operator arising from the Fischer decomposition.}
	\label{RS}
\end{figure}
	Note that there are other arrows that one could study in this scheme (giving rise to e.g.\ the twistor operator) \cite{Bures}. However, we will not consider these here.
\begin{remark}
	Every function $f\in \mathcal{C}^{\infty}(\R^n,\mathcal{M}_k)$ with values in $\mathcal{M}_k$ can be thought of as a function depending on two vector variables $f(\ux;\uu)$ such that fixed $\ux\in\R^n$ the function $f(\ux;\uu)=f_{\ux}(\uu)\in\mathcal{M}_k(\R^n,\mathbb{S})$, i.e.\ $f(\ux;\uu)\in\ker\up_u$. One often calls the variable $\uu$ a \textit{dummy variable}. This illustrates the importance of the polynomial models for $\Spin(n)$ (see Theorem \ref{vlsctheorem}).
\end{remark}
\noindent We summarise these results as follows:
\begin{theorem}[Bure\v{s} et.\ al., \cite{Bures}]
	Let $f(\ux;\uu)\in \mathcal{C}^{\infty}(\R^n,\mathcal{M}_k)$. The Rarita-Schwinger operator is the unique $\Spin(n)$-invariant first-order differential operator \begin{align*}
	\mathcal{R}_k : \mathcal{C}^{\infty}(\R^n,\mathcal{M}_k(\R^{2n},\mathbb{S}))&\to \mathcal{C}^{\infty}(\R^n,\mathcal{M}_k(\R^{2n},\mathbb{S})) \\	f(\ux;\underline{u})&\mapsto \left(1+\frac{\uu \ \up_u}{n+2k-2}\right)\up_xf.
	\end{align*}
	
\end{theorem}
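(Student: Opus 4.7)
The plan is to construct $\mathcal{R}_k$ as the composition of $\up_x$ with the $\Spin(n)$-equivariant projection onto the monogenic component in the dummy variable $\uu$, in the spirit of the diagram in Figure \ref{RS}. My first observation is that for $f(\ux;\uu)\in\mathcal{C}^{\infty}(\R^n,\mathcal{M}_k)$, the image $\up_x f$ is still $k$-homogeneous and \emph{harmonic} in $\uu$ (because $\Delta_u$ commutes with $\up_x$), but it is \emph{no longer} monogenic in $\uu$: the Clifford letters $e_j$ in $\up_x$ anti-commute with those in $\up_u$. Hence $\up_x f$ takes values in $\mathcal{H}_k(\R^n,\mathbb{S})$ and, by the monogenic refinement (Theorem \ref{monogenicref}) applied in $\uu$, admits a unique decomposition
\[ \up_x f = A(\ux;\uu) + \uu\, B(\ux;\uu), \]
with $A(\ux;\cdot)\in\mathcal{M}_k$ and $B(\ux;\cdot)\in\mathcal{M}_{k-1}$.

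The main computational step is to extract $A$ explicitly. Applying $\up_u$ to both sides and using the standard Clifford-analysis identity $\{\up_u,\uu\} = -n - 2\mE_u$, together with $\up_u A=0$ and the fact that $B$ is $(k-1)$-homogeneous and monogenic, I obtain
\[ \up_u(\up_x f) = -(n+2k-2)\, B, \]
so $B = -\tfrac{1}{n+2k-2}\,\up_u\up_x f$ and consequently
\[ \mathcal{R}_k f \;=\; A \;=\; \left(1 + \frac{\uu\,\up_u}{n+2k-2}\right)\up_x f. \]
A short direct check (using the same anti-commutator once more) then confirms that this $A$ is indeed killed by $\up_u$, so $\mathcal{R}_k$ really does land back in $\mathcal{C}^{\infty}(\R^n,\mathcal{M}_k)$.

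Spin-invariance follows at once: each of $\up_x$, $\up_u$, and left multiplication by $\uu$ is a Clifford contraction of a vector variable with a derivative and therefore intertwines the natural $L$-action of $\Spin(n)$, while the scalar $n+2k-2$ depends only on the degree of homogeneity in $\uu$, which is $\Spin(n)$-stable. For uniqueness, I would pass to principal symbols: $\Spin(n)$-invariant first-order differential operators $\mathcal{C}^{\infty}(\R^n,\mathcal{M}_k)\to\mathcal{C}^{\infty}(\R^n,\mathcal{M}_k)$ are in bijection with elements of $\Hom_{\Spin(n)}(\mathcal{M}_k\otimes \R^n,\mathcal{M}_k)$. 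A Klimyk tensor-product computation with highest weights $(k+\tfrac{1}{2},\tfrac{1}{2},\dots,\tfrac{1}{2})\otimes(1,0,\dots,0)$ shows that $\mathcal{M}_k$ occurs in the decomposition with multiplicity exactly one, pinning down $\mathcal{R}_k$ up to a scalar; the scalar is then fixed by the requirement that the image be monogenic in $\uu$.

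The derivation of the explicit formula itself is essentially forced by the anti-commutator $\{\up_u,\uu\}$ and is the short part of the argument. The main obstacle I expect is making the uniqueness claim fully airtight: one must separate genuine first-order operators from zero-order ones and invoke (or re-prove in this simple case) a symbol-theoretic classification in the spirit of Fegan/Slov\'ak so that the symbol map really classifies invariant operators on the nose.
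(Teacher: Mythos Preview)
Your argument is correct and matches precisely the approach the paper outlines: the paper does not actually prove this cited theorem but sketches, via Figure~\ref{RS} and the surrounding text, exactly the mechanism you implement---apply $\up_x$, land in $\mathcal{H}_k(\R^n,\mathbb{S})$ in the dummy variable, invoke the monogenic refinement of Theorem~\ref{monogenicref}, and project onto the $\mathcal{M}_k$ summand. Your extraction of $A$ via the anti-commutator $\{\up_u,\uu\}=-(n+2\mE_u)$ is the standard computation, and your uniqueness sketch (multiplicity one of $\mathcal{M}_k$ in $\mathcal{M}_k\otimes\R^n$) is the right ingredient, though the paper simply defers that to \cite{Bures}.
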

\begin{remark}
	Note that the more general higher spin Dirac operators can be defined using a transvector algebra (see the work of Eelbode \& Raeymaekers \cite{Eelbode2015}). We will follow a similar approach in the symplectic setting instead of the more ad hoc approach in Figure \ref{RS}.
\end{remark}
\subsection{Higher metaplectic Dirac operators}
\noindent Inspired by the definition of higher spin Dirac operators $\mathcal{Q}_{\lambda}$, we now investigate operators $$\mathcal{Q}_k^s :\mathcal{C}^{\infty}(\R^{2n},\vl)\to \mathcal{C}^{\infty}(\R^{2n},\vl),$$
 with values in the (infinite-dimensional) $\mathsf{Mp}(2n)$-representations $\vl$
where $$\lambda=\left(\lambda_1-\frac{1}{2},\dots,\lambda_n-\frac{1}{2}\right)_{\mathfrak{sp}(2n)}\oplus \left(\lambda_1-\frac{1}{2},\dots,\lambda_n-\frac{3}{2}\right)_{\mathfrak{sp}(2n)}. $$
We refer to the operator $\mathcal{Q}_k^s$ as a \textit{higher metaplectic Dirac operator} (or HMD in short). To our knowledge, these operators are not yet studied in the literature.
\begin{example}
	By taking $\lambda_1=0=\cdots=\lambda_n=0$ we have $\vl=\mathbb{S}^{\infty}$ and we recover the symplectic Dirac operator $D_s:\mathcal{C}^{\infty}(\R^{2n},\mS^{\infty})\to \mathcal{C}^{\infty}(\R^{2n},\mS^{\infty}).$
\end{example}
We now take $\lambda_1=k$ and $\lambda_2=\cdots=\lambda_n=0$. Then the corresponding highest weight is given by
$$\lambda=\left(k-\frac{1}{2},\dots,-\frac{1}{2}\right)_{\mathfrak{sp}(2n)}\oplus \left(k-\frac{1}{2},\dots,-\frac{3}{2}\right)_{\mathfrak{sp}(2n)}. $$
In this case, we know that the space of $k$-homogeneous symplectic monogenics $\mathcal{M}_k^s(\R^{2n},\mS)$ is a model for  $\vl$. We consider the elements of $\mathcal{C}^{\infty}(\R^{2n},\mathcal{M}_k^s)$ as functions in two vector variables $f(\ux;\uu)$ which will allow us to use the results obtained in Section \ref{DsxDsu}, in which we defined the symplectic Dirac operators $D_{s,x}$ and $D_{s,u}$. Recall from Lemma \ref{so5} that the Lie algebra generated by the symplectic Dirac operators $D_{s,x}$ and $D_{s,u}$ and their adjoints $X_{s,x}$ and $X_{s,u}$ is isomorphic to $\mathfrak{so}(5)$.  Straightforward computation of commutators leads to:
\begin{lemma}
	The Lie algebra $\mathfrak{sl}(2)=\Alg(D_{s,u},X_{s,u})$ is reductive in $$\mathfrak{so}(5)=\Alg(D_{s,x},D_{s,u},X_{s,x},X_{s,u}).$$ This means that we can write $\mathfrak{so}(5) = \mathfrak{sl}(2)\oplus \mathfrak{t}$ with $\mathfrak{t}$ a subspace satisfying $[\mathfrak{sl}(2),\mathfrak{t}]\subset\mathfrak{t}$.
\end{lemma}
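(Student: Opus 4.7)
The statement is that $\mathfrak{sl}(2)=\Alg(D_{s,u},X_{s,u})$ is reductive inside $\mathfrak{so}(5)$; concretely, that $\mathfrak{so}(5)$ splits as a direct sum $\mathfrak{sl}(2)\oplus \mathfrak{t}$ with $[\mathfrak{sl}(2),\mathfrak{t}]\subset\mathfrak{t}$. The cleanest plan is to invoke Weyl's complete reducibility theorem: since $\mathfrak{sl}(2)$ is semisimple, the $10$-dimensional $\mathfrak{sl}(2)$-module obtained by restricting $\mathrm{ad}_{\mathfrak{so}(5)}$ to $\mathfrak{sl}(2)$ decomposes as a direct sum of $\mathfrak{sl}(2)$-irreducibles; the $3$-dimensional $\mathfrak{sl}(2)$-submodule $\mathfrak{sl}(2)\subset \mathfrak{so}(5)$ therefore admits an invariant complement $\mathfrak{t}$, and this is precisely the content of reductivity.

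For the subsequent transvector-algebra arguments, however, it is more useful to display $\mathfrak{t}$ explicitly. I would first collect the ten generators of $\mathfrak{so}(5)$: the four original operators $D_{s,x},X_{s,x},D_{s,u},X_{s,u}$, the two Cartan elements $H_{1}:=[X_{s,u},D_{s,u}]=\mathbb{E}_{u}+\mathbb{E}_{v}+n$ and $H_{2}:=[X_{s,x},D_{s,x}]=\mathbb{E}_{x}+\mathbb{E}_{y}+n$, and the four ``long-root'' vectors $E_{1}=\langle\uu,\up_x\rangle$, $E_{2}=\langle\ux,\up_u\rangle$, $E_{3}=\langle\up_x,\up_u\rangle_{s}$, $E_{4}=\langle\ux,\uu\rangle_{s}$ from the proof of Lemma~\ref{so5}. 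I would then take $\mathfrak{t}:=\mathrm{span}\{D_{s,x},X_{s,x},H_{2},E_{1},E_{2},E_{3},E_{4}\}$ as candidate complement to $\mathfrak{sl}(2)=\mathrm{span}\{D_{s,u},X_{s,u},H_{1}\}$, diagonalise $\mathrm{ad}_{H_{1}}$ on it to read off the weights $\{0,0,0,+1,+1,-1,-1\}$, and then verify $[\mathfrak{sl}(2),\mathfrak{t}]\subset\mathfrak{t}$ directly.

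The main obstacle is this last verification, though it is only a matter of careful bookkeeping. Roughly half of the brackets are recorded verbatim in Lemma~\ref{so5} (for instance $[X_{s,u},D_{s,x}]=E_{1}$ and $[X_{s,u},X_{s,x}]=E_{4}$). Several more vanish for weight reasons: $[X_{s,u},E_{1}]=0=[X_{s,u},E_{4}]$, since no weight-$(\pm 2)$ vector exists in $\mathfrak{so}(5)$ with respect to $\mathrm{ad}_{H_{1}}$. The slightly subtle brackets, such as $[X_{s,u},E_{2}]$, reduce via Jacobi and the basic commutators of Lemma~\ref{so5}: indeed
\[
[X_{s,u},[X_{s,x},D_{s,u}]]=[[X_{s,u},X_{s,x}],D_{s,u}]+[X_{s,x},[X_{s,u},D_{s,u}]]=[E_{4},D_{s,u}]+0,
\]
and a short computation in the Weyl algebra gives $[E_{4},D_{s,u}]=-X_{s,x}\in\mathfrak{t}$. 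Once every such bracket is pinned down, $\mathfrak{t}$ turns out to be isomorphic as an $\mathfrak{sl}(2)$-module to $\mathbb{V}(1)\oplus\mathbb{V}(1)\oplus\mathbb{V}(0)$ (in the normalization $[H_{1},X_{s,u}]=X_{s,u}$), which both confirms reductivity and identifies the complement $\mathfrak{t}$ that feeds into the subsequent construction of the symplectic Rarita--Schwinger operator.
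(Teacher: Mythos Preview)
Your proposal is correct and in fact follows the same route as the paper, which merely records ``Straightforward computation of commutators leads to:'' before stating the lemma and offers no further proof. Your explicit identification of $\mathfrak{t}=\mathrm{span}\{D_{s,x},X_{s,x},H_2,E_1,E_2,E_3,E_4\}$, the weight bookkeeping under $\mathrm{ad}_{H_1}$, and the Jacobi reductions are exactly the kind of computation the paper leaves to the reader; the additional observation that Weyl's complete reducibility already guarantees the existence of such a $\mathfrak{t}$ is a welcome shortcut the paper does not mention.
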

This means that we can construct the transvector algebra $\mathcal{Z}(\mathfrak{so}(5),\mathfrak{sl}(2))$. Taking into account the suitable rescaling the Lie algebra $\mathfrak{sl}(2)\cong \Alg(X,Y,H)$ can be realised as follows (note that this has no effect on the lemma above or the related commutators):
\[ X = \sqrt{2} D_{s,u},\quad Y = \sqrt{2} X_{s,u},\quad H=-(\mathbb{E}_u+\mathbb{E}_v+n) \]
\noindent Recall that the extremal projector $\pi_{\mathfrak{sl}(2)}$ of the Lie algebra $\mathfrak{sl}(2)$ is formally given by the expression \[ \pi_{\mathfrak{sl}(2)}= 1+\sum_{j=1}^{+\infty} \frac{(-1)^j}{j!}\frac{\Gamma(H+2)}{\Gamma(H+2+j)}Y^j X^j.\]

\begin{lemma}
	The transvector projection of the operator $D_{s,x}$ is given by \[ \pi_{\mathfrak{sl}(2)}D_{s,x}=\left(1-\frac{YX}{(H+2)}\right)D_{s,x}. \]
\end{lemma}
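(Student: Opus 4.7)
My plan is to carry out the computation inside the transvector algebra $\mathcal{Z}(\mathfrak{so}(5),\mathfrak{sl}(2))$, i.e.\ modulo the left ideal $J'=\mathcal{U}'(\mathfrak{so}(5))X$, and to show that the formal series defining $\pi_{\mathfrak{sl}(2)} D_{s,x}$ truncates after the $j=1$ term. Substituting the expansion of the extremal projector yields
\[\pi_{\mathfrak{sl}(2)}D_{s,x}=D_{s,x}+\sum_{j=1}^{\infty}\frac{(-1)^j}{j!}\frac{\Gamma(H+2)}{\Gamma(H+2+j)}Y^jX^jD_{s,x},\]
so the task reduces to showing that every term with $j\geq 2$ vanishes modulo $J'$.

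First I would record two housekeeping observations that render the rational coefficients in $H$ harmless. Since $D_{s,x}$ involves no $u$- or $v$-variables, both $\mathbb{E}_u$ and $\mathbb{E}_v$ commute with it, hence $[H,D_{s,x}]=0$; moreover $YX$ carries weight zero under $\operatorname{ad}H$, so $[H,YX]=0$ as well. Next I would reduce each $X^jD_{s,x}$ to an iterated commutator. Pushing each $X$ to the right via $XA=[X,A]+AX$ and discarding any summand that ends in $X$ (which lies in $J'$), a short induction gives
\[X^jD_{s,x}\;\equiv\;(\operatorname{ad}X)^jD_{s,x}\pmod{J'}\qquad(j\geq 1).\]
The entire lemma therefore reduces to the single identity $[X,[X,D_{s,x}]]=0$, since all higher iterated commutators then vanish automatically.

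The main (and essentially only) obstacle is this double-commutator identity, which I would dispatch by a direct calculation. Lemma \ref{so5} already gives $[D_{s,u},D_{s,x}]=\langle\underline{\partial}_{\ux},\underline{\partial}_{\uu}\rangle_s$, so
\[[X,[X,D_{s,x}]]=2\bigl[D_{s,u},\,\langle\underline{\partial}_{\ux},\underline{\partial}_{\uu}\rangle_s\bigr].\]
Here $D_{s,u}=\langle\uz,\up_v\rangle-\langle\up_z,\up_u\rangle$ is built solely from $\uz,\up_z,\up_u,\up_v$, whereas the symplectic contraction $\langle\underline{\partial}_{\ux},\underline{\partial}_{\uu}\rangle_s=\langle\up_x,\up_v\rangle-\langle\up_y,\up_u\rangle$ is a pure second-order partial differential operator in the $(\ux,\uu)$-variables alone. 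All of these elementary blocks commute pairwise, so the bracket vanishes. Once this is established, only the $j=0$ and $j=1$ terms of the series survive modulo $J'$; the $j=1$ coefficient collapses to $-\Gamma(H+2)/\Gamma(H+3)=-1/(H+2)$, and the housekeeping observations allow me to keep this term in the compact form $YX/(H+2)\cdot D_{s,x}$, giving exactly $\bigl(1-\tfrac{YX}{H+2}\bigr)D_{s,x}$ as claimed.
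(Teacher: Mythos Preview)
Your proof is correct and follows the same approach as the paper: both arguments compute the first commutator $[D_{s,u},D_{s,x}]=\langle\up_x,\up_u\rangle_s$ and then observe that the next bracket vanishes (an instance of $[[D_a,D_b],D_c]=0$ from Theorem~\ref{so2k+1}), so the extremal projector truncates at $j=1$. You have simply made the transvector-algebra reduction $X^jD_{s,x}\equiv(\operatorname{ad}X)^jD_{s,x}\pmod{J'}$ and the commutation of $H$ with $YX$ and $D_{s,x}$ explicit, which the paper leaves to the reader.
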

\begin{proof}
	We easily find that $[D_{s,x},D_{s,u}]=-\langle \up_x,\up_u\rangle_s$ and $[\langle \up_x,\up_u\rangle_s,D_{s,x}]=0$ so that infinite sum reduces to the finite sum from the lemma.
\end{proof}
\noindent We now come to the following definition of the symplectic Rarita-Schwinger operator:
\begin{definition}
	We define the \textit{symplectic Rarita-Schwinger} operator $\mathcal{R}_k^s$ as 
	\begin{align*}
	\mathcal{R}_k^s:\mathcal{C}^{\infty}(\R^{2n},\mathcal{M}_k^s)&\to \mathcal{C}^{\infty}(\R^{2n},\mathcal{M}_k^s)\\ 
	f(\ux,\uu)&\mapsto \left( 1+\frac{2X_{s,u}D_{s,u}}{k+n+2}\right)D_{s,x} f(\ux,\uu).
	\end{align*}
\end{definition}
\begin{remark}
	Note that the name `symplectic Rarita-Schwinger operator' already appears in the literature; see for example the work of Kr\'ysl  \cite{krysl2007relation} and the references therein. However, their symplectic Rarita-Schwinger operator is defined on symplectic spinor-valued \textit{forms} whereas our operator acts on symplectic spinor-valued \textit{polynomials}. 
\end{remark}
\section{Conclusion and further research}
\noindent In this paper we introduced the notion of symplectic simplicial harmonics and monogenics and related them to the irreducible representations of the symplectic Lie algebra. These results could also be useful for constructing solutions for the symplectic Dirac operator. There are two useful mechanisms in classical Clifford analysis to construct monogenic functions: first of all, there is the Fueter theorem, and secondly the Cauchy-Kowalewskaya extension (CK-extension in short). In the symplectic framework, the Fueter theorem was proven in \cite{fueter}. 
These new polynomial models might be useful to get a grasp on the \textit{symplectic CK-extension} (which has not been proven yet).  Let us focus on the orthogonal case to illustrate what we mean. The association between the irreducible representation $\V_{\lambda}$ and the polynomial model as a kernel space of certain differential operators corresponds with the first dotted horizontal line in Figure \ref{ckbr}.
\begin{figure}[h!]
	\centering
	\begin{tikzpicture}[x=0.75pt,y=0.75pt,yscale=-1,xscale=1]
	%uncomment if require: \path (0,455); %set diagram left start at 0, and has height of 455
	
	%Straight Lines [id:da35927419762984014] 
	\draw  [dash pattern={on 0.84pt off 2.51pt}]  (245,110.5) -- (379.5,110.5) ;
	\draw [shift={(381.5,110.5)}, rotate = 180] [color={rgb, 255:red, 0; green, 0; blue, 0 }  ][line width=0.75]    (10.93,-3.29) .. controls (6.95,-1.4) and (3.31,-0.3) .. (0,0) .. controls (3.31,0.3) and (6.95,1.4) .. (10.93,3.29)   ;
	
	%Straight Lines [id:da44134831208189895] 
	\draw    (241,122.5) -- (240.51,210) ;
	\draw [shift={(240.5,212)}, rotate = 270.32] [color={rgb, 255:red, 0; green, 0; blue, 0 }  ][line width=0.75]    (10.93,-3.29) .. controls (6.95,-1.4) and (3.31,-0.3) .. (0,0) .. controls (3.31,0.3) and (6.95,1.4) .. (10.93,3.29)   ;
	
	%Straight Lines [id:da6874362616008588] 
	\draw  [dash pattern={on 0.84pt off 2.51pt}]  (267,229) -- (379,229.49) ;
	\draw [shift={(381,229.5)}, rotate = 180.25] [color={rgb, 255:red, 0; green, 0; blue, 0 }  ][line width=0.75]    (10.93,-3.29) .. controls (6.95,-1.4) and (3.31,-0.3) .. (0,0) .. controls (3.31,0.3) and (6.95,1.4) .. (10.93,3.29)   ;
	
	%Straight Lines [id:da2026609705398421] 
	\draw [color={rgb, 255:red, 0; green, 0; blue, 0 }  ,draw opacity=1 ]   (429,213) -- (429.49,125.5) ;
	\draw [shift={(429.5,123.5)}, rotate = 450.32] [color={rgb, 255:red, 0; green, 0; blue, 0 }  ,draw opacity=1 ][line width=0.75]    (10.93,-3.29) .. controls (6.95,-1.4) and (3.31,-0.3) .. (0,0) .. controls (3.31,0.3) and (6.95,1.4) .. (10.93,3.29)   ;

	% Text Node
	\draw (448.5,111) node  [align=left] {polynomial model};
	% Text Node
	\draw (231,105) node   {$\mathbb{V}_{\lambda }$};
	% Text Node
	\draw (238,226) node   {$\bigoplus _{\mu }\overline{\mathbb{V}}_{\mu }$};
	% Text Node
	\draw (238,250.5) node  [align=left] {(lower dimension)};
	% Text Node
	\draw (449.5,228) node  [align=left] {polynomial model};
	% Text Node
	\draw (254.67,125.67) node   {$n$};
	% Text Node
	\draw (266,193.67) node   {$n-1$};
	% Text Node
	\draw (197,165) node [color=black  ,opacity=1 ] [align=left] {branching};
	% Text Node
	\draw (503,177.5) node [color=black  ,opacity=1 ] [align=left] {CK-extension\\(inverse branching)};

	\end{tikzpicture}
	\caption{Branching, polynomials models and the CK-theorem.}
	\label{ckbr}
\end{figure}
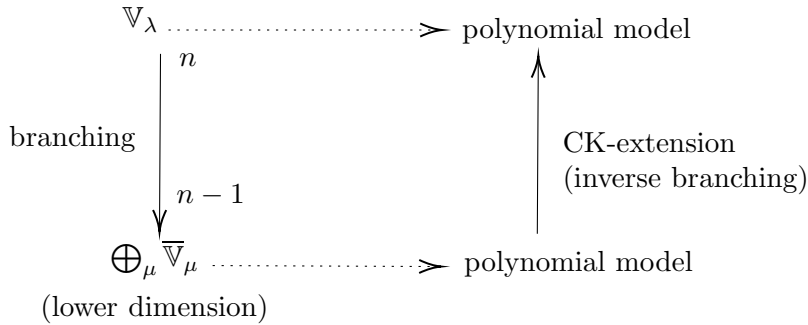

\noindent
The second step, visualised as the left downwards arrow, is to study the behaviour of the $\mathfrak{so}(n)$-irreducible representation $\V_{\lambda}$ as  a representation for the subalgebra $\mathfrak{so}(n-1)$. The representation $\V_{\lambda}$ will ofcourse no longer be irreducible, but will decompose in lower dimensional irreducible representations (which may occur multiple times) for the Lie algebra $\mathfrak{so}(n-1)$. These lower dimensional summands of irreducible representations will also admit polynomial models (this gives rise to the horizontal dotted arrow at the bottom). This means that we are now left with two polynomial models. The problem how these two models interact is then solved by the {CK-extension}. In other words: the CK-extension can be viewed as an \textit{inverse branching} procedure. \par Moreover, one could develop the function theory for the symplectic Rarita-Schwinger operator (and for the more general HMD) like it was done in the orthogonal case (see e.g.\ \cite{DeSchepper2010}).
%Recall that the function $f(\ux,\cdot)\in \ker(D_{s,u})$.
Last of all, we wish to mention that the connection between Dirac-type operators and parastatistic algebras has recently emerged in \cite{bisbo2022lie}. In this paper, the authors consider a system of $N$ Grassmann Dirac operators and their duals. Recall that the classical Dirac operator $\up_x=\sum_{j=1}^ne_j\partial_{x_j}$ is a contraction between Clifford algebra generators and partial derivatives. When one replaces the derivatives $\partial_{x_j}$ by  Grassmann derivatives (partial derivatives with respect to an anti-commuting or Grassmann variable) $\partial_{\theta_j}$ the resulting Dirac-type operator $\up_\theta=\sum_{j=1}^ne_j\partial_{\theta_j}$ is called the Grassmann Dirac operator and acts on spinor-valued \textit{forms}. The striking difference with the usual Dirac operator $\up_x$ is that $\up_\theta$ and its dual give rise to a copy of the Lie algebra $\mathfrak{sl}(2)$ instead of the Lie super algebra $\mathfrak{osp}(1|2)$, we refer to \cite{slupinski1996hodge} for the details. When considering $N$ Grassmann Dirac operators and their duals, the authors of \cite{bisbo2022lie} show that the algebra closes as $\mathfrak{so}(2N+1)$. Note that we found the same algebra in Theorem \ref{so2k+1} but in terms of symplectic Dirac operators acting on symplectic spinor-valued \textit{polynomials}. This yields a nice application of supersymmetry, in which both the orthogonal and symplectic framework can be combined. This is planned in future work.

\bibliographystyle{abbrv}
\bibliography{biblio}
\end{document}